\documentclass[11pt,letter]{article}
\usepackage[utf8]{inputenc}
\usepackage{todonotes}
\usepackage{amsmath}
\usepackage{amsthm}
\usepackage{amssymb}
\usepackage{hyperref}
\usepackage{comment}
\usepackage[capitalise]{cleveref}
\usepackage[nocompress]{cite}
\usepackage{xspace}
\usepackage[shortlabels]{enumitem}
\usepackage{thmtools}
\usepackage{thm-restate}
\usepackage{algorithmic}
\usepackage[margin=1in]{geometry}

\theoremstyle{plain}

\newtheorem{lemma}{Lemma}  
\newtheorem{corollary}{Corollary} 
  
\newtheorem{fact}{Fact}

\newtheorem{observation}{Observation}

\theoremstyle{definition}
\newtheorem{definition}{Definition}

\newtheorem{example}{Example}

\usepackage[linesnumbered,lined,boxed,ruled,vlined]{algorithm2e}

\usepackage{thm-restate}

\newcommand{\Lap}{\mathrm{Lap}}
\SetKwInput{KwInput}{Input}
\SetKwInput{KwOutput}{Output}
\newcommand{\thresh}{\tau}

\newcommand{\alg}{\mathrm{Alg}}

\newcommand{\range}{\mathrm{range}}

\newcommand{\str}[1]{\mathrm{str}(#1)}

\newcommand{\setstr}{\mathcal{D}}

\newcommand{\counts}{\mathrm{count}}
\newcommand{\approxcount}{\counts^*}
\newcommand{\candidates}{\mathcal{C}}
\newcommand{\candidatespruned}{\mathcal{P}}
\newcommand{\diff}{\mathrm{diff}}
\newcommand{\candidatetrie}{T_{\candidates}}
\newcommand{\substring}[3]{#1\left[#2,~#3\right]}
\newcommand{\ST}{\mathrm{ST}}

\newcommand{\sd}{\mathrm{sd}}
\newcommand{\leaf}{\mathrm{leaf}}
\newcommand{\twokMin}{$2^k$-minimal\xspace}
\newcommand{\nil}{\textsf{NIL}}

\newcommand{\documentcount}{\textsc{Document Count}}
\newcommand{\substringcount}{\textsc{Substring Count}}
\newcommand{\substringmining}{\textsc{Substring Mining}}
\newcommand{\approxmining}{\textsc{$\alpha$-Approximate Substring Mining}}
\newcommand{\qgrammining}{\textsc{$q$-Gram Mining}}
\newcommand{\approxqgram}{\textsc{$\alpha$-Approximate $q$-Gram Mining}}

\title{Differentially Private Substring and Document Counting with Near-Optimal Error}

\date{}
\author{Giulia Bernardini \\ \texttt{giulia.bernardini@units.it} \and Philip Bille \\\texttt{phbi@dtu.dk} \and Inge Li G{\o}rtz \\\texttt{inge@dtu.dk} \and Teresa Anna Steiner \\\texttt{steiner@imada.sdu.dk}}

\begin{document}

\maketitle
\pagenumbering{gobble} 
\begin{abstract}
For databases consisting of many text documents, one of the most fundamental data analysis tasks is counting (i) how often a pattern appears as a substring in the database (\emph{substring counting}) and (ii) how many documents in the collection contain the pattern as a substring (\emph{document counting}). If such a database contains sensitive data, it is crucial to protect the privacy of individuals in the database. Differential privacy is the gold standard for privacy in data analysis. It gives rigorous privacy guarantees, but comes at the cost of yielding less accurate results. In this paper, we carry out a theoretical study of substring and document counting under differential privacy. 

We propose a data structure storing $\epsilon$-differentially private counts for all possible query patterns
with a maximum additive error of $O(\ell \cdot\mathrm{polylog}(n\ell|\Sigma|))$, where $\ell$ is the maximum length of a document in the database, $n$ is the number of documents, and $|\Sigma|$ is the size of the alphabet.  We also improve the error bound for document counting with $(\epsilon,\delta)$-differential privacy to $O(\sqrt{\ell} \cdot\mathrm{polylog}(n\ell|\Sigma|))$. We show that our additive errors for substring counting and document counting are optimal up to an $O(\mathrm{polylog}(n\ell))$ factor both for $\epsilon$-differential privacy and $(\epsilon,\delta)$-differential privacy.


Our data structures answer counting queries in time linear in the pattern length, occupy $O(n\ell^2)$ space, and can be constructed in $O(n^2\ell^4)$ time. When the query patterns are $q$-grams (i.e. have a fixed length $q$), the construction time can be drastically improved: in particular, a data structure storing $(\epsilon,\delta)$-differentially private $q$-gram counts can be constructed in $O(n\ell( \log q +\log |\Sigma|))$ time using $O(n\ell)$ space. Our data structures immediately lead to improved algorithms for related problems, such as privately mining frequent substrings and $q$-grams. Additionally, we develop a new technique of independent interest for differentially privately computing a general class of counting functions on trees.
\end{abstract}

\newpage
\pagenumbering{arabic}

\section{Introduction}
Differential privacy~\cite{Dwork2006} is the gold standard for privacy in data analysis. It has been extensively studied in theory and employed on a large scale by companies such as Google~\cite{googlekeyboard}, Apple~\cite{applescale}, and Uber~\cite{ubersql}, as well as public institutions such as the US Census Bureau~\cite{uscensus}. 
The goal of differentially private data analysis is to protect the privacy of any \emph{individual contribution}, while giving information about the \emph{general trends} in a database. This is often achieved as follows: Given a database from which we want to extract some information (specified through a \emph{query}), instead of giving the true output to the query, we add random noise to hide the contribution of any single data point. The random noise is scaled such that for any single data point, any given output should be \emph{roughly equally likely} whether the data point is present in the database or not. Specifically, we say that two databases are \emph{neighboring} if they differ in a single data point. An algorithm is $(\epsilon,\delta)$-differentially private if the probabilities of any output differ by at most a multiplicative factor of $e^{\epsilon}$ and an additive factor $\delta$ between any two neighboring databases. If $\delta=0$, we call the algorithm $\epsilon$-differentially private. The definition of differential privacy allows us to quantify \emph{how private} the output is. Promising differential privacy naturally comes with a loss of accuracy: a completely random output would be perfectly private, but not very useful.  The goal is to achieve a good tradeoff between privacy and accuracy.


One of the most fundamental types of problems in differential privacy is that of \emph{counting queries}, where a database consists of elements from a predefined universe, and the goal is, for a set of properties (i.e., a \emph{query family}), to count the number of elements in the database that satisfy each property. 
An example of a family of counting queries is that of histograms, where the query family contains a query for each element in the universe, and the goal is to count how often each element appears in the database.  Another example is the family of threshold functions, where each query corresponds to a threshold, and the goal is to count the number of elements below 
the threshold. The important question for any given query family is what is the best possible maximum additive error we can get when answering \emph{all} queries in the family while satisfying differential privacy. This is a challenging problem in general: In particular, for any $k$, there exists a family of $k$ counting queries such that an error of $\Omega(k)$ is necessary to satisfy $\epsilon$-differential privacy~\cite{DBLP:conf/stoc/HardtT10,DBLP:conf/pods/DinurN03}. However, the trade-off can differ significantly for various families of counting queries. See \cite{DBLP:books/sp/17/Vadhan17} for a comprehensive overview of the complexity of these problems. 

Many data analysis applications deal with databases of sequential data, which we refer to as \emph{documents} or \emph{strings}: For example, biological sequence data, web browsing statistics, trajectory data, and text protocols for next-word suggestions. Since these documents contain highly confidential information, an important question is whether such collections can be analyzed while preserving differential privacy. For databases consisting of many documents, we are interested in the \emph{patterns} they contain. In particular: For a database that is a collection of documents $\setstr=S_1,\dots, S_n$, and a query pattern $P$, how many documents in $\setstr$ contain $P$? This problem is called \documentcount. Another closely related question is: How often does $P$ appear as a substring of the documents in $\setstr$ \emph{in total}? We call this problem \substringcount. The difference between \documentcount\ and \substringcount\ is that in the latter, if a pattern occurs multiple times in the same document of $\setstr$, we count all its occurrences, while in the former, any document in $\setstr$ can contribute at most one to the count of any pattern. In the non-private setting, these problems have been extensively studied from a data-structure perspective, where the goal is to preprocess $\setstr$ to answer queries for any pattern $P$ efficiently: see e.g.~\cite{GKNP2013, FMMN2007, GK2011, Muthukrishnan2002, GNP2020, PJS1995, GM2005, LW2013, Navarro2014, MNSV2010, Sadakane2007, BHMM2009}.

In this paper, we carry out a theoretical study of the query families \documentcount\ and \substringcount\ under differential privacy. 
 While problems of this nature have been studied under differential privacy before from a practical perspective \cite{DBLP:conf/cikm/BonomiX13,DBLP:conf/sigmod/ZhangXX16,DBLP:conf/kdd/ChenFDS12,app12042131,DBLP:conf/securecomm/KhatriDH19,DBLP:conf/nips/KimGKY21,DBLP:conf/ccs/ChenAC12}, to the best of our knowledge, we are the first to provide a theoretical analysis on the additive error in terms of the problem parameters. Like in the non-private setting, we are interested in designing a data structure for these problems, which can then be queried for an arbitrary number of patterns while preserving differential privacy. The data structure should output a differentially private noisy count for any pattern query with the minimum possible additive error.

Trying to achieve this poses two main challenges. The first challenge is to provide a data structure that can be queried an arbitrary number of times without any privacy loss. Most differentially private algorithms are designed as one-shot algorithms, that is, they can only be run once on any given database: Indeed, running a differentially private algorithm several times
on the same database generally results in a higher privacy loss. As an example, consider the case where we want to output how many people in a database satisfy a given property. A differentially private algorithm computes the true output and adds random noise. Now, if we run the same differentially private algorithm on this database many times, then the average of all outputs will converge to the true output plus the mean of the random noise. Thus, at some point, we will be able to give the true answer with high probability, destroying any privacy guarantee. Summarizing, if one were to answer many queries on the same database by just running an independent differentially private algorithm each time, then the privacy loss grows with the number of queries.

The second challenge is the high dimensionality of the input data.
Recall that our privacy definition requires protecting the influence of any single document. This limits the accuracy we can hope for: As an example, consider the neighboring databases $\setstr$ and $\setstr'$, where the length-$\ell$ string $S_i=aaa\dots a$ of $\setstr$ is replaced by the length-$\ell$ string $S'_i=bbb\dots b$ in $\setstr'$.  The 
substring count of the query pattern $P=a$ differs by $\ell$ between $S_i$ and $S'_i$. Thus, intuitively, any algorithm computing the count of $P$ which does not significantly distinguish between $\setstr$ and $\setstr'$ must have an error of roughly $\ell$. Note that this lower bound holds for the problem of computing the count of a single (and very short) pattern! The high dimensionality also introduces another severe challenge: Note that $P=a$ may not even appear in $\setstr'$ at all - yet it must have a similar probability of producing a positive count as it does in $\setstr$. This means that assuming our input data consists of strings of a (known) maximum length $\ell$, then to satisfy $\epsilon$-differential privacy, \emph{any} string of length at most $\ell$ must have a positive count with some non-zero probability, not only those that actually appear as substrings in $\setstr$. On the other hand, we would like to efficiently construct a not-too-large data structure, and this prohibits explicitly computing a noisy count for all possible patterns.

We tackle the first challenge by giving a \emph{differentially private construction algorithm} for our data structure; since differential privacy is robust to post-processing, this data structure can then be queried ad-libitum without further privacy loss. Thus, preprocessing the database into a data structure has two advantages in this setting: Besides the speedup in queries, which is the usual motivation for data structures, it allows us to avoid 
the privacy cost of asking many queries.

To deal with the high dimensionality of string data, we use insights from string algorithms, along with a filtering of rare patterns at various steps of the construction algorithm. We combine this with a novel differentially private algorithm for counting on trees. The resulting algorithm produces a trie of $O(n\ell^2)$ nodes, where each node stores a noisy count for the string it corresponds to. For $\epsilon$-differential privacy, we show that the maximum error of these noisy counts is roughly $O(\ell)$ (up to polylogarithmic factors in the problem parameters) for both \documentcount\ and \substringcount. Strings that are not present in the trie have a count that lies below the error bound, with high probability in the coin flips of the algorithm. Surprisingly, this almost matches the discussed lower bound for computing the \substringcount\ of \emph{one} pattern (which we show holds for both $\epsilon$-differential privacy and $(\epsilon,\delta)$-differential privacy), even though the data structure can be used to compute the count of \emph{all possible} patterns.  Additionally, we show that when relaxing to $(\epsilon,\delta)$-differential privacy, we can further reduce the error for \documentcount\ by a factor $\sqrt{\ell}$, and this bound too is optimal up to polylogarithmic factors. 
Our strategy yields new results for differentially privately computing count functions on hierarchical tree data structures of certain properties, which may be of independent interest. As an example, our techniques give a differentially private data structure for the \emph{colored tree counting} problem.


As an immediate application of our work, we get improved results for problems which have been extensively studied under differential privacy from a practical perspective: namely, the problem of \emph{frequent sequential pattern mining}~\cite{DBLP:conf/cikm/BonomiX13,DBLP:conf/sigmod/ZhangXX16,DBLP:conf/kdd/ChenFDS12,app12042131,DBLP:conf/securecomm/KhatriDH19}, or the very similar problem of \emph{q-gram extraction}~\cite{DBLP:conf/nips/KimGKY21,DBLP:conf/ccs/ChenAC12}, where a $q$-gram is a pattern of a fixed length $q$. The goal in both of these problems is to analyze a collection of strings and report patterns that either occur in many strings in the collection (i.e., have a high document count) or appear often as a substring in the collection (i.e., have a high substring count). The private algorithms in these papers have been used for analysing transit data \cite{DBLP:conf/kdd/ChenFDS12} and publishing genome  data~\cite{DBLP:conf/securecomm/KhatriDH19}. However, existing papers do not give any theoretical guarantees on the accuracy of their algorithms in terms of the problem parameters. 
Using our data structures for \documentcount\ or \substringcount, we can solve mining problems by simply traversing the data structures and reporting all patterns with a noisy count above a given threshold. Notably, with our data structures it is possible to differentially privately mine frequent patterns from the same database using several frequency thresholds without any further privacy loss.

We show that the $\Omega(\ell)$ lower bound for $\epsilon$-differentially private substring and document count and for $(\epsilon,\delta)$-differentially private substring count also holds for the weaker problem formulation of outputting only the patterns whose count lies above a fixed threshold $\tau$ (where the error is defined as how strictly this threshold is kept, see Definition~\ref{def:approxmining}). Thus, the problem we study is a natural generalization of these mining problems. In the related work section, we argue that our tight bounds yield a polynomial improvement of the error compared to prior approaches.


\subsection{Setup and Results}

Our database is a collection $\setstr$ of documents (also called strings in this paper) of length at most $\ell$ drawn from an alphabet $\Sigma$ of size $|\Sigma|$. That is, we have $\setstr=S_1,\dots, S_{n}$, where $S_i\in\Sigma^{[1,\ell]}$ for all $i=1,\dots, n$. Thus, the \emph{data universe} is $X=\Sigma^{[1,\ell]}$ and any database is an element of $X^{*}$\footnote{Here ``$*$" is The Kleene operator: given a set $X$, $X^*$ is the (infinite) set of all finite sequences of elements in $X$.}. The order of the documents in $\setstr$ is not important, and we can also see $\setstr$ as a \emph{multiset}.
We call two databases $\setstr\in X^{*}$ and $\setstr'\in X^*$ \emph{neighboring} if there exists an $S'_i\in\Sigma^{[1,\ell]}$ such that $\setstr' = S_1,\ldots,S_{i-1}, S'_i, S_{i+1}, \ldots S_n$ for some $i$ (that is, document $S_i$ has been replaced with document $S'_i$).
We also denote the (symmetric) neighboring relation as $\setstr \sim \setstr'$. 
\begin{definition}[Differential Privacy]\label{def:privacy}
    Let $\chi$ be a data universe, $\epsilon>0$ and $\delta\geq 0$. An algorithm $A:\chi^{*}\rightarrow \range(A)$ is $(\epsilon,\delta)$-differentially private, if for all neighboring $\setstr, \setstr'\in \chi^*$ 
    and any set $U\subseteq \range(A)$, we have
    \begin{align*}
        \Pr[A(\setstr)\in U]\leq e^{\epsilon}\Pr[A(\setstr')\in U]+\delta.
    \end{align*}
\end{definition}
For $\delta=0$, the property from Definition~\ref{def:privacy} is also called \emph{$\epsilon$-differential privacy} or \emph{pure differential privacy}. For $\delta>0$, it is also called \emph{approximate differential privacy}.
\paragraph{Document and Substring Counting} 
Let $P$ and $S$ be strings over an alphabet $\Sigma$. We define $\counts(P, S)$ to be the number of occurrences of $P$ in $S$; if $P$ is the empty string, we define $\counts(P,S)=|S|$. For any $\Delta\in [1,\ell]$, let $\counts_{\Delta}(P,S)=\min(\Delta,\counts(P,S))$. We then define $\counts_{\Delta}(P,\setstr):=\sum_{S\in \setstr} \counts_{\Delta}(P,S)$ and $\counts(P,\setstr):=\counts_{\ell}(P,\setstr)$. Note that problem \documentcount\ uses $\Delta=1$; problem \substringcount\ uses $\Delta=\ell$.
In general, $\Delta$ limits the contribution of any one document (corresponding to one user) to the count of any pattern in the database.
Given a database $\setstr = S_1,\dots, S_{n}$, we want to build a data structure that supports the following types of queries: 
\begin{itemize}
        \item\documentcount($P$): Return the number $\counts_1(P,\setstr)$ of documents in $\setstr$ that contain $P$.  
        \item \substringcount($P$): Return the total number of occurrences $\counts(P,\setstr)$ of $P$ in the documents in $\setstr$.  
\end{itemize}

\begin{example}\label{ex:doc_substr_count}
Consider the database $\setstr=\{$\texttt{aaaa, abe, absab, babe, bee, bees}$\}$ and the string $P=\texttt{ab}$. Then $\counts_1(P,\setstr)=3$, $\counts(P,\setstr)=4$.
\end{example}

We want to construct a data structure to efficiently answer \documentcount\ and \substringcount\ queries while preserving differential privacy. For this, we need to introduce an error. We say that a data structure for $\counts_{\Delta}$ has an \emph{additive error} $\alpha$, if for any query pattern $P\in \Sigma^{[1,\ell]}$, it produces a value $\approxcount_{\Delta}(P,\setstr)$ such that $|\counts_{\Delta}(P,\setstr)-\approxcount_{\Delta}(P,\setstr)|\leq \alpha$.


\paragraph{Frequent Substring Mining} Let $\setstr=S_1,\dots, S_n$ be a database of documents in $\Sigma^{[1,\ell]}$. We define \substringmining($\setstr$, $\Delta$, $\thresh$) as follows. Compute a set $\mathcal{P}$ such that for any $P\in\Sigma^{[1,\ell]}$, we have $P \in \mathcal{P}$ if and only if $\counts_{\Delta}(P,\setstr)\geq\thresh$.
Similarly, we define  \qgrammining($\setstr$, $\Delta$, $\thresh$): compute a set $\mathcal{P}$ such that for any $P\in\Sigma^{q}$, we have $P \in \mathcal{P}$ if and only if $\counts_{\Delta}(P,\setstr)\geq\thresh$. That is, in \qgrammining\ we are only interested in substrings of length exactly $q$.

To obtain a differentially private \substringmining\ algorithm, we consider the approximate version of the problems.

    \begin{definition}
    [\approxmining($\setstr$, $\Delta$, $\thresh$)]\label{def:approxmining}
    Given a database $\setstr=S_1,\dots, S_n$ of documents from $\Sigma^{[1,\ell]}$, compute a set $\mathcal{P}$ such that for any $P\in\Sigma^{[1,\ell]}$, (1) if $\counts_{\Delta}(P,\setstr)\geq\thresh+\alpha$, then $P\in\mathcal{P}$; and
        (2) if $\counts_{\Delta}(P,\setstr)\leq\thresh - \alpha$, then $P\notin\mathcal{P}$.
    \end{definition}   

    The problem \approxqgram\ is defined in the same way as in Definition~\ref{def:approxmining}, adding the restriction $P\in\Sigma^{q}$. 
      For the problems \approxmining\ and \approxqgram, we also refer to $\alpha$ as the \emph{error}. 


 \subsubsection{Main results}

 In this paper, we give novel $\epsilon$-differentially private and $(\epsilon,\delta)$-differentially private data structures for $\counts_{\Delta}$. We obtain new solutions to frequent substring mining as a corollary. We stress that for all of our data structures, it is the \emph{construction algorithm} that satisfies differential privacy, so the resulting data structures can be queried ad libitum without further privacy loss. We first state our result for $\epsilon$-differential privacy, which gives with high probability both a data structure for \documentcount\ and for \substringcount\ with an additive error at most $O(\ell\cdot\mathrm{polylog}(n\ell+|\Sigma|))$. The data structure output by both Theorems~\ref{thm:main_eps} and~\ref{thm:main_epsdel} is a trie with $O(n\ell^2)$ nodes and a noisy count for each node. Patterns that are not present in the trie have a count that lies below the error bound with high probability.

 \begin{restatable}{theorem}{mainEpsThm}\label{thm:main_eps}
    Let $n,\ell\in\mathbb{N}$ 
    and $\Sigma$ an alphabet of size $|\Sigma|$. Let $\Delta\leq \ell$. For any $\epsilon>0$ and $0<\beta<1$, there exists an $\epsilon$-differentially private algorithm, which can process any database $\setstr=S_1,\dots,S_{n}$ of documents in $\Sigma^{[1,\ell]}$ and with probability at least $1-\beta$ outputs a data structure for $\counts_{\Delta}$ with additive error $O\left(\epsilon^{-1}\ell\log \ell(\log^2(n\ell/\beta)+\log |\Sigma|)\right)$. The data structure can be constructed in $O\left(n^2\ell^3\log\log(n\ell) +n^2\ell^4\right)$ time using $O(n^2\ell^4)$ space, stored in $O(n\ell^2)$ space, and answer queries in $O(|P|)$ time.
\end{restatable}
For $(\epsilon,\delta)$-differential privacy, we get better logarithmic factors in the error bound, and additionally improve the linear dependence on $\ell$ to $\sqrt{\ell\Delta}$.

\begin{restatable}{theorem}{mainEpsdelThm}\label{thm:main_epsdel}
    Let $n,\ell\in\mathbb{N}$ 
    and $\Sigma$ an alphabet of size $|\Sigma|$. Let $\Delta\leq \ell$. For any $\epsilon>0$, $\delta>0$ and $0<\beta<1$, there exists an $(\epsilon,\delta)$-differentially private algorithm, which can process any database $\setstr=S_1,\dots, S_{n}$ of documents in $\Sigma^{[1,\ell]}$ and with probability at least $1-\beta$ outputs a data structure for $\counts_{\Delta}$ with additive error $O\left(\epsilon^{-1}\sqrt{\ell\Delta\log(1/\delta)}\log \ell\left(\log (n\ell/\beta)+\sqrt{\log |\Sigma|\log\log \ell}\right)\right)$. The data structure can be constructed in $O(n^2\ell^4)$ time and space, stored in $O(n\ell^2)$ space, and answer queries in $O(|P|)$ time.
\end{restatable}

Given any threshold $\tau$, we can use our data structure to solve the $\approxmining$ problem with the same error guarantee: 
We simply traverse the trie and output every string corresponding to a node with a noisy count at least $\tau$. Note that the error in the noisy counts propagates to the threshold. Indeed, assuming the error of the noisy counts in the trie is at most $\alpha$, we can guarantee that the true count of any string we output is at least $\tau-\alpha$, and that we will output all strings with a true count of at least $\tau+\alpha$. However, strings whose true count lies within the interval $[\tau-\alpha,\tau+\alpha]$ may be output or not. Since this procedure only depends on the data structure, which was constructed privately, we can repeat this procedure for as many different thresholds as we want, without further privacy loss.

When we restrict to $q$-grams, the time and space complexity of the $\epsilon$-differential private construction algorithm improves by roughly a factor of $O(\ell)$, achieving the bounds of Theorem~\ref{thm:epsQgrams}. Notably, under $(\epsilon,\delta)$-differential privacy, the construction time and space complexity further drop to $O(n\ell( \log q +\log |\Sigma|))$ and $O(n\ell)$, respectively, as stated in  Theorem~\ref{thm:qgrams_eps_delta}.

\begin{restatable}{theorem}{thmEpsQgrams}\label{thm:epsQgrams}
 Let $n,\ell,q\in\mathbb{N}$,  $q\leq\ell$ 
 and $\Sigma$ an alphabet of size $|\Sigma|$. Let $\Delta\leq \ell$. For any $\epsilon>0$ and $0<\beta<1$, there exists an $\epsilon$-differentially private algorithm, which can process any database $\setstr=S_1,\dots,S_{n}$ of documents from $\Sigma^{[1,\ell]}$ and with probability at least $1-\beta$ outputs a data structure for computing $\counts_{\Delta}$ for all $q$-grams with additive error $O\left(\epsilon^{-1}\ell\log \ell(\log(n\ell/\beta)+\log |\Sigma|)\right)$. The data structure can be constructed in $O(n^2\ell^2\log q\log \log (n\ell)+n^2\ell^3)$ time using $O(n^2\ell^3)$ space, can be stored in $O(n \ell^2)$ space, and answers queries in $O(|P|)$ time.
\end{restatable}

\begin{restatable}{theorem}{thmQgramsEpsdelta}\label{thm:qgrams_eps_delta}
    Let $n,\ell,q\in\mathbb{N}$,  $q\leq\ell$ 
    and $\Sigma$ an alphabet of size $|\Sigma|$. Let $\Delta\leq \ell$. For any $\epsilon>0$, $\delta>0$, and $0<\beta<1$, there exists an $(\epsilon,\delta)$-differentially private algorithm, which can process any database $\setstr=S_1,\dots,S_{n}$ of documents from $\Sigma^{[1,\ell]}$ and with probability at least $1-\beta$ outputs a data structure for $\counts_{\Delta}$ for all $q$-grams with additive error $~O\left(\epsilon^{-1}\sqrt{\ell\Delta\log (n\ell)}\log q\left(\epsilon+\log\log q+ \log \frac{|\Sigma|}{\delta\beta}\right)\right)$. The data structure can be constructed in $O(n\ell( \log q +\log |\Sigma|))$ time and $O(n\ell)$ space.
\end{restatable}
 
Further down, we prove lower bounds stating that the error bounds from Theorems~\ref{thm:main_eps}-\ref{thm:qgrams_eps_delta} are tight up to polylogarithmic factors.

\subsubsection{Lower Bounds}
We note that since \documentcount\ can, in particular, be formulated as a counting query which can distinguish all elements from $\Sigma^{\ell}$, an $\Omega(\log(|\Sigma|^{\ell}))=\Omega(\ell\log |\Sigma|)$ lower bound for $\epsilon$-differentially private algorithms follows from well-established lower bounds (see e.g. the lecture notes by Vadhan~\cite{DBLP:books/sp/17/Vadhan17}; we discuss this in more detail in Section~\ref{sec:lower_bound}). We extend this lower bound in two ways.
(1) We show that it holds even if we only want to mine the patterns with a count approximately above a given threshold $\thresh$; (2) We show that it holds even if we restrict the output to patterns of a fixed length $m$, for any $m\geq 2\log \ell$. 

\begin{restatable}{theorem}{thmLB}\label{thm:lower_bound}
    Let $n,\ell\in \mathbb{N}$  and $\Sigma$ an alphabet of size $|\Sigma|\geq 4$. Let $m\geq 2\lceil\log \ell\rceil$. Let $\Delta\leq \ell$ and $\epsilon>0$. Let $\alg$ be an $\epsilon$-differentially private algorithm, which takes as input any database $\setstr=S_1,\dots, S_n$ of documents in $\Sigma^{\ell}$ and a threshold $\thresh$.  
    If $\alg$ computes with probability at least $2/3$ a set $\mathcal{P}$ such that for any $P\in\Sigma^{m}$,
   \begin{enumerate}
       \item if $\counts_{\Delta}(P,\setstr)\geq\thresh+\alpha$, then $P\in\mathcal{P}$ and
       \item if $\counts_{\Delta}(P,\setstr)\leq\thresh - \alpha$, then $P\notin\mathcal{P}$,
   \end{enumerate}
    then $\alpha=\Omega(\min(n,\epsilon^{-1}\ell\log |\Sigma|))$.
\end{restatable}

Theorems~\ref{thm:LB_approx_substring_count} and~\ref{thm:LB_approx_document_count} prove that the error bounds of our $(\epsilon,\delta)$-differentially private data structures for \substringcount\ and \documentcount\ are also essentially optimal. Additionally, a corollary of Theorem~\ref{thm:LB_approx_substring_count} that we give in Section~\ref{sec:lower_bound} states the same lower bound on the error also for the simpler \approxqgram\ problem. Theorem~\ref{thm:LB_approx_substring_count} is proved via a direct example. Theorem~\ref{thm:LB_approx_document_count} is proved via reduction to the 1-way marginals problem.

\begin{restatable}{theorem}{LBepsdelSubstr}\label{thm:LB_approx_substring_count}
 Let $n,\ell\in \mathbb{N}$  and $\Sigma$ an alphabet of size $|\Sigma|\geq 2$. Let $\beta,\delta\in[0,1)$.
Every $(\epsilon,\delta)$-differentially private algorithm for \substringcount\ which takes as an input any database of $n$ documents from $\Sigma^{\ell}$ and is 
$\alpha$-accurate with probability at least $1-\beta$ 
must satisfy either        $$\alpha=\Omega(\ell)$$
        or 
        \begin{equation}\label{eq:approx_LB}
        \epsilon\geq \ln\left(\frac{1-\beta-\delta}{\beta}\right).
        \end{equation}
\end{restatable}

\begin{restatable}{theorem}{LBepsdelDocument}\label{thm:LB_approx_document_count}
Let $n,\ell\in \mathbb{N}$  and $\Sigma$ an alphabet of size $|\Sigma|=s\geq 3$. Let $\epsilon\in(0,1]$ and $\delta\in[0,1)$. Every $(\epsilon,\delta)$-differentially private algorithm for \documentcount\ which takes as an input any database of $n$ documents from $\Sigma^{\ell}$ and is
$\alpha$-accurate with probability at least $2/3$  must satisfy:       
\begin{itemize}
    \item if $\delta>0$ and $\delta=o\left(\frac{1}{n}\right)$, then $\alpha
    =\Omega\left(\min\left(\frac{\sqrt{\ell}}{\epsilon\log_{s-1}\ell\log \ell}, \frac{\sqrt{\ell}}{\epsilon\log \ell}\right)\right)$
    \item if $\delta=0$, then $\alpha=\Omega\left(\min\left(\frac{\ell}{\epsilon\log_{s-1}\ell},\frac{\ell}{\epsilon}\right)\right)$.
\end{itemize}
\end{restatable}

 \subsubsection{Counting on Trees}
On the way to proving our main result, we develop a new technique to differentially privately compute count functions on a tree, which may be of independent interest. We motivate this using two examples. First, let $T$ be a tree, where every leaf corresponds to an element in the universe. The count of every leaf is the number of times the element appears in a given data set, and the count of a node is the sum of the counts of the leaves below. This problem captures any hierarchical composition of data items (i.e., by zip code, area, state), and has been studied under differential privacy (e.g. \cite{DBLP:conf/sigmod/ZhangXX16,DBLP:conf/icalp/GhaziK0M023}). In particular, as noted in \cite{DBLP:conf/icalp/GhaziK0M023}, it can be solved via a reduction to differentially private range counting over the leaf counts. For $\epsilon$-differential privacy, the binary tree mechanism by Dwork~et~al.~\cite{Dwork2010} gives an error of roughly $O(\log^2 u)$ for this problem, where $u$ is the size of the universe. For $(\epsilon,\delta)$-differential privacy, the range counting problem has been extensively studied recently~\cite{DBLP:conf/stoc/Cohen0NSS23,DBLP:conf/colt/KaplanLMNS20,DBLP:conf/stoc/AlonLMM19,DBLP:conf/stoc/BunDRS18,DBLP:conf/focs/BunNSV15,DBLP:journals/toc/BeimelNS16}. As a second example, let again $T$ be a tree and the leaves the items of the universe, however, we additionally have a \emph{color} for every item of the universe. Now, the goal is to compute for every node the \emph{number of distinct colors} of elements present in the data set corresponding to leaves below the node. We call this problem the \emph{colored tree counting problem}. This is similar to the \emph{colored range counting problem}, which is well-studied in the non-private setting \cite{KaplanRSV08, Gao021, Gao23}, and was also recently studied with differential privacy under the name of ``counting distinct elements in a time window"~\cite{DBLP:conf/innovations/Ghazi0NM23}. 
We give an $\epsilon$-differentially private algorithm that can solve 
both problems with error $O(\log^2 u \log h)$, where $h$ is the height of the tree. In general, our algorithm can compute any counting function that is (i) monotone, in the sense that any node's count cannot be larger than the sum of the counts of its children, and (ii) has bounded \emph{$L_1$-sensitivity} on the leaves, i.e., the true counts of the leaves do not differ too much on neighboring data sets (see Definition~\ref{def:sensitivity} for a formal definition of sensitivity). 
We next state the lemma in full generality.

\begin{restatable}{theorem}{thmTreeCount}\label{thm:tree_count}
    Let $\chi$ be a universe and $T=(V,E)$ a tree height $h$. Let $L=l_1,\dots,l_k\subseteq V$ be the set of leaves in $T$. Let $c:V\times \chi^{*}\rightarrow \mathbb{N}$ be a function, which takes as input a node $v$ from $T$ and a data set $\setstr$ from $\chi^{*}$, and computes a count with the following properties:
    \begin{itemize}
        \item $c(v,\setstr)\leq\sum_{u \textnormal{ is a child of }v} c(u,\setstr)$ for all $v\in V\setminus L$ and $\setstr\in \chi^{*}$;
        \item  $\sum_{i=1}^k |c(l_i,\setstr)-c(l_i,\setstr')|\leq d$, for all neighboring $\setstr$ and $\setstr'$ and some $d\in\mathbb{N}$.
    \end{itemize}
    Then for any $\epsilon>0$ and $0<\beta<1$, there exists an $\epsilon$-differentially private algorithm computing $\hat{c}(v)$ for all nodes in $T$ such that $\max_{v\in V}|\hat{c}(v,\setstr)-c(v,\setstr)|=O(\epsilon^{-1}d\log |V|\log h \log(hk/\beta))$ with probability at least $1-\beta$. In particular, if $d=1$, then the maximum error is at most $O(\epsilon^{-1}\log |V|\log h \log(hk/\beta))$ with probability at least $1-\beta$.

\end{restatable}
We also show an improvement for $(\epsilon,\delta)$-differential privacy, if additionally the sensitivity of $c(v,\cdot)$ is small for each node $v$.

\begin{restatable}{theorem}{thmEpsdelTrees}\label{thm:tree_count_epsdel}
    Let $\chi$ be a universe and $T=(V,E)$ a tree height $h$.  Let $L=l_1,\dots,l_k\subseteq V$ be the set of leaves in $T$. Let $c:V\times \chi^{*}\rightarrow \mathbb{N}$ a function, which takes as input a node $v$ from $T$ and a data set $\setstr$ from $\chi^{*}$, and computes a count with the following properties:
    \begin{itemize}
        \item $c(v,\setstr)\leq\sum_{u \textnormal{ is a child of }v} c(u,\setstr)$ for all $v\in V\setminus L$ and $\setstr\in \chi^{*}$;
        \item $\sum_{i=1}^k |c(l_i,\setstr)-c(l_i,\setstr')|\leq d$, for all neighboring $\setstr$ and $\setstr'$ and some $d\in\mathbb{N}$.
        \item $|c(v,\setstr)-c(v,\setstr')|\leq \Delta$, for all $v\in T$ and all neighboring $\setstr$ and $\setstr'$.
    \end{itemize}
    Then for any $\epsilon>0$, $\delta>0$, and $0<\beta<1$, there exists an $(\epsilon,\delta)$-differentially private algorithm computing $\hat{c}(v)$ for all nodes in $T$ such that $$\max_{v\in V}|\hat{c}(v,\setstr)-c(v,\setstr)|=O(\epsilon^{-1}\sqrt{d\Delta\log |V|\log(1/\delta)\log(hk/\beta)}\log h )$$ with probability at least $1-\beta$. 

\end{restatable}

\subsection{Technical Overview}
{\bf Simple approach.}  
Before we introduce our algorithm, we present a simple trie-based approach (a similar strategy was adopted in previous work, e.g. \cite{DBLP:conf/cikm/BonomiX13, DBLP:conf/sigmod/ZhangXX16,DBLP:conf/ccs/ChenAC12,DBLP:conf/kdd/ChenFDS12,DBLP:conf/nips/KimGKY21,DBLP:conf/securecomm/KhatriDH19}). For simplicity, we focus on \substringcount. A private trie for the database is constructed top-down. Starting at the root, we create a new child node for every letter in the alphabet, connected by an edge labeled with the corresponding letter. Then, for each leaf, we compute how many times the corresponding letter occurs in $\mathcal{D}$ and add noise to this count. If the noisy count exceeds a certain threshold, we expand this node by adding a child for every letter in the alphabet. We continue expanding the trie from the newly added leaves, but in later rounds, when computing counts for a leaf, instead of considering a single letter, we count how many times the string obtained by concatenating the labels of the edges on the root-to-leaf path occurs as a substring in $\mathcal{D}$.

We repeat this process until no noisy count exceeds the threshold. The main advantage of this top-down approach is to avoid considering the entire universe by excluding patterns that are not frequent as soon as a prefix is not frequent. To make the algorithm $\epsilon$-differentially private, the noise we add has to scale approximately with the $L_1$-sensitivity of the true counts. The $L_1$-sensitivity is here defined as how much the true counts of the nodes in the trie can differ in total if we replace a document in the database. Assume $\setstr'=\setstr\setminus\{S\}\cup \{S'\}$. Then, the count of a node is different on $\setstr$ and $\setstr'$ if and only if the string represented by the node is a substring of $S$ or $S'$ (or both). However, $S$ and $S'$ can each have up to $\Omega(\ell^2)$ different substrings. Thus, this approach yields an error of $\Omega(\ell^2)$.

\begin{figure}
    \centering
    \includegraphics[width=.48\columnwidth]{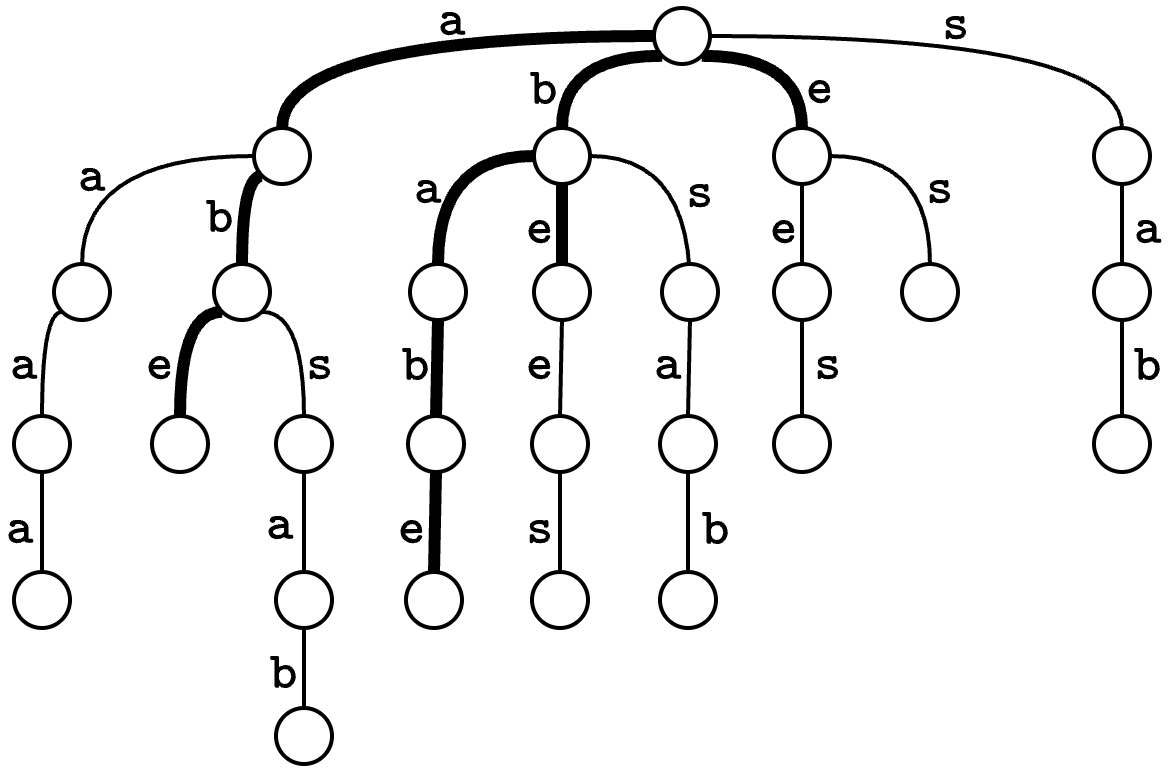}
    \caption{The trie of all suffixes of the strings in database $\setstr$ from Example~\ref{ex:doc_substr_count}. The paths corresponding to all suffixes of the string $\texttt{babe}$ are in bold.}
    \label{fig:trie_paths}
\end{figure}

{\bf Idea for improvement.} 
An important insight is that while there are $\Omega(\ell^2)$ nodes in the trie that can be influenced by $S$ and $S'$, these nodes can lie on at most $2\ell$ different root-to-leaf paths, namely, the paths corresponding to the suffixes of $S$ and $S'$. See Figure~\ref{fig:trie_paths}. Our idea is to leverage this property. For this, we make use of the \emph{heavy path decomposition}~\cite{SLEATOR1983362} of a tree $T$, which partitions $T$ into disjoint \emph{heavy paths} and has the property that any root-to-leaf path in $T$ crosses at most a logarithmic number (in the size of $T$) of such heavy paths. We will use these heavy paths to add noise in a way such that the error depends on the number of heavy paths containing nodes with a count that is different on $\setstr$ and $\setstr'$.

However, there are two main challenges.  We cannot construct the heavy path decomposition during the top-down construction above, since this requires knowing the size of the subtrees before they have been explored, but at the same time, we need the noisy counts to decide which nodes to expand further.  
Alternatively, we could first construct the full trie of the universe (all possible substrings of length $\ell$) and then compute the heavy path decomposition. But in this case, the size of the trie would be $|\Sigma|^{\ell}$, so the number of heavy paths on a root-to-leaf path would only be bounded by $\ell\log\Sigma$, and therefore, the $\ell$ root-to-leaf paths which contain nodes of different counts in $\setstr$ and $\setstr'$ can cross up to $\Omega(\ell^2\log \Sigma)$ heavy paths in total.


{\bf Our algorithm.} 
We now give an overview of our full algorithm. As a first step, we reduce the universe size from $|\Sigma|^{\ell}$ to $n^2\ell^3$, by computing a set of \emph{candidate frequent strings} in a differentially private way. 
Recall that we need substrings not occurring in our database $\mathcal{D}$ to appear in our candidate set with non-zero probability. We show how to efficiently obtain this without exhaustively considering all such substrings.

We iteratively construct a collection $\mathcal{P} =\mathcal{P}_{2^0}, \ldots, \mathcal{P}_{2^{\lfloor \log \ell\rfloor}}$ of sets of strings, such that the strings in $\mathcal{P}_{2^k}$ have length $2^k$, as follows.  We first construct $\mathcal{P}_{2^0}$ by computing noisy counts for all letters in the alphabet, and keep only those with a noisy count above a certain threshold. To construct $\mathcal{P}_{2^k}$, we compute noisy counts of all strings obtained by concatenating two strings from $\mathcal{P}_{2^{k-1}}$, again keeping only those with a noisy count above the threshold.  We show that the collection only contains substrings from $\mathcal{D}$ with high probability. 
Note that a document $S$ 
can have at most $\ell$ substrings of any fixed length, and we consider a logarithmic number of lengths. Therefore, our noise needs to scale with $\ell\log \ell$. We then use the following observation: any string of a given length $m$ which is not a power of two has an overlapping prefix and suffix whose length is a power of two. Our final candidate set consists of the collection $\mathcal{P}$ together with all strings covered by a 
prefix and suffix from $\mathcal{P}$.

Next, we build a trie from our candidate set and decompose it into heavy paths. For every heavy path, we compute (i) a differentially private noisy count of its root, and (ii) a differentially private estimate of the \emph{prefix sums of the difference sequence of the counts} on the path going down. Specifically, for every node on the heavy path, we get an approximation of the \emph{difference} between its count and the count of the root. We can compute (ii) with low noise using a generalized variant of the binary tree mechanism~\cite{Dwork2010}. We show that since any root-to-leaf path can cross at most $\log (n^2\ell^3)=O(\log (n\ell))$
heavy paths, and a document $S$ 
can only influence counts on the at most $\ell$ paths corresponding to its suffixes, this can be done with an error which is $O(\ell\cdot \mathrm{polylog}(n\ell))$. This strategy gives our main result of Theorem~\ref{thm:main_eps}.

The general tree counting lemma (Theorem~\ref{thm:tree_count}) is obtained by building a heavy path decomposition on the given tree, and computing (i) and (ii).
The improvement for $(\epsilon,\delta)$-differential privacy comes from the fact that if the $L_1$-norm of a vector is bounded by $L$, and the $L_{\infty}$-norm is bounded by $\Delta$, then by Hölder's inequality, the $L_{2}$-norm is bounded by $\sqrt{L\Delta}$. We use this fact several times in the analysis to improve the linear dependency on $\ell$ to a dependency on $\sqrt{\ell\Delta}$.

The improvement for $q$-grams stems from two facts: 1) since we only want to compute the counts for patterns of a fixed length, we do not need to compute the full set of candidate strings and can significantly simplify the algorithm, and 2) for $(\epsilon,\delta)$-differential privacy, we show that we can avoid computing noisy counts for patterns whose true count is 0, without violating privacy.


\subsection{Related Work}
{\bf Private sequential pattern mining and $q$-gram extraction.} As mentioned above, two very related well-studied problems are those of \emph{frequent sequential pattern mining}~\cite{DBLP:conf/cikm/BonomiX13,DBLP:conf/sigmod/ZhangXX16,DBLP:conf/kdd/ChenFDS12,app12042131,DBLP:conf/securecomm/KhatriDH19} and \emph{q-gram extraction}~\cite{DBLP:conf/nips/KimGKY21,DBLP:conf/ccs/ChenAC12}. We also mentioned that these works do not provide theoretical guarantees on the error. We now describe some of the relevant solutions in more detail and argue why our error bounds yield an improvement over previous approaches. In the following, by ``error" we either mean the additive error on the counts, or the value of $\alpha$ in the definition of $\approxmining$ and $\approxqgram$, depending on which problem the paper is considering.

Several papers use a variant of the trie-based approach described in the technical overview~\cite{DBLP:conf/cikm/BonomiX13, DBLP:conf/sigmod/ZhangXX16,DBLP:conf/ccs/ChenAC12,DBLP:conf/kdd/ChenFDS12,DBLP:conf/nips/KimGKY21,DBLP:conf/securecomm/KhatriDH19}. 
Some of them~\cite{DBLP:conf/cikm/BonomiX13,DBLP:conf/kdd/ChenFDS12} use a \emph{prefix trie}, i.e., they build a trie of strings that appear frequently as a prefix of a string in the data set. While this can be used for estimating frequent elements for large universes, it does not solve the problem of counting $q$-grams or substrings, since it is not possible to reconstruct frequent substrings from frequent prefixes (e.g. there could be a frequent substring that is not the suffix of any frequent prefix).

Khatri~et~al.~\cite{DBLP:conf/securecomm/KhatriDH19} also construct a tree in an online manner (however, appending new letters \emph{in front} instead of at the end), and compute the count of how often a string in the trie appears as a \emph{suffix} as a counting measure, which leads to a similar issue when trying to estimate substring counts. Zhang~et~al.~\cite{DBLP:conf/sigmod/ZhangXX16} consider the problem of computing tree counts, where every leaf is an element of the universe, and the count of every node is the number of leaves below which are in the data set. As a first step, they build a pruned tree such that every element in the tree has at least a certain count. As a second step, they compute a differentially private count of every leaf. The count of any node is computed as the sum of the counts of the leaves below. However, when computing the count of a node from the noisy counts of the leaves, the noise values sum up, so this can yield large errors for internal nodes. 

Chen~et~al.~\cite{DBLP:conf/ccs/ChenAC12} and Kim~et~al.~\cite{DBLP:conf/nips/KimGKY21} consider the problem of computing frequent $q$-grams and use a trie-based approach very similar to the simple approach described in the technical overview. While they use different heuristics and parameter tuning to improve accuracy in practice, the worst-case theoretical bound on the error remains $\Omega(\ell^2)$. 
We note that  Kim~et~al.~\cite{DBLP:conf/nips/KimGKY21}, when computing the frequent $q$-grams, use a candidate set consisting of any frequent $(q-1)$-gram appended by any frequent $1$-gram, and its intersection with the set consisting of any frequent $1$-gram appended by any frequent $(q-1)$-gram. We use a similar idea in the pruning process in the first step of our algorithm, except that we \emph{double} the length of the $q$-grams at every step. 

Prado~et~al.~\cite{app12042131} use a graph-based approach for finding frequent patterns. In their work, a bi-partite graph is produced linking users to (frequent) patterns that appear in the user's document; this graph is then privatized using an edge-differentially private algorithm. Note that this algorithm satisfies a weaker privacy guarantee, as one user string can be connected to several patterns. 

{\bf Further related work.} Similar problems which have been extensively studied are frequent itemset mining \cite{DBLP:journals/pvldb/ZengNC12, DBLP:journals/compsec/ChengSXL15, DBLP:conf/sp/WangLJ18,DBLP:journals/access/XiongCHTHCQ18,DBLP:conf/wise/DingCJ17}, where the data set consists of a set of items for each person and the goal is to find frequent subsets, and frequent sequence mining~\cite{DBLP:conf/icde/XuSCLX15, Liang2023,DBLP:journals/compsec/ChengSXTL15,DBLP:conf/dasfaa/LiWYCYL18,DBLP:conf/icic/ZhouL18,DBLP:conf/ideas/LeKAY19}, where the goal is to find frequent \emph{subsequences}, instead of frequent \emph{substrings}. Both of these problems are different from those considered in this work since they do not require a pattern to occur as a consecutive substring. Another related line of work is differentially private pattern counting on a \emph{single} string with different definitions of neighboring (and therefore different privacy guarantees)~\cite{DBLP:conf/icdm/ChenDFLPS21, DBLP:conf/innovations/Steiner24,DBLP:conf/icml/FichtenbergerHU23}. The problems of frequent string mining and frequent sequence mining have also been studied in the local model of differential privacy~\cite{DBLP:conf/icde/WangXYHSS018,DBLP:journals/jksucis/WangH22,DBLP:conf/satml/ChadhaCDFHJMT24}.

%
\subsection{Outline}
In Section~\ref{sec:prelims}, we recall some basic definitions and results on differential privacy and strings. In Section~\ref{sec:puredp}, we give our data structures for $\epsilon$-differential privacy, proving Theorem~\ref{thm:main_eps} and Theorem~\ref{thm:epsQgrams}. In Section~\ref{sec:epsdel}, we give our data structures for $(\epsilon,\delta)$-differential privacy, proving Theorem~\ref{thm:main_epsdel} and Theorem~\ref{thm:qgrams_eps_delta}. In Section~\ref{sec:tree_count}, we give our general tree counting algorithm and prove Theorems~\ref{thm:tree_count} and \ref{thm:tree_count_epsdel}. Finally, in Section~\ref{sec:lower_bound}, we prove our lower bounds.

\section{Preliminaries}\label{sec:prelims}
In this work, we use $\log$ to denote the binary logarithm and $\ln$ to denote the natural logarithm. We use $[a,b]$ to denote the interval of integers $\{a,a+1,a+2,\dots,b-1,b\}$.

\subsection{String Preliminaries.}
A string~$S$ of length~$|S|=\ell$ is a sequence $S[0]\cdots S[\ell-1]$ of $\ell$ characters drawn from an alphabet $\Sigma$ of  size $|\Sigma|$. The string $S[i]\cdots S[j]$, 
denoted $\substring{S}{i}{j}$, is called a \emph{substring} of $S$; $\substring{S}{0}{j-1}$ and $\substring{S}{i}{\ell-1}$  are called the $j^{th}$ \emph{prefix} and  $i^{th}$ \emph{suffix} of $S$, respectively.  Let $P$ and $S$ be strings over an alphabet $\Sigma$. We say that $P$ \emph{occurs} in $S$ iff there exists an $i$ such that $\substring{S}{i}{i+ |P|-1} = P$. We use $\Sigma^{[a,b]}$ to denote all strings $S$ which satisfy $a\leq |S|\leq b$. The concatenation of two strings $A,B$ is defined as $A\cdot B=A[0]\cdots A[|A|-1]B[0]\cdots B[|B|-1]$ and is sometimes simply denoted by $AB$. Given any two sets of strings $\mathcal{A},\mathcal{B}$, we define $\mathcal{A}\circ\mathcal{B}=\{A\cdot B\mid A\in\mathcal{A}, B\in\mathcal{B}\}$. 

A \emph{trie} for a collection of strings $\mathcal{C}=S_1, \ldots, S_n$, denoted $T_\mathcal{C}$, is a rooted labeled tree, such that: (1) The label on each edge is a character of one or more $S_i$. (2) Each string in $\mathcal{C}$ is represented by a path in $T_\mathcal{C}$ going from the root down to some node (obtained by concatenating the labels on the edges of the path).
(3) Each root-to-leaf path represents a string from $\mathcal{C}$.
(4) Common prefixes of two strings share the same path maximally.
For a node $v\ \in T_\mathcal{C}$, we let $\str{v}$ denote the string obtained from concatenating the labels on the path's edges from the root to $v$. 

A \emph{compacted trie} is obtained from $T_\mathcal{C}$ by dissolving all nodes except the root, the branching nodes, and the leaves, and concatenating the labels on the edges incident to dissolved nodes to obtain \emph{string} labels for the remaining edges. 
The \emph{string depth} $\sd(v)=|\str{v}|$ of any branching node $v$ is the total length of the strings labeling the path from the root to $v$. The \emph{frequency} $f(v)$ of node $v$ is the number of leaves in the subtree rooted at $v$. Throughout the paper, we assume that $f(v)$ and $\sd(v)$ values are stored in each branching node $v$. This can be obtained in $O(|\mathcal{C}|)$ time with a single trie traversal using $O(|\mathcal{C}|)$ total extra space. The compacted trie can be represented in $O(|\mathcal{C}|)$ space and computed in $O(|\mathcal{C}|)$ time~\cite{DBLP:conf/cpm/KasaiLAAP01}.

Let $S$ be a string over an alphabet $\Sigma$. The \emph{suffix tree} of a string $S$, denoted by $\ST(S)$, is the compacted trie of the set of all suffixes of $S$. Each leaf of $\ST(S)$ is identified by the starting position in $S$ of the suffix it represents. We can construct $\ST(S)$ in $O(\textrm{sort}(|S|, |\Sigma|)$, where $\textrm{sort}(x, y)$ is the time to sort $|S|$ integers from an universe of size $|\Sigma|$~\cite{DBLP:conf/focs/Farach97, FCFM2000}. For instance, for any polynomial-size alphabet $\Sigma=[0,|S|^{O(1)}]$, we can construct $\ST(S)$ in $O(|S|)$ time.


\subsection{Privacy Preliminaries}
We collect some important definitions and results here. 
First, we define the notion of sensitivity.
   \begin{definition}[$L_p$-sensitivity]\label{def:sensitivity}
     Let $f$ be a function $f:\chi^{*}\rightarrow \mathbb{R}^k$ for some universe $\chi$. The \emph{$L_p$-sensitivity of $f$} is defined as
    $
    \max_{\setstr\sim\setstr'}||f(\setstr)-f(\setstr')||_{p}$. 
    \end{definition}

Next, we recall simple composition of differentially private algorithms.
\begin{lemma}[Simple Composition~\cite{DBLP:conf/stoc/DworkL09}]\label{lem:composition_theorem} Let $A_1$ be an $(\epsilon_1,\delta_1)$-differentially private algorithm $\chi^{*}\rightarrow \mathrm{range}(A_1)$ and $A_2$ an $(\epsilon_2,\delta_2)$-differentially private algorithm $\chi^{*}\times\mathrm{range}(A_1)\rightarrow \mathrm{range}(A_2)$. Then $A_1\circ A_2$ is $(\epsilon_1+\epsilon_2,\delta_1+\delta_2)$-differentially private. \end{lemma}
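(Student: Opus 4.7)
The plan is to establish the defining inequality of Definition~\ref{def:privacy} for the composed algorithm. Fix neighboring databases $\setstr\sim\setstr'$. I would first reduce to proving the stronger statement that the joint mechanism $M(\setstr) = (A_1(\setstr), A_2(\setstr, A_1(\setstr)))$ is $(\epsilon_1+\epsilon_2,\delta_1+\delta_2)$-differentially private, from which the stated composition follows by post-processing (projection to the $A_2$-coordinate). For any measurable event $U$ in the joint output space, the natural starting point is to condition on the intermediate value $y=A_1(\setstr)$:
\[
\Pr[M(\setstr)\in U] = \sum_y \Pr[A_1(\setstr)=y]\cdot\Pr[A_2(\setstr,y)\in U_y],
\]
with $U_y=\{z:(y,z)\in U\}$, and analogously for $\setstr'$ (integrals replace sums in the continuous case).

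For the pure case $\delta_1=\delta_2=0$, the argument is a clean pointwise application of the two guarantees in sequence: apply $(\epsilon_2,0)$-DP of $A_2$ inside the sum to replace $\Pr[A_2(\setstr,y)\in U_y]$ by $e^{\epsilon_2}\Pr[A_2(\setstr',y)\in U_y]$, then apply $(\epsilon_1,0)$-DP of $A_1$ to the resulting weighted sum (which is simply an expectation of a bounded function of $A_1$'s output), yielding the overall multiplicative factor $e^{\epsilon_1+\epsilon_2}$.

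For the general case, the key technical obstacle is that a naive two-step application of the $(\epsilon_i,\delta_i)$-DP inequalities causes the additive slacks to be multiplied by the exponentials, giving an inflated $e^{\epsilon_2}\delta_1+e^{\epsilon_1}\delta_2$ instead of $\delta_1+\delta_2$. To avoid this, I would use an additive decomposition of approximate-DP distributions: for any $(\epsilon,\delta)$-DP mechanism $A$ and neighbors $\setstr\sim\setstr'$, define $\eta(y)=\max(0,\Pr[A(\setstr)=y]-e^{\epsilon}\Pr[A(\setstr')=y])$ and $\tilde{P}(y)=\Pr[A(\setstr)=y]-\eta(y)$. Then $\tilde{P}(y)\leq e^{\epsilon}\Pr[A(\setstr')=y]$ pointwise, and $\sum_y\eta(y)\leq\delta$ (the latter by applying the DP inequality to the event $\{y:\eta(y)>0\}$, on which the sum telescopes).

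Applying this decomposition to $A_1$ (parameters $\epsilon_1,\delta_1$, producing $\tilde{P}_1,\eta_1$) and, for each $y$, to $A_2(\cdot,y)$ (parameters $\epsilon_2,\delta_2$, producing $\tilde{P}_2(y,\cdot),\xi_2(y,\cdot)$), each summand expands as $(\tilde{P}_1(y)+\eta_1(y))(\tilde{P}_2(y,z)+\xi_2(y,z))$. The ``good--good'' product $\tilde{P}_1(y)\tilde{P}_2(y,z)$ is bounded pointwise by $e^{\epsilon_1+\epsilon_2}\Pr[A_1(\setstr')=y]\Pr[A_2(\setstr',y)=z]$ and sums to the target term $e^{\epsilon_1+\epsilon_2}\Pr[M(\setstr')\in U]$. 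The three cross terms regroup into $\eta_1(y)\Pr[A_2(\setstr,y)=z] + \tilde{P}_1(y)\xi_2(y,z)$; summing over $(y,z)\in U$ and using $\sum_y\eta_1(y)\leq\delta_1$ together with $\sum_z\xi_2(y,z)\leq\delta_2$ (for each fixed $y$) bounds their total contribution by $\delta_1+\delta_2$. The main subtlety throughout is this additive decomposition; the remainder is straightforward bookkeeping.
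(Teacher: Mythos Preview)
The paper does not supply its own proof of this lemma; it is stated with a citation to \cite{DBLP:conf/stoc/DworkL09} and used as a black box. Your argument is correct and is one of the standard proofs of simple composition: the additive decomposition $\Pr[A(\setstr)=y]=\tilde P(y)+\eta(y)$ with $\tilde P(y)\le e^{\epsilon}\Pr[A(\setstr')=y]$ and $\sum_y\eta(y)\le\delta$ is exactly the device that prevents the $\delta$-slacks from picking up extra exponential factors, and your bookkeeping of the four cross terms (regrouping $\eta_1\tilde P_2+\eta_1\xi_2=\eta_1\cdot\Pr[A_2(\setstr,y)=\cdot]$ and bounding $\sum_y\tilde P_1(y)\le 1$) is clean and yields the claimed $\delta_1+\delta_2$.
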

%
\subsubsection{Laplace mechanism }
\begin{definition}\label{def:laplace}
The \emph{Laplace distribution} centered at $0$ with scale $b$ is the distribution with probability density function 
$
    f_{\Lap(b)}(x)=\frac{1}{2b}\exp\left(\frac{-|x|}{b}\right)$.
 We use $Y\approx \Lap(b)$ or just $\Lap(b)$ to denote a random variable $Y$ distributed according to $f_{\Lap(b)}(x)$.
\end{definition}
The Laplace distribution satisfies the following tailbound:

\begin{lemma}[Laplace tailbound]\label{lem:lap_tail}
    If $Y\approx \Lap(b)$, then 
    $\Pr[|Y|\geq tb]=e^{-t}$.
\end{lemma}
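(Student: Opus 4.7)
The plan is to verify the tail bound by a direct integration against the Laplace density. By Definition~\ref{def:laplace}, $Y \approx \Lap(b)$ has density $f_{\Lap(b)}(x) = \frac{1}{2b}\exp(-|x|/b)$, which is symmetric about $0$. I would exploit this symmetry to reduce the two-sided tail $\Pr[|Y| \geq tb]$ to twice the one-sided tail $\Pr[Y \geq tb]$, so the computation collapses to a single integral on $[tb, \infty)$:
\[
\Pr[|Y| \geq tb] \;=\; 2\int_{tb}^{\infty} \frac{1}{2b}\, e^{-x/b}\, dx.
\]
Then the calculation is a standard exponential antiderivative: since $\frac{d}{dx}\!\left(-e^{-x/b}\right) = \frac{1}{b}e^{-x/b}$, the integral evaluates to $\left[-e^{-x/b}\right]_{tb}^{\infty} = e^{-t}$, matching the claimed bound.

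There is essentially no obstacle here: the proof is a single routine computation and does not rely on any other result in the paper beyond the definition of $\Lap(b)$. The only points worth flagging are that the statement implicitly assumes $t \geq 0$ (otherwise the right-hand side would exceed $1$ while the probability cannot), and that the key structural feature being exploited is the absolute value in the exponent of the Laplace density, which produces the clean one-sided exponential tail after the symmetry reduction.
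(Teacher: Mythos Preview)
Your proof is correct; the paper states this lemma as standard background without proof, and your direct integration against the Laplace density is the natural way to verify it.
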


Next, we define the Laplace mechanism.
\begin{lemma}[Laplace Mechanism~\cite{Dwork2006}] \label{lem:Laplacemech} Let $f$ be any function $f:\chi^{*}\rightarrow \mathbb{R}^k$ with $L_1$-sensitivity $\Delta_1$. Let $Y_i\approx \Lap(\Delta_1/\epsilon)$ for $i\in[k]$. The mechanism is defined as
$A(\setstr)=f(\setstr)+(Y_1,\dots,Y_k)$
satisfies $\epsilon$-differential privacy.
\end{lemma}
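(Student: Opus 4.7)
This is the classical Laplace mechanism argument: I would fix an arbitrary pair of neighboring datasets $\setstr \sim \setstr'$, write $u = f(\setstr)$ and $u' = f(\setstr')$, set $b = \Delta_1/\epsilon$, and compare the densities of $A(\setstr)$ and $A(\setstr')$ pointwise. The goal is to show that the density ratio is uniformly bounded by $e^\epsilon$, which integrates to the $\epsilon$-differential privacy guarantee (with $\delta=0$).

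The first step is to write the density of $A(\setstr)$ at a point $z \in \mathbb{R}^k$. Since the $Y_i$ are independent $\Lap(b)$ variables, the joint density factorizes as
\[
p_{A(\setstr)}(z) \;=\; \prod_{i=1}^k \frac{1}{2b}\exp\!\left(-\frac{|z_i - u_i|}{b}\right),
\]
and analogously for $A(\setstr')$ with $u'$ in place of $u$. Taking the ratio, the normalization constants cancel and I obtain
\[
\frac{p_{A(\setstr)}(z)}{p_{A(\setstr')}(z)} \;=\; \exp\!\left(\frac{1}{b}\sum_{i=1}^k \bigl(|z_i - u'_i| - |z_i - u_i|\bigr)\right).
\]

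The second step is to bound each coordinate by the reverse triangle inequality, $|z_i - u'_i| - |z_i - u_i| \leq |u_i - u'_i|$, and then apply the $L_1$-sensitivity hypothesis $\sum_{i=1}^k |u_i - u'_i| \leq \Delta_1$ together with $b = \Delta_1/\epsilon$, yielding
\[
\frac{p_{A(\setstr)}(z)}{p_{A(\setstr')}(z)} \;\leq\; \exp\!\left(\frac{1}{b}\sum_{i=1}^k |u_i - u'_i|\right) \;\leq\; \exp\!\left(\frac{\Delta_1}{b}\right) \;=\; e^\epsilon.
\]
Finally, for any measurable $U \subseteq \mathbb{R}^k$, integrating this pointwise bound gives $\Pr[A(\setstr) \in U] \leq e^\epsilon \Pr[A(\setstr') \in U]$, which is exactly Definition~\ref{def:privacy} with $\delta = 0$. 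Since $\setstr \sim \setstr'$ was arbitrary, the mechanism is $\epsilon$-differentially private.

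Since this is a textbook result, there is no real technical obstacle; the only things to get right are the explicit use of independence when factorizing the density, the correct direction of the triangle inequality, and the final integration step that converts the pointwise density ratio into the probability inequality of Definition~\ref{def:privacy}.
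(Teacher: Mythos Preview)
Your proof is correct and is the standard density-ratio argument for the Laplace mechanism. The paper itself does not prove this lemma: it is stated in the appendix as background with a citation to~\cite{Dwork2006} and no proof is given, so there is nothing to compare against beyond noting that your argument is exactly the classical one.
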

Lemma~\ref{lem:lap_tail} and Lemma~\ref{lem:Laplacemech} give the following corollary.
\begin{corollary}\label{cor:Laplace_mech}
     Let $f$ be a function $f:\chi^{*}\rightarrow \mathbb{R}^k$ for some universe $\chi$ with $L_1$-sensitvity at most $\Delta_1$. Then there exists an $\epsilon$-differentially private algorithm $A$ which for any $\setstr\in \chi^{*}$ outputs $A(\setstr)$ satisfying $||A(\setstr)-f(\setstr)||_{\infty}\leq \epsilon^{-1}\Delta_1\ln(k/\beta)$ with probability at least $1-\beta$.
\end{corollary}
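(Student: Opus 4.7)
The plan is to instantiate $A$ as the Laplace mechanism applied coordinatewise to $f$ and then bound the $L_\infty$-error via a union bound. Concretely, I would define
\[
A(\setstr) := f(\setstr) + (Y_1, \ldots, Y_k),
\]
where $Y_1, \ldots, Y_k$ are independent draws from $\Lap(\Delta_1/\epsilon)$. Since $f$ has $L_1$-sensitivity at most $\Delta_1$ and the noise is scaled to $\Delta_1/\epsilon$, $\epsilon$-differential privacy of $A$ follows immediately from the Laplace Mechanism (Lemma~\ref{lem:Laplacemech} in the appendix).

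For the accuracy guarantee, I would apply the Laplace tail bound (Lemma~\ref{lem:lap_tail}), which gives $\Pr[|Y| \geq t] = \exp(-t/b)$ for $Y \sim \Lap(b)$. Setting $b = \Delta_1/\epsilon$ and $t = \epsilon^{-1}\Delta_1 \ln(k/\beta)$ yields $\Pr[|Y_i| \geq t] \leq \beta/k$ for each coordinate $i \in \{1,\dots,k\}$. A union bound over the $k$ coordinates then produces
\[
\Pr\left[\,\|A(\setstr)-f(\setstr)\|_\infty \geq \epsilon^{-1}\Delta_1 \ln(k/\beta)\,\right] \;\leq\; \sum_{i=1}^k \Pr[|Y_i| \geq t] \;\leq\; \beta,
\]
which is exactly the claimed utility statement.

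There is no serious technical obstacle here: both ingredients are stated as background results in the appendix, and the argument amounts to a standard union bound over coordinates. The only bookkeeping to be careful about is matching the exact constant in the appendix's Laplace tail bound to the $\ln(k/\beta)$ appearing in the statement, and confirming that the noise scale $\Delta_1/\epsilon$ simultaneously (i) suffices for privacy under the $L_1$-sensitivity assumption and (ii) matches the scaling used inside the tail bound.
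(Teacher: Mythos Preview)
Your proposal is correct and matches the paper's approach exactly: the paper does not give a standalone proof but simply states that the corollary follows from the Laplace Mechanism (Lemma~\ref{lem:Laplacemech}) and the Laplace tail bound (Lemma~\ref{lem:lap_tail}), which is precisely the instantiation-plus-union-bound argument you wrote out.
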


\subsubsection{Gaussian Mechanism}

\begin{definition}[Normal Distribution] The \emph{normal distribution} centered at $0$ with variance $\sigma^2$ is the distribution with the probability density function
\begin{align*}
f_{N(0,\sigma^2)}(x)=\frac{1}{\sigma\sqrt{2\pi}}\exp\left(-\frac{x^2}{2\sigma^2}\right)
\end{align*}

\end{definition}
 We use $Y\approx N(0,\sigma^2)$ or sometimes just $N(0,\sigma^2)$ to denote a random variable $Y$ distributed according to $f_{N(0,\sigma^2)}$.

The normal distribution satisfies the following tailbound.
\begin{lemma}[Gaussian Tailbound]\label{lem:gaussian_tail}
Let $Y\approx N(\mu,\sigma^2)$. Then 
    $\Pr[|Y-\mu|\geq t]\leq 2e^{-\frac{t^2}{2\sigma^2}}$.

\end{lemma}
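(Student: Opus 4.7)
The plan is to prove this standard concentration inequality via the Chernoff (Cramér) method, which reduces the tail probability to a moment generating function calculation. The approach avoids any direct manipulation of the Gaussian density's tail integral (which would lead to more delicate estimates like Mills' ratio) and instead exploits the particularly clean form of the MGF of a Gaussian.

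First I would reduce to the centered case by setting $Z = Y - \mu$, so that $Z \approx N(0,\sigma^2)$ and the claim becomes $\Pr[|Z| \geq t] \leq 2 e^{-t^2/(2\sigma^2)}$. Next I would compute the moment generating function of $Z$: for any $\lambda \in \mathbb{R}$,
\[
\mathbb{E}[e^{\lambda Z}] \;=\; \frac{1}{\sigma\sqrt{2\pi}} \int_{-\infty}^{\infty} e^{\lambda z} e^{-z^2/(2\sigma^2)} \, dz \;=\; e^{\lambda^2 \sigma^2/2},
\]
which follows from completing the square inside the exponent.

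Then I would apply Markov's inequality to the nonnegative random variable $e^{\lambda Z}$ for arbitrary $\lambda > 0$:
\[
\Pr[Z \geq t] \;=\; \Pr\!\left[e^{\lambda Z} \geq e^{\lambda t}\right] \;\leq\; e^{-\lambda t}\, \mathbb{E}[e^{\lambda Z}] \;=\; e^{-\lambda t + \lambda^2 \sigma^2 / 2}.
\]
Optimizing the right-hand side by setting the derivative of the exponent to zero gives $\lambda^\star = t/\sigma^2$, which substituted back yields $\Pr[Z \geq t] \leq e^{-t^2/(2\sigma^2)}$. By the symmetry of $N(0,\sigma^2)$, the random variable $-Z$ has the same distribution as $Z$, so $\Pr[Z \leq -t] \leq e^{-t^2/(2\sigma^2)}$ as well. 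A union bound then completes the argument.

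There is no real obstacle here: this is a textbook two-sided Gaussian tail bound, and the only substantive computation is the MGF of a centered Gaussian. The only thing to be mildly careful about is ensuring the optimization of $\lambda > 0$ is valid (which it is, since for $t \leq 0$ the bound is trivial as $2 e^{-t^2/(2\sigma^2)} \geq 2 \geq \Pr[|Z| \geq t]$, so one may assume $t > 0$ and hence $\lambda^\star = t/\sigma^2 > 0$).
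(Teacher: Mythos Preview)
Your proof is correct. The paper does not actually give a proof of this lemma; it is stated in the appendix as a standard background fact without argument. Your Chernoff/MGF derivation is the textbook route and fully justifies the bound, so there is nothing to compare against and no gap to address.
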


 Next, we define the Gaussian mechanism.
 
\begin{lemma}[Gaussian mechanism~\cite{BlumDMN05}]\label{lem:gaussianmech}
 Let $f$ be any function $f:\chi^{*}\rightarrow \mathbb{R}^k$ with $L_2$-sensitivity $\Delta_2$.
Let $\epsilon\in(0,1)$, $c^2>2\ln(1.25/\delta)$, and $\sigma\geq c\Delta_2(f)/\epsilon$. Let $Y_i\approx N(0,\sigma^2)$ for $i\in[k]$. Then the mechanism defined as $
A(\setstr)=f(\setstr)+(Y_1,\dots,Y_k)$
satisfies $(\epsilon,\delta)$-differential privacy.
\end{lemma}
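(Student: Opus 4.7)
The plan is to analyze the \emph{privacy loss random variable} at arbitrary neighboring databases and convert a high-probability bound on it into an $(\epsilon,\delta)$-DP guarantee. Fix any neighboring $\setstr \sim \setstr'$ and let $p_{\setstr}$, $p_{\setstr'}$ denote the densities of $A(\setstr)$ and $A(\setstr')$. I would invoke the standard characterization that if $\Pr_{o \sim A(\setstr)}[\ln(p_{\setstr}(o)/p_{\setstr'}(o)) > \epsilon] \leq \delta$, then for every measurable $U \subseteq \mathbb{R}^k$ one obtains $\Pr[A(\setstr) \in U] \leq e^{\epsilon}\Pr[A(\setstr') \in U] + \delta$ by splitting $U$ into the subsets on which the log-ratio is above or below $\epsilon$ and using the definition of the density ratio on the latter. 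Thus the entire task reduces to tail-bounding this log-ratio.

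For the reduction to one dimension, since $p_{\setstr}$ and $p_{\setstr'}$ are spherical Gaussians with common covariance $\sigma^2 I_k$ that differ only in their means, I would set $v = f(\setstr') - f(\setstr)$ (so $\|v\|_2 \leq \Delta_2$) and parameterize $o = f(\setstr) + Y$ with $Y \sim N(0, \sigma^2 I_k)$. Expanding the two squared-distance exponents yields
\[\ln\frac{p_{\setstr}(o)}{p_{\setstr'}(o)} = \frac{\|v\|_2^2 - 2\langle v, Y\rangle}{2\sigma^2}.\]
The crucial observation is that $\langle v, Y\rangle$ is a \emph{univariate} Gaussian with distribution $N(0, \sigma^2 \|v\|_2^2)$, because $Y$ is rotationally symmetric. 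Consequently the log-ratio is itself a single Gaussian variable and the high-dimensional problem collapses to a one-dimensional Gaussian tail computation.

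Rearranging the event $\{\ln(p_{\setstr}(o)/p_{\setstr'}(o)) > \epsilon\}$ and normalizing by $Z = \langle v, Y\rangle/(\sigma\|v\|_2) \sim N(0,1)$, it becomes $\{Z < \|v\|_2/(2\sigma) - \sigma\epsilon/\|v\|_2\}$. A short monotonicity check in $\|v\|_2$ shows the threshold is maximized at $\|v\|_2 = \Delta_2$; substituting $\sigma \geq c\Delta_2/\epsilon$ bounds it by $\epsilon/(2c) - c$. Hence the failure probability is at most $\Pr[Z > c - \epsilon/(2c)]$, which by the classical Mills-ratio tail bound $\Pr[Z > t] \leq \frac{1}{t\sqrt{2\pi}}e^{-t^2/2}$ is dominated by a multiple of $e^{-c^2/2}$.

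The main obstacle is constant-matching: one must verify that the sub-leading factors, namely the $e^{\epsilon/2}$ arising from expanding $(c - \epsilon/(2c))^2 = c^2 - \epsilon + \epsilon^2/(4c^2)$, together with the $1/(c\sqrt{2\pi})$ prefactor from the tail bound and the small correction from the denominator $c - \epsilon/(2c)$ in place of $c$, multiply out to at most $1.25$ under the hypotheses $\epsilon \in (0,1)$ and $c^2 > 2\ln(1.25/\delta)$. The specific constant $1.25$ in the statement is tuned precisely so that this product is at most $\delta$; this constant-hunting step is the only nontrivial calculation and is exactly the argument from Dwork--Roth Appendix~A.
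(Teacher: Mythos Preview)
The paper does not prove this lemma: it is stated in Appendix~\ref{sec:privacy_background} as background with a citation to~\cite{BlumDMN05} and no proof is given. Your proposal is a correct sketch of the standard argument (essentially the Dwork--Roth Appendix~A proof), so there is nothing in the paper to compare it against beyond noting that the authors treat this as a black-box citation.
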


Lemma~\ref{lem:gaussian_tail} and Lemma~\ref{lem:gaussianmech} give the following corollary.
\begin{corollary}\label{cor:gaussian_mech}
     Let $f$ be a function $f:\chi^{*}\rightarrow \mathbb{R}^k$ for some universe $\chi$ with $L_2$-sensitvity at most $\Delta_2$. Then there exists an $(\epsilon,\delta)$-differentially private algorithm $A$ which for any $\setstr\in \chi^{*}$ outputs $A(\setstr)$ satisfying $||A(\setstr)-f(\setstr)||_{\infty}\leq 2\epsilon^{-1}\Delta_2 \sqrt{\ln(2/\delta)\ln(2k/\beta)}$ with probability at least $1-\beta$.    
\end{corollary}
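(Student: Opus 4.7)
The plan is a direct two-step combination: apply the Gaussian mechanism for privacy, then use a Gaussian tail bound together with a union bound over the $k$ coordinates for accuracy. The algorithm $A$ is obtained by sampling independent Gaussian noise variables $Y_1,\dots,Y_k$ and outputting $A(\setstr) = f(\setstr) + (Y_1,\dots,Y_k)$, with the noise scale chosen so that Lemma~\ref{lem:gaussianmech} certifies $(\epsilon,\delta)$-differential privacy. Since $f$ has $L_2$-sensitivity at most $\Delta_2$, the required scale is $\sigma = c\cdot \Delta_2 \epsilon^{-1}\sqrt{\ln(2/\delta)}$ for the constant $c$ appearing in the statement of Lemma~\ref{lem:gaussianmech}.

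For accuracy, observe that $||A(\setstr)-f(\setstr)||_\infty = \max_{i\in[k]}|Y_i|$. For each fixed $i$, Lemma~\ref{lem:gaussian_tail} bounds $\Pr[|Y_i|>t]$ by an expression of the form $2\exp(-t^2/(2\sigma^2))$. Setting this probability to $\beta/k$ and solving for $t$ yields $t = \sigma\sqrt{2\ln(2k/\beta)}$. A union bound over the $k$ coordinates then gives
\[
\Pr\left[\max_{i\in[k]}|Y_i| > \sigma\sqrt{2\ln(2k/\beta)}\right] \leq \beta.
\]
Substituting in the value of $\sigma$ produces the claimed bound $2\epsilon^{-1}\Delta_2\sqrt{\ln(2/\delta)\ln(2k/\beta)}$, up to the constant absorbed into the $2$.

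The only obstacle is bookkeeping on constants: one must check that the precise constant in the Gaussian mechanism noise scale (from Lemma~\ref{lem:gaussianmech}) and the precise constant in the Gaussian tail bound (from Lemma~\ref{lem:gaussian_tail}) multiply to give the factor of $2$ in the stated corollary. Since this is parallel to the proof of Corollary~\ref{cor:Laplace_mech} (Laplace mechanism plus Laplace tail bound plus union bound), no new idea is required beyond checking these constants; everything else is routine.
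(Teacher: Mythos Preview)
Your proposal is correct and matches the paper's approach exactly: the paper does not spell out a proof but simply states that the corollary follows from the Gaussian mechanism (Lemma~\ref{lem:gaussianmech}) together with the Gaussian tail bound (Lemma~\ref{lem:gaussian_tail}), which is precisely the two-step combination (privacy via the mechanism, accuracy via tail bound plus union bound) that you describe.
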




\section{Counting with Pure Differential Privacy}\label{sec:puredp}
In this section, we show how to construct our data structure for Theorem~\ref{thm:main_eps}.
\mainEpsThm*

For simplicity, we prove the theorem for $\Delta=\ell$. Since $\counts_{\Delta}(P,S)\leq \counts(P,S)$ holds for all $\Delta\leq \ell$ and all strings $S\in\Sigma^{[1,\ell]}$, the same proof also works for $\Delta<\ell$. 

We make the following key observation, which we will use repeatedly.
\begin{observation}\label{lem:pattern_counts}
   For any document $S\in\setstr$ and any length $m\leq \ell$, the cumulative count $\sum_{P\in \Sigma^m}\counts(P,S)$ of all length-$m$ substrings of $S$ is at most $\ell$.
\end{observation}
This observation holds because, since the length of $S$ is bounded by $\ell$, $S$ has at most $\ell-m+1\leq \ell$ fragments of length $m$ (one for each possible starting position). Thus the total number of occurrences of its substrings of length $m$ is bounded by the same quantity. Observation~\ref{lem:pattern_counts} implies the following.
\begin{corollary}\label{cor:l1sens_same_length}
     The $L_1$-sensitivity of $(\counts(P,\setstr))_{P\in \Sigma^m}$ is bounded by $2\ell$, for any $m\leq \ell$. 
\end{corollary}

%

\paragraph{Algorithm Overview} Our construction algorithm has six main steps.

 \textit{Step 1.} We run an $\epsilon$-differentially private algorithm to compute a set $\candidates$ of \emph{candidate frequent} strings 
 such that the true count in $\setstr$ of any string not included in $\candidates$ is small, and the set $\candidates$ is not too large. 

  \textit{Step 2.} We build the trie $\candidatetrie$ of the candidate set $\candidates$, and compute its \emph{heavy path decomposition}.  

  \textit{Step 3.} For every root $r$ of a heavy path, we compute a differentially private estimate of $\counts(\str{r},\setstr)$. 

 \textit{Step 4.}
  For every heavy path $p=v_0,v_1,\dots, v_{|p|-1}$ with root $r=v_0$, we consider the \emph{difference sequence} of counts along the path: i.e., for a node $v_i$ on the path and its parent $v_{i-1}$, the $i$th element in the difference sequence is given by $\counts(\str{v_i},\setstr)-\counts(\str{v_{i-1}},\setstr)$, for $i=1,\dots, |p|-1$. We use the binary tree mechanism \cite{Dwork2010} to compute a differentially private estimate of all \emph{prefix sums}, $\mathrm{sums}^*_p$, of each of the difference sequences. 
  
  \textit{Step 5.} We compute the noisy count of every node $v$ in $\candidatetrie$, thus of every substring in $\candidates$: For every node $v_i$ on a heavy path $v_0,v_1,\dots, v_{|p|-1}$,  we set $\approxcount(\str{v_i},\setstr) = \approxcount(\str{v_0},\setstr) + \mathrm{sums}^*_p[i]$.
    
  \textit{Step 6.}
  We finally prune $\candidatetrie$ removing the subtrees with sufficiently small noisy counts, and return the pruned trie as our $\epsilon$-differentially private data structure.
\\\\
In the rest of the section, we give the details of the algorithm.

 \subsection{Step 1: Computing a Candidate Set}\label{sec:construct_P_epsilon_DP}

As a first step, we show how to compute the candidate set $\candidates$ while satisfying $\epsilon$-differential privacy by doing the following.


\begin{enumerate}
    \item[1.1] Inductively construct sets $\candidatespruned_{2^0},\dots, \candidatespruned_{2^j}$, with $j=\lfloor \log \ell\rfloor$, where $\candidatespruned_{2^k}$ contains only strings of length $2^k$ which appear sufficiently often as substrings in $\setstr$ according to noisy counts. Let $\epsilon_1=\epsilon/(\lfloor\log \ell\rfloor +1)$. For every $k=1,\ldots, \lfloor \log \ell\rfloor$, we use an $\epsilon_1$-differentially private algorithm to construct $\candidatespruned_{2^k}$ from $\candidatespruned_{2^{k-1}}$, such that the full algorithm fulfills $\epsilon$-differential privacy. 
    \item[1.2] 
    Compute, for each $m\leq \ell$ which is not a power of $2$, the set $\candidates_m$ of all strings $P$ of length $m$ whose length-$2^k$ prefix and suffix are in $\candidatespruned_{2^k}$, 
    where $k=\lfloor\log m\rfloor$; 
    \item[1.3] The set $\candidates$ is finally obtained as the union of all the sets computed in Steps 1.1 and 1.2.
\end{enumerate}
More precisely, the algorithm works as follows. 

\paragraph{Computing $\candidatespruned_{2^0}$.} First, we compute a noisy count for all letters $\gamma\in\Sigma$. By Corollary~\ref{cor:l1sens_same_length}, the sensitivity of $(\counts(\gamma,\setstr))_{\gamma\in \Sigma}$ is bounded by $2\ell$.  
Let $\beta_1=\beta/(\lfloor\log \ell\rfloor +1)$.
Using the algorithm given by Corollary~\ref{cor:Laplace_mech}, we compute an estimate $\approxcount(\gamma,\setstr)$ such that  $$\max_{\gamma\in\Sigma}|\approxcount(\gamma,\setstr)-\counts(\gamma,\setstr)|\leq\frac{2\ell}{\epsilon_1}\ln(|\Sigma|/\beta_1)\leq \frac{2\ell}{\epsilon_1}\ln\left(\max\{\ell^2 n^2,|\Sigma|\}/\beta_1\right) 
$$ with probability at least $1-\beta_1$, while satisfying $\epsilon_1$-differential privacy. In the following, we let $\alpha=\frac{2\ell}{\epsilon_1}\ln(\max\{\ell^2 n^2,|\Sigma|\}/\beta_1)$. 
         We keep a pruned candidate set $\candidatespruned_{2^0}$ of strings of length 1 with a noisy count at least $\thresh=2\alpha$, i.e., $\candidatespruned_{2^0}$ consists of all $\gamma$ with $\approxcount(\gamma,\setstr)\geq \thresh$. If $|\candidatespruned_{2^0}|> n\ell$, we stop the algorithm and return a fail message.

\paragraph{Computing $\candidatespruned_{2^k}$ for $k=1,\dots,\lfloor \log \ell \rfloor$.} Given $\candidatespruned_{2^{k-1}}$ with $|\candidatespruned_{2^{k-1}}|\leq n\ell$ and $
        k\geq 1$, we compute $\candidatespruned_{2^k}$ as follows: first, we construct the set $\candidatespruned_{2^{k-1}}\circ \candidatespruned_{2^{k-1}}$ of the at most $(n\ell)^2$ strings of length $2^k$ that are a concatenation of two strings from $\candidatespruned_{2^{k-1}}$. Again by Corollary~\ref{cor:l1sens_same_length}, the $L_1$-sensitivity of $\counts$ for all strings in $\candidatespruned_{2^{k-1}}\circ \candidatespruned_{2^{k-1}}$ is bounded by $2\ell$.  We use the algorithm of  Corollary~\ref{cor:Laplace_mech} to estimate the counts of all strings in $\candidatespruned_{2^{k-1}}\circ \candidatespruned_{2^{k-1}}$ up to an additive error at most  $\frac{2\ell}{\epsilon_1}\ln(\ell^2 n^2/\beta_1)\leq \alpha$ with probability at least $1-\beta_1$ and $\epsilon_1$-differential privacy. The set $\candidatespruned_{2^k}$ is the set of all strings in $\candidatespruned_{2^{k-1}}\circ \candidatespruned_{2^{k-1}}$ with a noisy count at least $\thresh$. Again, if $|\candidatespruned_{2^k}|> n\ell$, we stop the algorithm and return a fail message.

\begin{example}\label{ex:candidatespruned}
    Consider the database $\setstr=\{$\texttt{aaaa, abe, absab, babe, bee, bees}$\}$ from Example~\ref{ex:doc_substr_count}. The exact versions, $\candidatespruned_{2^0}^{\times}$, $\candidatespruned_{2^1}^\times$, $\candidatespruned_{2^2}^\times$,  of the sets $\candidatespruned_{2^0}$, $\candidatespruned_{2^1}$, $\candidatespruned_{2^2}$ (i.e., the sets computed according to exact rather than noisy counts) with a frequency threshold $\tau=1$ are 
   \begin{align*}
       & \candidatespruned_{2^0}^\times=\{\texttt{a,b,e,s}\}\\
       & \candidatespruned_{2^1}^\times=\{\texttt{aa,ab,ba,be,bs,ee,es,sa}\}\\
       & \candidatespruned_{2^2}^\times=\{\texttt{aaaa,absa,babe,bees,bsab}\}
   \end{align*}
   The noisy, differentially private sets $\candidatespruned_{2^0}$, $\candidatespruned_{2^1}$, $\candidatespruned_{2^2}$ that are actually computed by the construction algorithm may contain (with sufficiently low probability) substrings that do not occur at all in $\setstr$; or conversely, some substrings occurring in $\setstr$ may be missing from the sets. A possible outcome of a noisy computation of these sets for $\setstr$ is
   \begin{align*}
       & \candidatespruned_{2^0}=\{\texttt{a,b,e,s}\}\\
       & \candidatespruned_{2^1}=\{\texttt{aa,ab,ba,be,bs,ee,sa}\}\\
       & \candidatespruned_{2^2}=\{\texttt{aaaa,absa,babe,bsab,aaab}\}
   \end{align*}  
   Note that the string \texttt{es} is missing from $\candidatespruned_{2^1}$ although it occurs once in $\setstr$, and consequently, \texttt{bees} is missing from $\candidatespruned_{2^2}$; conversely, \texttt{aaab} is in $\candidatespruned_{2^2}$ although it does not occur in $\setstr$.
\end{example} 
        
\paragraph{Constructing $\candidates$.} From $\candidatespruned_{2^0},\dots, \candidatespruned_{2^j}$, we now construct, for each $m$ which is not a power of two, a set of candidate patterns $\candidates_m$, without taking $\setstr$ further into account. 
        Specifically, for any fixed length $m$ with $2^k<m<2^{k+1}$, for $k=0,\dots, \lfloor \log \ell \rfloor$, we define $\candidates_m$ as the set of all strings $P$ of length $m$ such that $P[0,2^{k}-1$]$\in \candidatespruned_{2^k}$ and $P[m-2^{k},m-1]\in \candidatespruned_{2^k}$. For $m=2^k$ for some $k\in \{0,\dots, \lfloor \log \ell \rfloor\}$, we define $\candidates_m=\candidatespruned_{2^k}$. We finally define $\candidates=\bigcup_{m=1}^{\ell}\candidates_m$.
        
\begin{example}\label{ex:candidates}
    Consider the same database $\setstr$ and sets $\candidatespruned_{2^0}$, $\candidatespruned_{2^1}$, $\candidatespruned_{2^2}$ as in Example~\ref{ex:candidatespruned}. The set $\candidates_3$ is 
    \[\candidates_3=\{\texttt{aaa}, \texttt{aab}, \texttt{aba}, \texttt{abe}, \texttt{abs}, \texttt{baa}, \texttt{bab}, \texttt{bee}, \texttt{bsa}, \texttt{eee}, \texttt{saa}, \texttt{sab}\}\]
    $\candidates_3$ contains all possible strings of length $3$ whose length-$2$ prefix and suffix are in $\candidatespruned_{2^1}$; these strings can be obtained by taking all ordered pairs of strings $Q_1,Q_2\in\candidatespruned_{2^1}$ such that the length-$1$ suffix of $s_1$ is equal to the length-$1$ prefix of $s_2$: in this case, we say that $Q_1$ and $Q_2$ have a suffix/prefix overlap of length $1$. Note that many strings of $\candidates_3$ (e.g. \texttt{aba}) do not occur at all in $\setstr$ although, in this example, all strings of $\candidatespruned_{2^1}$ do. This is a normal behavior of our construction algorithm, and it is crucial to guarantee $\epsilon$-differential privacy.

    The set $\candidates_5$ is obtained from $\candidatespruned_{2^2}$ by taking all ordered pairs of strings with a suffix/prefix overlap of length $3$: we have
    \[\candidates_5=\{\texttt{aaaaa}, \texttt{aaaab},\texttt{absab}\}\]
\end{example}

We now prove that the algorithm above satisfies the following lemma.         
\begin{lemma}\label{lem:pruning}
    Let $\setstr=S_1,\dots, S_{n}$ be a database of documents. For any $\epsilon>0$ and $0<\beta<1$ there exists an $\epsilon$-differentially private algorithm, which computes a candidate set $\candidates\subseteq\Sigma^{[1,\ell]}$ enjoying the following two properties with probability at least $1-\beta$:
    \begin{itemize}
        \item For any $P\in \Sigma^{[1,\ell]}$ not in $\candidates$,  $\counts(P,\setstr)=O(\epsilon^{-1}\ell\log \ell \log (\max\{\ell^2 n^2,|\Sigma|\} /\beta))$,
        \item $|\candidates|\leq n^2\ell^3$.
    \end{itemize}
\end{lemma}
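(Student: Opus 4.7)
For each power of two $2^i$ with $i = 0, 1, \ldots, I:=\lfloor\log\ell\rfloor$, the plan is to compute a surviving set $\mathcal{F}_i\subseteq\Sigma^{2^i}$ and then assemble $\candidates$ via the overlapping-prefix-suffix observation from the technical overview. Set $B_0 = \Sigma$ and, for $i \geq 1$, $B_i = \mathcal{F}_{i-1}\circ\mathcal{F}_{i-1}$, so the only length-$2^i$ strings whose counts we ever query are concatenations of two surviving length-$2^{i-1}$ strings. For each $P \in B_i$ release $\tilde c(P) = \counts(P,\setstr)+\Lap(b)$ with $b = 2\ell(I+1)/\epsilon$, and include $P$ in $\mathcal{F}_i$ iff $\tilde c(P) > \tau$, where $\tau = b\ln(B_{\max}(I+1)/\beta)$ and $B_{\max} = \max\{|\Sigma|, n^2\ell^2\}$. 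Finally, for each $m \in [1,\ell]$ let $i(m) = \lfloor\log m\rfloor$ and define
$\candidates^{(m)} = \{P \in \Sigma^m : \substring{P}{0}{2^{i(m)}-1} \in \mathcal{F}_{i(m)},\ \substring{P}{m-2^{i(m)}}{m-1} \in \mathcal{F}_{i(m)}\}$;
output $\candidates = \bigcup_{m=1}^{\ell}\candidates^{(m)}$. Since $2^{i(m)}\leq m<2^{i(m)+1}$, the two length-$2^{i(m)}$ windows overlap and together cover $P$, so the membership test is well-defined.

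\textbf{Privacy.} Corollary~\ref{cor:l1sens_same_length} gives that the count vector $(\counts(P,\setstr))_{P\in B_i}$ has $L_1$-sensitivity at most $2\ell$, so releasing it via the Laplace mechanism with scale $b=2\ell(I+1)/\epsilon$ is $(\epsilon/(I+1))$-differentially private. Simple composition (Lemma~\ref{lem:composition_theorem}) over the $I+1$ rounds yields $\epsilon$-differential privacy overall; the selection of $\mathcal{F}_i$ from the noisy counts and the construction of $\candidates$ are pure post-processing.

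\textbf{Utility and main obstacle.} By Corollary~\ref{cor:Laplace_mech} applied within each round with failure probability $\beta/(I+1)$, and a union bound over rounds, with probability at least $1-\beta$ every noise term has magnitude $\leq\tau$; call this event $E$. I will show by induction on $i$ that on $E$, $|\mathcal{F}_i|\leq n\ell$ and $|B_i|\leq B_{\max}$. The base case $|B_0|=|\Sigma|\leq B_{\max}$ is immediate; for the step, any $P\in\mathcal{F}_i$ satisfies $\tilde c(P)>\tau$ and $|\text{noise}|\leq\tau$, so $\counts(P,\setstr)\geq 1$, and at most $n(\ell-2^i+1)\leq n\ell$ length-$2^i$ strings appear at all in $\setstr$, bounding $|\mathcal{F}_i|$ and hence $|B_{i+1}|=|\mathcal{F}_i|^2\leq n^2\ell^2\leq B_{\max}$. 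This also yields $|\candidates|\leq\sum_{m=1}^{\ell}|\mathcal{F}_{i(m)}|^2\leq\ell\cdot n^2\ell^2=n^2\ell^3$. For the count bound, if $P\notin\candidates$ with $|P|=m$, some length-$2^{i(m)}$ substring $Q$ of $P$ is missing from $\mathcal{F}_{i(m)}$; a missing $Q\notin\mathcal{F}_i$ either had $\tilde c(Q)\leq\tau$ (so $\counts(Q,\setstr)\leq 2\tau$ on $E$) or lay outside $B_i$, forcing $i\geq 1$ and making one of its two length-$2^{i-1}$ halves missing from $\mathcal{F}_{i-1}$. Recursing on $i$ (the base $i=0$ cannot enter the second branch since $B_0=\Sigma$), we obtain a substring $Q^*$ of $P$ with $\counts(Q^*,\setstr)\leq 2\tau$, so $\counts(P,\setstr)\leq\counts(Q^*,\setstr)\leq 2\tau=O(\epsilon^{-1}\ell\log\ell\log(\max\{n^2\ell^2,|\Sigma|\}/\beta))$, as claimed. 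The main obstacle is the apparent circularity between the bound $|\mathcal{F}_i|\leq n\ell$ (which keeps $|B_i|\leq B_{\max}$) and the choice of $\tau$ (which already uses $|B_i|\leq B_{\max}$): resolving it by the induction above is what lets $\tau$ depend only on $\log(n\ell|\Sigma|)$, avoiding the naive $\Omega(\ell\cdot 2^i\log|\Sigma|)$ scale that a blanket noising of all of $\Sigma^{2^i}$ would require and that would destroy the target $O(\ell\,\mathrm{polylog})$ error.
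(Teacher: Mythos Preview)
Your approach is essentially identical to the paper's: build $\mathcal{F}_i$ (the paper's $\candidatespruned_{2^i}$) by thresholding noisy counts over $\mathcal{F}_{i-1}\circ\mathcal{F}_{i-1}$, then assemble $\candidates$ from overlapping power-of-two prefixes and suffixes. Your explicit recursion for the count bound (descending through halves until you hit a string that was actually thresholded) is in fact a bit cleaner than the paper's terse statement.

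The one genuine gap is precisely the circularity you flag at the end: your induction establishes $E \Rightarrow |B_i|\leq B_{\max}$, but your bound $\Pr[E]\geq 1-\beta$ already applied Corollary~\ref{cor:Laplace_mech} with at most $B_{\max}$ coordinates per round, which presupposes $|B_i|\leq B_{\max}$. The induction alone does not close this loop. The paper sidesteps the issue by inserting an explicit abort: if ever $|\candidatespruned_{2^k}|>n\ell$, return FAIL. This deterministically caps the number of noise draws per round at $\max\{|\Sigma|,n^2\ell^2\}$, so the union bound for $E$ is immediate, and then on $E$ the abort provably never triggers. Alternatively, you can make your version rigorous by a sequential conditioning argument: $|B_0|=|\Sigma|\leq B_{\max}$ holds always; conditioned on $E_0\cap\cdots\cap E_{i-1}$ (which by your induction forces $|B_i|\leq B_{\max}$), the fresh round-$i$ Laplace noises are independent of the past and satisfy $\Pr[E_i^c\mid E_0\cap\cdots\cap E_{i-1}]\leq B_{\max}\cdot e^{-\tau/b}=\beta/(I+1)$; summing over $i$ gives $\Pr[E^c]\leq\beta$. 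Either fix is short, but as written the circularity is acknowledged rather than resolved.
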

\begin{proof}
We use the algorithm described above. 

{\bf Privacy.} Since there are $\lfloor\log \ell\rfloor +1$ choices of  $k$, and for each we run an $\epsilon_1=\epsilon/(\lfloor\log \ell\rfloor +1)$-differentially private algorithm, their composition is $\epsilon$-differentially private by Lemma~\ref{lem:composition_theorem}. Since constructing $\candidates$ from $\candidatespruned_{2^0},\dots, \candidatespruned_{2^k}$ is post-processing, the entire algorithm is $\epsilon$-differentially private.
        
{\bf Accuracy.}  Since there are $\lfloor\log \ell\rfloor +1$ choices of $k$, by the choice of $\beta_1$, and by the union bound, all the error bounds hold together with probability at least $1-\beta$. Let $E$ be the event that all the error bounds hold simultaneously. In the following, we condition on $E$. Conditioned on $E$, for any $P\in \candidatespruned_{2^k}$, $k=0,\dots,\lfloor \log \ell \rfloor$, we have that $\counts(P,\setstr)\geq \thresh-\alpha\geq \alpha>1$, i.e., $P$ appears at least once as a substring in $\setstr$. Note that any string in $\setstr$ has at most $\ell$ substrings of length $2^k$. Since there are $n$ strings in $\setstr$, we have $|\candidatespruned_{2^k}|\leq n\ell$, conditioned on $E$. Thus, conditioning on $E$, the algorithm does not abort. Additionally, any $P$ of length $2^k$ which is not in $\candidatespruned_{2^k}$ satisfies $\counts(P,\setstr)<\thresh+\alpha=3\alpha$.
        Now, 
        any pattern $P$ of length $2^k<m<2^{k+1}$ for some $k\in\{0,\dots,\lfloor \log \ell \rfloor\}$ satisfies that $\counts(P,\setstr)\leq \counts(P[0, 2^k-1],\setstr)$ and $\counts(P,\setstr)\leq \counts(P[m-2^k, m-1],\setstr)$. Since $P\notin \candidates_m$ if and only if either $P[0, 2^k-1]\notin \candidatespruned_{2^k}$ or $P[m-2^k, m-1]\notin \candidatespruned_{2^k}$, we have that $P\notin \candidates_m$ implies $\counts(P,\setstr)<3\alpha$, conditioned on $E$. As $P[0,2^k-1]$ and $P[m-2^k, m-1]$ cover $P$ completely, we have $|\candidates_m|\leq |\candidatespruned_{2^k}|^2\leq (n\ell)^2$. Therefore, $|\candidates|=\sum_{m=1}^{\ell} |\candidates_m|\leq n^2 \ell^3$. 
    \end{proof}

The proof of Lemma~\ref{lem:C_construction_algo} describes an implementation of the algorithm above to compute $\candidates$ in polynomial time and space.

\begin{lemma}\label{lem:C_construction_algo}
    Given a database of $n$ documents $\setstr=S_1,\dots, S_{n}$ over $\Sigma^{[1,\ell]}$, a set $\candidates$ satisfying the properties of Lemma~\ref{lem:pruning} can be computed in time $O\left(n^2\ell^3\log\log(n\ell) +n^2\ell^4\right)$ 
    and space $O(n^2\ell^4)$ .
\end{lemma}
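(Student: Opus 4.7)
The plan is to implement each step of the algorithm from the proof of Lemma~\ref{lem:pruning} efficiently using two standard data structures: a generalized suffix tree $\gst(\setstr)$ augmented so that each node stores the number of leaves in its subtree, and, for each level $k$, a hash table storing $\candidatespruned_{2^k}$ keyed by string content. After alphabet reduction on the at most $n\ell$ distinct characters that actually occur in $\setstr$, $\gst(\setstr)$ can be built in $O(n\ell)$ time and space, and with van Emde Boas child navigation at branching nodes it supports a query ``return $\counts(P,\setstr)$'' in $O(|P|\log\log(n\ell))$ time.

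Using these primitives I would proceed level by level exactly as in Lemma~\ref{lem:pruning}. For $k=0$, enumerate the at most $n\ell$ distinct characters actually appearing in $\setstr$ (found in a single pass over the input), sample a $\Lap(2\ell/\epsilon_1)$ value for each, and threshold at $\thresh$. For $k\geq 1$, iterate over the at most $(n\ell)^2$ pairs in $\candidatespruned_{2^{k-1}}\times\candidatespruned_{2^{k-1}}$, form the length-$2^k$ concatenation, query $\gst(\setstr)$ for its count, add a fresh Laplace sample, and threshold. The abort checks $|\candidatespruned_{2^k}|\leq n\ell$ are just size comparisons. Summing over all levels gives $\sum_{k=1}^{\lfloor\log\ell\rfloor}(n\ell)^2\cdot 2^k\log\log(n\ell)=O(n^2\ell^3\log\log(n\ell))$, the first term in the claimed bound.

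Once every $\candidatespruned_{2^k}$ is built, constructing $\candidates$ is pure post-processing. For each length $m$ with $2^k<m<2^{k+1}$, iterate over all pairs $(A,B)\in\candidatespruned_{2^k}\times\candidatespruned_{2^k}$, test in $O(2^k)=O(\ell)$ time whether the length-$(2^{k+1}-m)$ suffix of $A$ equals the corresponding prefix of $B$, and if so output the length-$m$ string $A\cdot B[2^{k+1}-m,\,2^k-1]$; for $m=2^k$ simply copy $\candidatespruned_{2^k}$. Each length $m$ contributes $O((n\ell)^2\cdot\ell)$ running time, and the output has total size at most $\sum_{m=1}^{\ell}(n\ell)^2\cdot m=O(n^2\ell^4)$, which dominates both the running time and the space budget.

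The main obstacle I expect is the base case $k=0$: the Laplace mechanism formally samples noise for \emph{every} $\gamma\in\Sigma$, and a naive implementation would cost $\Theta(|\Sigma|)$ time, whereas the stated bound has no $|\Sigma|$ dependence. I would resolve this by sampling the number of ``phantom survivors'' (characters $\gamma$ with $\counts(\gamma,\setstr)=0$ whose Laplace noise nevertheless exceeds $\thresh$) from $\mathrm{Bin}(|\Sigma|-|\{\gamma\in\setstr\}|,\,p)$, where by the Laplace tail bound $p\leq \tfrac12 e^{-\thresh\epsilon_1/(2\ell)}=O((\beta_1/|\Sigma|)^2)$, and then drawing their identities uniformly from the unseen letters and their noise values conditioned on exceeding $\thresh$. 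This realizes exactly the same output distribution as the naive Laplace mechanism and is therefore still $\epsilon$-DP, but it runs in $O(1)$ expected additional time; the phantom set is empty with probability $1-O(\beta_1)$, so its effect can be absorbed into the failure event of Lemma~\ref{lem:pruning}.
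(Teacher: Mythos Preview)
Your proposal is correct and reaches the same bounds, but the implementation differs from the paper's in two places. First, the paper avoids re-reading whole strings when moving from $\candidatespruned_{2^{k-1}}$ to $\candidatespruned_{2^k}$: it represents each surviving string by a pointer into the suffix tree and uses a \emph{substring concatenation} data structure (queries in $O(\log\log(n\ell))$ time) so that each of the $(n\ell)^2$ pairs costs $O(\log\log(n\ell))$ rather than your $O(2^k\log\log(n\ell))$; this gives $O(n^2\ell^2\log\ell\log\log(n\ell))$ for Step~1 versus your $O(n^2\ell^3\log\log(n\ell))$, though both are dominated by the $O(n^2\ell^4)$ writing cost. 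Second, for the overlap tests the paper builds an LCE structure over $\candidatespruned_{2^k}$ and answers each overlap check in $O(1)$, whereas you do a direct $O(2^k)$ character comparison; again the asymptotics coincide because writing out $\candidates$ already costs $\Theta(n^2\ell^4)$. Your approach is more elementary (only a suffix tree plus hashing), while the paper's is more modular and has smaller intermediate constants. One point where you are actually \emph{more} careful than the paper: the paper simply iterates over all $\gamma\in\Sigma$ in Phase~$0$ and does not discuss the resulting $\Theta(|\Sigma|)$ cost, whereas your phantom-survivor sampling realizes the same output distribution without touching unseen letters. Note, though, that sampling $\mathrm{Bin}(|\Sigma|-s,p)$ and the identities of survivors gives an \emph{expected} $O(1)$ cost, not worst-case; to match the stated worst-case bound you should combine this with the existing abort rule $|\candidatespruned_{2^0}|\le n\ell$, which already truncates the bad tail.
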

\begin{proof}
We construct the suffix tree $\ST$ of the string $S = S_1 \$_1S_2 \$_2\ldots S_n \$_n$, where $\$_1,\ldots,\$_n\notin \Sigma$, and store, within each branching node $v$, the ID $\leaf(v)$ of the leftmost descending leaf.
We also construct a substring concatenation data structure over $S$. A substring concatenation query consists of four integers $i_1, i_2, j_1, j_2$. If $S[i_1,j_1]\circ S[i_2,j_2]$ is a substring of $S$ it returns a pair of indices $i,j$ such that $S[i,j] = S[i_1,j_1]\circ S[i_2,j_2]$, together with a pointer to the shallowest (i.e., closest to the root) branching node $v$ of $\ST$ such that $S[i,j]$ is a prefix of $\str{v}$; it returns a $\nil$ pointer otherwise. Such a data structure can be constructed in $O(n\ell)$ time and space, and answers queries in $O(\log\log (n\ell))$ time~\cite{DBLP:journals/algorithmica/BilleCCGSVV18, BGVV2014}. 

\textbf{Main algorithm.} The algorithm proceeds in phases for increasing values of $k=0,1,\ldots,\lfloor \log \ell \rfloor$. In Phase $k=0$, it computes a noisy counter $c_\gamma^*$ for each letter $\gamma\in\Sigma$, and stores in $\candidatespruned_{2^0}$ all and only the letters such that $c_\gamma^*\geq\tau$. To do so, it first computes the true frequency of each letter by traversing the suffix tree $\ST$: the frequency of $\gamma$ is given by $f(v_\gamma)$, where $v_\gamma$ is the shallowest branching node of $\ST$ such that $\str{v_\gamma}$ starts with $\gamma$ (the frequency is $0$ if no such node exists, i.e., $\gamma$ does not occur in the database $\setstr$). Noisy counts $c_\gamma^*$ are then obtained as in Lemma~\ref{lem:pruning}. $\candidatespruned_{2^0}$ is represented by a list of pairs $\langle\gamma,p\rangle$, where $p$ is a pointer to $v_\gamma$ ($p=\nil$ if $v_\gamma$ does not exist).

In any phase $k>0$, the algorithm 
performs two steps: (1) it computes a noisy counter $c_Q^*$ for every string $Q\in\candidatespruned_{2^{k-1}}\circ \candidatespruned_{2^{k-1}}$ and stores a representation of $\candidatespruned_{2^{k}}$ consisting of all and only the strings $Q$ such that $c_Q^*\geq\tau$ (it aborts the procedure and returns FAIL if $|\candidatespruned_{2^{k}}|>n\ell$); (2) it constructs sets $\candidates_m$ for every $2^k<m<2^{k+1}$.  
We now describe each of these steps in more detail.

\textbf{Step (1)}~  $\candidatespruned_{2^{k}}$ is represented as a list of pairs $\langle Q,p\rangle$, where $Q$ is a string of length $2^k$ and $p$ is a pointer to the shallowest node $v$ of $\ST$ such that $Q$ is a prefix of $\str{v}$ (or $p=\nil$ if no such $v$ exists). Consider all ordered pairs of items from the $\candidatespruned_{2^{k-1}}$ list. 
For each such pair $\left(\langle Q_1,p_1\rangle,\langle Q_2,p_2\rangle \right)$, if $p_1,p_2\neq\nil$, let $v_1,v_2$ be the nodes of $\ST$ $p_1$ and $p_2$ point to, respectively. We ask a concatenation query with $i_1=\leaf(v_1)$, $i_2=\leaf(v_2)$, $j_1=i_1+2^{k-1}-1$, $j_2=i_2+2^{k-1}-1$; if the result is a pointer $p$ to a node $v$ of $\ST$, we compute $c_{Q_1Q_2}^*$ by adding noise to $f(v)$ as in Lemma~\ref{lem:pruning}, append the pair $\langle Q_1Q_2,p\rangle$ to the list of $\candidatespruned_{2^{k}}$ if $c_{Q_1Q_2}^*\geq\tau$, or discard it and move on to the next pair of elements from $\candidatespruned_{2^{k-1}}$ otherwise. In all the other cases, that is, if either $p_1=\nil$ or $p_2=\nil$ or the result of the concatenation query is $p=\nil$, the true frequency of $Q_1\cdot Q_2$ in $\setstr$ is $0$, thus we compute $c_{Q_1Q_2}^*$ by adding noise to $0$ and append $\langle Q_1Q_2,p\rangle$ to the list of $\candidatespruned_{2^{k}}$ only if $c_{Q_1Q_2}^*\geq\tau$. We stop the procedure and return FAIL whenever the number of triplets in $\candidatespruned_{2^{k}}$ exceeds $n\ell$.

\textbf{Step (2)}~ For any $2^k<m<2^{k+1}$, set $\candidates_m$ consists of all strings $Q$ of length $m$ whose prefix $Q_1=Q[0,2^k-1]$ and suffix $Q_2=Q[m-2^k,m-1]$ are both in $\candidatespruned_{2^k}$. Note that, since $m<2^{k+1}$, this implies that $Q_1$ and $Q_2$ have a suffix/prefix overlap of length $2^{k+1}-m$, or in other words, $Q_1[m-2^k,2^k-1]$ and $Q_2[0,2^k-1]$ have a common prefix of length $2^{k+1}-m$. The construction of $\candidates_m$ thus reduces to finding all pairs of strings from $\candidatespruned_{2^k}$ with a suffix/prefix overlap of length $2^{k+1}-m$. To do this efficiently for all $m$, we preprocess $\candidatespruned_{2^k}$ to build in $O(\textrm{sort}(2^k n\ell, |\Sigma|))$ time a data structure that occupies $O(2^k n\ell)$ space and answers \emph{Longest Common Extension} (LCE) queries in $O(1)$ time~\cite{FCFM2000,harel1984fast, DBLP:conf/latin/BenderF00} (since $2^k n\ell$ upper bounds the total length of the strings in $\candidatespruned_{2^k}$). An LCE query consists of a pair of strings $Q_1,Q_2\in \candidatespruned_{2^k}$ and two positions $q_1,q_2< 2^k$; the answer $\textrm{LCE}_{Q_1,Q_2}(q_1,q_2)$ is the length of the longest common prefix of the suffixes $Q_1[q_1,2^k-1]$ and $Q_2[q_2,2^k-1]$.

Once the LCE data structure is constructed, consider all ordered pairs of items from the list of $\candidatespruned_{2^{k}}$. 
For each pair $\left(\langle Q_1,p_1\rangle,\langle Q_2,p_2\rangle\right)$ and each $2^k<m<2^{k+1}$, compute $\textrm{LCE}_{Q_1,Q_2}(m-2^k,0)$: if the result is $2^{k+1}-m$, then $Q_1$ and $Q_2$ have a suffix-prefix overlap of length $2^{k+1}-m$, thus the concatenation of $Q_1$ and the suffix of $Q_2$ immediately following the overlap, that is $Q=Q_1[0,2^k-1]\cdot Q_2[2^{k+1}-m,2^k-1]$, belongs to $\candidates_m$.
Once again, we represent $Q_m$ with a list of pairs $\langle Q,p\rangle$, where $p$ is a pointer to the shallowest node $v$ of $\ST$ such that $Q$ is a prefix of $\str{v}$, with $p=\nil$ if $Q$ does not appear in $\setstr$ (thus no such node $v$ exists). The pointer $p$ for a string $Q=Q_1[0,2^k-1]\cdot Q_2[2^{k+1}-m,2^k-1]$ is obtained as the result of the concatenation query with $i_1=\leaf(v_1)$, $i_2=\leaf(v_2)+2^{k+1}-m$, $j_1=i_1+2^k-1$, $j_2=i_2+2^k-1$; $p=\nil$ if either $p_1=\nil$ or $p_2=\nil$.

\textbf{Time and space analysis.} Consider Step 1 and focus first on the space occupancy. By construction, at any phase $k=0,\ldots,\lfloor \log \ell \rfloor$, $\candidatespruned_{2^k}$ is represented by a list of $O(n\ell)$ pairs, each consisting of a string of length $2^k$ and a pointer. Storing the list in Phase $k$ thus requires space $O(2^kn\ell)$, implying a total space $O(n\ell^2)$ to represent all the lists. As for the running time, in each phase $k=1,\ldots,\lfloor \log \ell \rfloor$ we consider $O(n^2\ell^2)$ pairs of elements from $\candidatespruned_{2^{k-1}}$; for each pair, we ask one concatenation query, requiring $O\left(\log\log (n\ell)\right)$ time (as to ask the query we do not need to read the string of the pair), compute a noisy counter in $O(1)$ time, and write a new pair consisting of a string of length $2^{k+1}$ and a pointer if the counter exceeds $\tau$. Since we abort the procedure if the number of written pairs exceeds $n\ell$, the total writing time in Phase $k$ is $O(2^{k+1}n\ell)$, thus $O\left(\sum_{k=0}^{\lfloor \log\ell\rfloor} 2^{k+1}n\ell\right)=O(n\ell^2)$ over all phases. Therefore, the total time for Step 1 is dominated by the concatenation queries, and summing over the $\lfloor \log\ell\rfloor+1$ phases it is $O(n^2\ell^2\log\log(n\ell)\log\ell)$.

Let us now focus on Step 2, and consider first the space occupancy. For each phase $k=0,\ldots,\lfloor\log\ell\rfloor$, we construct an LCE data structure that occupies $O(2^k n\ell)$ space; however, at the end of Phase $k$, this data structure can be discarded, thus the space it occupies at any point of the algorithm is $O\left(n\ell^2\right)$. For each $2^k<m<2^{k+1}$, we represent $\candidates_m$ with a list of up to $n^2\ell^2$ pairs consisting of one string of size $m$ and one pointer, thus using space $O\left(mn^2\ell^2\right)$. The total space to store all lists is thus $O\left(\sum_{m=1}^\ell mn^2\ell^2\right)=O(n^2\ell^4)$, which dominates the space occupancy of the algorithm.
The running time of Step 2 is as follows. For each $k$, we construct the LCE data structure for $\candidatespruned_{2^k}$ in time $O\left(\textrm{sort}(2^kn\ell,|\Sigma|)\right)$, thus the total time for all phases is $O\left(\sum_{k=0}^{\lfloor\log\ell\rfloor}\left(\textrm{sort}(2^kn\ell,|\Sigma|)\right)\right)=O\left(\textrm{sort}(n\ell^2,|\Sigma|)\right)$. To compute the list representation of $\candidates_m$, we ask $O\left(n^2\ell^2\right)$ LCE and concatenation queries in total $O\left(n^2\ell^2\log\log(n\ell)\right)$ time; 
and we use $O\left(mn^2\ell^2\right)$ time to write all the list pairs. Summing over all $m=1,\ldots,\ell$, we obtain a total time $O\left(n^2\ell^4\right)$ to write all the lists and $O\left(n^2\ell^3\log\log(n\ell)\right)$ total time for the LCE and concatenation queries. The total running time for Step 2 is thus $O\left(\textrm{sort}(n\ell^2,|\Sigma|) + n^2\ell^3\log\log(n\ell) +n^2\ell^4\right) = O\left(n^2\ell^3\log\log(n\ell) +n^2\ell^4\right)$. 
\end{proof}


\subsection{Steps 2 to 4: Heavy Path Decomposition and Difference Sequences}\label{subsec:heavypath}
The next steps are to arrange $\candidates$ in a trie $\candidatetrie$ and compute its heavy path decomposition.
Note that the number of nodes $|\candidatetrie|$ is bounded by the total length of strings in $\candidates$, which is in turn bounded by $n^2\ell^4$. 
This part of the algorithm exploits the way the counts associated with nodes of $\candidatetrie$ can vary in neighboring databases. 
Before describing step 2 to 4 in detail we will first prove some properties on counts of nodes on paths in $T_C$.

\subsubsection{Properties of counts on paths.}
A first key observation is as follows: Consider the counts of the string $\str{v}$ associated with any node $v$ of $\candidatetrie$ in two neighboring databases $\setstr$ and $\setstr' = \setstr\setminus\{S\}\cup\{S'\}$.
The difference of the counts of $\str{v}$ in $\setstr$ and $\setstr'$
depends only on how many times $\str{v}$ occurs in the strings $S$ and $S'$ by which $\setstr$ and $\setstr'$ differ:
\begin{observation}\label{obs:sensitvity_node}
     For any node $v$ in $\candidatetrie$ and any two neighboring databases $\setstr$ and $\setstr' = \setstr\setminus\{S\}\cup\{S'\}$, $|\counts(\str{v}, \setstr)-\counts(\str{v}, \setstr')|= |\counts(\str{v},S) - \counts(\str{v},S')|$.
\end{observation}
 

Let us now focus on how the counts computed in neighboring databases can vary along paths of $\candidatetrie$. For any path $p=v_0,v_1,\dots, v_{|p|-1}$ in the trie $\candidatetrie$, we define the \emph{difference sequence of $\counts$ on $p$} as the $(|p|-1)$-dimensional vector $\diff_p(\setstr)[i]=\counts(\str{v_i},\setstr)-\counts(\str{v_{i-1}},\setstr)$ for $i=1\dots |p|-1$: see Figure~\ref{fig:binary_tree_mechanism}(a) for an example. Note that the counts associated to the nodes of a path are non-increasing when descending the path: this is because, if $v_j$ is a descendant of $v_i$, then $\str{v_i}$ is a prefix of $\str{v_j}$, thus $\counts(\str{v_i},S)\geq \counts(\str{v_j},S)$. This allows us to prove
the following lemma.

\begin{lemma}\label{lem:sensitvity_path} Let $\setstr$ and $\setstr' = \setstr\setminus\{S\}\cup\{S'\}$ be two neighboring databases. Then,
    for any path $p$ with root $r$ in $\candidatetrie$, $||\diff_p(\setstr)-\diff_p(\setstr')||_1\leq \counts(\str{r},S)+ \counts(\str{r},S')$.
\end{lemma}
\begin{proof} We have
\begin{align*}
&||\diff_p(\setstr)-\diff_p(\setstr')||_1 \\&=\sum_{i=1}^{|p|-1} |\counts(\str{v_i},S)-\counts(\str{v_{i-1}},S)
 -  (\counts(\str{v_i},S')- \counts(\str{v_{i-1}},S')) |\\
& \leq \sum_{i=1}^{|p|-1} |\counts(\str{v_i},S)-\counts(\str{v_{i-1}},S)| 
 + \sum_{i=1}^{|p|-1}|\counts(\str{v_i},S')-\counts(\str{v_{i-1}},S')|.
\end{align*}
Note that if $v'$ is a descendant of $v$, then $\str{v}$ is a prefix of $\str{v'}$ and therefore $\counts(\str{v},S)\geq \counts(\str{v'},S)$. 
Since $\counts(\str{v_i},S)$ is monotonically non-increasing in $i$ for $p=v_0, v_1,\dots, v_{|p|-1}$ with $v_0=r$, 
from $0\leq \counts(\str{v_i},S)\leq \counts(\str{r},S)$, for any descendant $v_i$ of $r$, it follows that $\sum_{i=1}^{|p|-1} |\counts(\str{v_i},S)-\counts(\str{v_{i-1}},S)|\leq \counts(\str{r},S)$. The same argument applies for $S'$ and thus the lemma follows.
\end{proof}

We now pair these properties with the well-known \emph{heavy path decomposition} of a tree $T$, defined as follows.
Every edge is either \emph{light} or \emph{heavy}. There is exactly one heavy edge outgoing from every node except the leaves, defined as the edge to the child whose subtree contains the most nodes; ties are broken arbitrarily. 
The longest \emph{heavy path} is obtained by following the heavy edges from the root of $T$ to a leaf: note that all edges hanging off this path are light. The other heavy paths are obtained by recursively decomposing all subtrees hanging off a heavy path. We call the topmost node of a heavy path (i.e., the only node of the path which is not reached by a heavy edge) its \emph{root}. The heavy path decomposition of a tree with $N$ nodes can be constructed in $O(N)$ time and has the following important property~\cite{SLEATOR1983362}:
    \begin{lemma}
    \label{lem:heavy_path} Given a tree $T$ with $N$ nodes and a heavy path decomposition of  $T$, any root-to-leaf path in $T$ contains at most $\lfloor\log N\rfloor$ light edges. 
    \end{lemma}

\begin{example}\label{ex:heavy_paths}
    Consider the candidate set $\candidates=\candidatespruned_{2^0}\cup\candidatespruned_{2^1}\cup\candidatespruned_{2^2}\cup\candidates_3\cup \candidates_5$ computed in Examples~\ref{ex:candidatespruned} and~\ref{ex:candidates}. The trie $\candidatetrie$ of $\candidates$ with its heavy path decomposition is depicted in Figure~\ref{fig:heavy_paths}.
\end{example}
 \begin{figure}[t]
     \centering
     \includegraphics[width=.58\columnwidth]{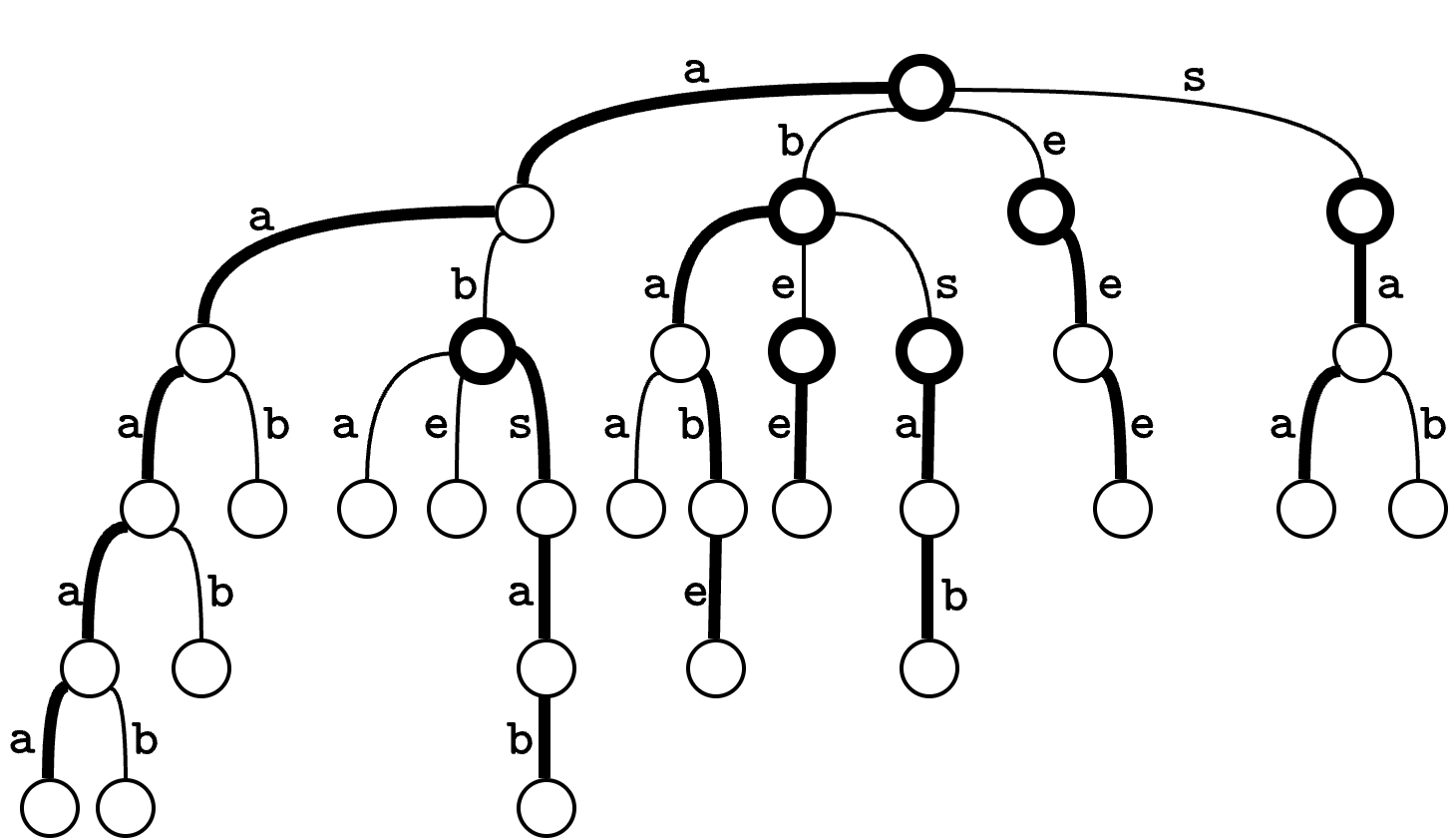}
     \caption{The trie of the candidate set from Example~\ref{ex:candidates}. Bold edges are heavy; bold nodes are the roots of the heavy paths.}
     \label{fig:heavy_paths}
 \end{figure}

We now exploit Lemma~\ref{lem:heavy_path}, together with the fact that any string $S$ of length at most $\ell$ (not necessarily included in the database from which $\candidatetrie$ has been constructed) can only influence counts over the at most $\ell$ paths of $\candidatetrie$ corresponding to its suffixes, to arrive at Lemma~\ref{lem:all_roots}.

\begin{lemma}\label{lem:all_roots}
    Let $r_0,r_1,\dots, r_k$ be the roots of the paths of the heavy path decomposition of $\candidatetrie$, where $r_0$ is the root of $\candidatetrie$. Then for any string $S$ of length at most $\ell$, we have $\sum_{i=0}^k \counts(\str{r_i},S)\leq \ell(\lfloor\log |\candidatetrie|\rfloor +1)=O(\ell \log (n\ell))$.
\end{lemma}
\begin{proof}
    For any suffix $S_i$ of $S$, let $v_i$ be the node in $\candidatetrie$ such that $\str{v_i}$ is the longest prefix of $S_i$ which is in $\candidates$. 
    We say that the path from the root of $\candidatetrie$ to $v_i$ \emph{corresponds} to $S_i$.
    For any heavy path root $r$, note that $\counts(\str{r}, S)$ is exactly the number of suffixes of $S$ that begin with $\str{r}$. Further, if a suffix $S_i$ has $\str{r}$ as a prefix, then the path corresponding to $S_i$ goes through $r$. 
    By the property of the heavy path decomposition (Lemma~\ref{lem:heavy_path}), the path corresponding to $S_i$ contains at most $\lfloor\log |\candidatetrie|\rfloor$ 
    light edges,  
    and as such, at most $\lfloor\log |\candidatetrie|\rfloor+1$ 
    heavy path roots. Thus, any suffix $S_i$ of $S$ contributes at most $\lfloor\log |\candidatetrie|\rfloor+1$ 
    to the total sum. Since we have at most $\ell$ suffixes, this gives us $\sum_{i=0}^k\counts(\str{r_i},S)\leq \ell(\lfloor\log |\candidatetrie|\rfloor +1)=O(\ell \log (n\ell))$.
%
\end{proof}


\subsubsection{Steps 2 and 3}
We are now ready to describe step 2 and 3 of the algorithm.

\paragraph{Step 2.} We build the trie $\candidatetrie$ of the candidate set $\candidates$ constructed in Section~\ref{sec:construct_P_epsilon_DP}, and construct a heavy path decomposition of $\candidatetrie$.  

  \paragraph{Step 3.} For every root $r$ of a heavy path, we compute a differentially private estimate of $\counts(\str{r},\setstr)$ as follows. 

Observation~\ref{obs:sensitvity_node} and Lemma~\ref{lem:all_roots} give that the $L_1$-sensitivity of $\left(\counts(\str{r_i},\setstr)\right)_{i=0}^k$ is bounded by $\ell(\lfloor\log |\candidatetrie|\rfloor +1)=O(\ell \log (n\ell))$. Additionally note that, since $\candidatetrie$ has at most $n^2\ell^3$ leaves, there are at most $n^2\ell^3$ heavy paths. Together with Corollary~\ref{cor:Laplace_mech}, this gives an $\epsilon$-differentially private algorithm to compute noisy counts for the heavy paths roots: 
\begin{corollary}\label{cor:heavypathroots}
Let $r_0,\dots, r_k$ be the roots of the heavy paths of $\candidatetrie$. 
    For any $\epsilon>0$ and $0<\beta<1$, there exists an $\epsilon$-differentially private algorithm, which estimates $\counts(\str{r_i}, \setstr)$ for the heavy path roots $r_0,\dots, r_k$ up to an additive error at most $O(\epsilon^{-1}\ell \log (n\ell)\ln (k/\beta))=O(\epsilon^{-1}\ell\log^2 (n\ell/\beta))$ with probability at least $1-\beta$.
\end{corollary}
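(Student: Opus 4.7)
The plan is to invoke the Laplace mechanism of Corollary~\ref{cor:Laplace_mech} applied to the $(k+1)$-dimensional vector $f(\setstr)=\bigl(\counts(\str{r_i},\setstr)\bigr)_{i=0}^{k}$ of heavy-path-root counts, so I need to bound (a) its $L_1$-sensitivity and (b) its dimension $k+1$.

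For (a), fix neighboring databases with $\setstr'=\setstr\setminus\{S\}\cup\{S'\}$. By Lemma~\ref{lem:sensitivity_tree}.\ref{lem:sensitvity_node},
\[
|\counts(\str{r_i},\setstr)-\counts(\str{r_i},\setstr')| \;=\; |\counts(\str{r_i},S)-\counts(\str{r_i},S')| \;\leq\; \counts(\str{r_i},S)+\counts(\str{r_i},S').
\]
Summing over $i$ and applying Lemma~\ref{lem:all_roots} separately to $S$ and to $S'$ (both of length at most $\ell$) gives $\|f(\setstr)-f(\setstr')\|_1 \leq 2\ell(\lceil\log|\candidatetrie|\rceil+1)$. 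By Lemma~\ref{lem:pruning}, $|\candidates|\leq n^2\ell^3$, so $|\candidatetrie|$ is polynomial in $n\ell$ and the $L_1$-sensitivity is $O(\ell\log(n\ell))$.

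For (b), the number of heavy paths equals the number of leaves of $\candidatetrie$, which is at most $|\candidates|\leq n^2\ell^3$, so $k+1 = O((n\ell)^{O(1)})$ and $\ln(k/\beta)=O(\log(n\ell/\beta))$.

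Plugging $\Delta_1 = O(\ell\log(n\ell))$ and this bound on $k$ into Corollary~\ref{cor:Laplace_mech} yields, with probability at least $1-\beta$, a maximum additive error
\[
\epsilon^{-1}\Delta_1\ln(k/\beta) \;=\; O\!\left(\epsilon^{-1}\ell\log(n\ell)\log(n\ell/\beta)\right) \;=\; O\!\left(\epsilon^{-1}\ell\log^2(n\ell/\beta)\right),
\]
which is exactly the claimed bound. The algorithm and its $\epsilon$-differential privacy are then inherited directly from the Laplace mechanism. Since all of the nontrivial work has already been done in Lemmas~\ref{lem:sensitivity_tree} and~\ref{lem:all_roots}, there is no real obstacle here; the only step requiring a bit of care is making sure that the per-string bound of Lemma~\ref{lem:all_roots} is correctly doubled when passing from a single string to a neighboring pair $(S,S')$, and that $|\candidatetrie|$ (not $|\candidates|$) is what appears in the $\log$ factor, but both are polynomial in $n\ell$ so the asymptotics are unchanged.
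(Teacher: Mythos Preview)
Your proposal is correct and follows essentially the same approach as the paper: bound the $L_1$-sensitivity of the root-count vector via Lemma~\ref{lem:sensitivity_tree}.\ref{lem:sensitvity_node} and Lemma~\ref{lem:all_roots}, bound the number of heavy paths by the number of leaves of $\candidatetrie$ (at most $n^2\ell^3$), and invoke Corollary~\ref{cor:Laplace_mech}. Your explicit factor of $2$ from handling both $S$ and $S'$ is the right thing to do (the paper absorbs it into the asymptotics), and your remark that $|\candidatetrie|$ rather than $|\candidates|$ governs the $\log$ but both are polynomial in $n\ell$ is exactly on point.
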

Additionally, Lemma~\ref{lem:sensitvity_path} and Lemma~\ref{lem:all_roots} give that the \emph{sum of} $L_1$-sensitivities of the difference sequences \emph{over all heavy paths} is bounded by $2\ell(\lfloor\log |\candidatetrie|\rfloor +1)$. As the next step, we show how to compute all \emph{prefix sums} of the difference sequences of all heavy paths with an error $O(\ell\lfloor\log |\candidatetrie|\rfloor )$ up to polylogarithmic terms. 

\subsubsection{Computing noisy prefix sums using the binary tree mechanism.}
To compute the noisy prefix sums, we show that we can estimate the prefix sums of $k$ sequences while preserving differential privacy up to an additive error which is roughly the sum of their $L_1$-sensitivities. The algorithm builds a copy of the binary tree mechanism \cite{Dwork2010} for each of the $k$ sequences. 
The binary tree mechanism works as follows: To compute a differentially private estimate of all prefix sums of a finite sequence, we first divide the sum into partial sums of power-of-two many elements, and add noise to each partial sum. That is, to compute all prefix sums of a sequence $a_1,\dots, a_Q$, we first compute a noisy value of each $a_i$, then of $a_1+a_2$, $a_3+a_4$, $\dots$, then of $a_1+a_2+a_3+a_4$, $a_5+a_6+a_7+a_8$, and so on. The main idea is that any value $a_i$ can influence at most $\log Q$ of these partial sums, and we can compute an estimate of any prefix sum by summing at most $\log Q$ of the noisy partial sums. In the following lemma, we analyze this idea more generally for $k$ sequences $a^{(1)},\dots, a^{(k)}$:

\begin{lemma}\label{lem:binary_trees}
Let $\chi^{*}$ be any universe of possible data sets and $T\in\mathbb{N}$. Let $a^{(1)}, \dots, a^{(k)}$ be $k$ functions, where the output of every function is a sequence of length $T$, i.e. $a^{(i)}:\chi^{*}\rightarrow \mathbb{N}^{T}$ for all $i\in[1,k]$. 
Let $L$ be the sum of $L_1$-sensitivities of $a^{(1)}, \dots, a^{(k)}$, that is, let $L=\max_{x\sim x'}\sum_{i=1}^k \sum_{j=1}^{T}|a^{(i)}(x)[j]-a^{(i)}(x')[j]|$. For any $\epsilon>0$ and $0<\beta<1$ there exists an $\epsilon$-differentially private algorithm computing for every $i\in[1,k]$ all prefix sums of $a^{(i)}(x)$ with additive error at most $O(\epsilon^{-1}L\log T \log(Tk/\beta))$ with probability at least $1-\beta$.
\end{lemma}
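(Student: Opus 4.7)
The plan is to run, in parallel, one instance of the binary tree mechanism of Dwork~et~al.~\cite{Dwork2010} on each of the $k$ sequences, but tied together by a single Laplace noise scale calibrated to the \emph{joint} sensitivity budget $L$. Concretely, for each $i\in[1,k]$ I build a complete binary tree $T_i$ of depth $\lceil\log T\rceil$ whose $j$-th leaf holds $a^{(i)}(x)[j]$ and whose internal nodes hold the sum of $a^{(i)}(x)$ over their dyadic intervals. There are $O(T)$ nodes per tree, and every prefix sum $\sum_{j=1}^{t}a^{(i)}(x)[j]$ can be written as a sum of at most $\lceil\log T\rceil$ of these node values. The algorithm adds independent noise $\Lap(L\lceil\log T\rceil/\epsilon)$ to every node of every tree and then outputs, for each $i$ and each $t\in[1,T]$, the corresponding sum of $O(\log T)$ noisy node values. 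Since forming prefix sums from noisy nodes is post-processing, only the node release needs to satisfy $\epsilon$-DP.

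For privacy, I concatenate all node values across the $k$ trees into a single vector $f(x)\in\mathbb{R}^{O(Tk)}$ and bound its $L_1$-sensitivity. For neighboring $x\sim x'$, every coordinate $a^{(i)}(x)[j]$ appears in exactly $\lceil\log T\rceil+1$ nodes of $T_i$ (the ancestors of its leaf), so $\|f(x)-f(x')\|_1 \le \lceil\log T\rceil \sum_{i,j}|a^{(i)}(x)[j]-a^{(i)}(x')[j]| \le L\lceil\log T\rceil$ by hypothesis. Applying the Laplace mechanism (Corollary~\ref{cor:Laplace_mech}) with the stated scale therefore yields $\epsilon$-differential privacy for the entire collection of noisy node values, hence for all prefix sums by post-processing.

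For utility, each of the $Tk$ output prefix sums is the sum of at most $m:=\lceil\log T\rceil$ independent $\Lap(b)$ variables with $b = L\lceil\log T\rceil/\epsilon$. I invoke the Laplace tail bound (Lemma~\ref{lem:lap_tail} in the appendix) together with a union bound over all $Tk$ prefix sums, setting the failure probability to $\beta/(Tk)$; since $\log(Tk/\beta)\ge \log T = m$, the tail bound collapses to the linear regime and each sum is bounded by $O(b\log(Tk/\beta))=O(\epsilon^{-1}L\log T\log(Tk/\beta))$ except with probability $\beta/(Tk)$. A union bound over all $Tk$ sums yields the claimed uniform additive error with probability at least $1-\beta$. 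The only subtlety to be careful about is the accounting in the privacy step: a naive per-tree approach that splits the privacy budget across the $k$ trees would lose an extra factor of $k$, whereas the hypothesis that $L$ is the \emph{sum} of sensitivities across all sequences lets the entire collection of $k$ trees be released with a single global noise scale of magnitude $L\log T/\epsilon$.
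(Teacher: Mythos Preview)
Your approach is essentially identical to the paper's: both run one binary-tree/dyadic-interval mechanism per sequence, calibrate a single global Laplace scale $b=\epsilon^{-1}L\cdot O(\log T)$ using the joint $L_1$-sensitivity bound $L(\lfloor\log T\rfloor+1)$, and then union-bound the error over all $kT$ prefix sums. The only slip is the citation in your utility step: Lemma~\ref{lem:lap_tail} is the tail bound for a \emph{single} Laplace variable and cannot by itself control a sum of $m=\lceil\log T\rceil$ independent Laplaces; the ``linear regime'' reasoning you describe (where $\log(Tk/\beta)\ge m$ makes the $\sqrt{m}$ term irrelevant) is exactly the content of the sum-of-Laplaces concentration bound, which in this paper is Lemma~\ref{lem:sum_of_eq_lap}. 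With that corrected reference, your argument and the paper's coincide.
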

For the proof, we need the following lemma on the sum of Laplace random variables.

\begin{lemma}[Sum of Laplace Variables]\label{lem:sum_of_eq_lap}
Let $Y_1,\dots,Y_k$ be independent variables with distribution $\Lap(b)$ and let $Y=\sum_{i=1}^k Y_i$. Let $0<\beta<1$. Then
\begin{align*}  \Pr\left[|Y|>2b\sqrt{2\ln(2/\beta)}\max\left\{\sqrt{k},\sqrt{\ln(2/\beta
    )}\right\} \right] \leq \beta.
\end{align*}
\end{lemma}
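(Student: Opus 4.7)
The plan is to derive the bound via the standard Chernoff--Markov argument applied to the moment generating function (MGF) of the Laplace distribution, obtaining a Bernstein-type sub-exponential inequality, and then to invert it at the specific threshold stated in the lemma. I would start from $\mathbb{E}[e^{tY_i}] = 1/(1-b^2t^2)$, valid for $|t|<1/b$, and use independence to get $\mathbb{E}[e^{tY}] = (1-b^2t^2)^{-k}$.

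To linearize the exponent, I would first show $(1-x)^{-1}\leq e^{2x}$ on $x\in[0,1/2]$, which follows from the identity $1+2x-(1-x)^{-1}=x(1-2x)/(1-x)\geq 0$ together with $1+2x\leq e^{2x}$. Restricting to $|bt|\leq 1/\sqrt{2}$ this yields $\mathbb{E}[e^{tY}]\leq \exp(2kb^2t^2)$, and by Markov $\Pr[Y>u]\leq \exp(-tu+2kb^2t^2)$. I would then optimize over $t$ in two regimes: when $u\leq 2\sqrt{2}kb$, the unconstrained minimizer $t^\ast=u/(4kb^2)$ is feasible and gives $\Pr[Y>u]\leq \exp(-u^2/(8kb^2))$; otherwise I plug in the boundary value $t=1/(b\sqrt{2})$, which gives $\exp(-u/(b\sqrt{2})+k)\leq \exp(-u/(2\sqrt{2}b))$, where the last step uses exactly the regime condition $u>2\sqrt{2}kb$. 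Since $\Lap(b)$ is symmetric, the same bound holds for $-Y$, so a union bound yields
\[
\Pr[|Y|>u]\leq 2\exp\!\Bigl(-\min\bigl(u^2/(8kb^2),\; u/(2\sqrt{2}b)\bigr)\Bigr).
\]

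It then remains to invert at $u=2b\sqrt{2\ln(2/\beta)}\max\{\sqrt{k},\sqrt{\ln(2/\beta)}\}$. If $\sqrt{k}\geq\sqrt{\ln(2/\beta)}$, then $u=2b\sqrt{2k\ln(2/\beta)}$, one checks the regime condition $u\leq 2\sqrt{2}kb$ is equivalent to $\ln(2/\beta)\leq k$ and holds, and a direct computation gives $u^2/(8kb^2)=\ln(2/\beta)$. If $\sqrt{k}<\sqrt{\ln(2/\beta)}$, then $u=2\sqrt{2}\,b\ln(2/\beta)$, the complementary regime condition $u>2\sqrt{2}kb$ is equivalent to $k<\ln(2/\beta)$ and holds, and $u/(2\sqrt{2}b)=\ln(2/\beta)$. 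In either case the right-hand side of the concentration bound becomes exactly $2e^{-\ln(2/\beta)}=\beta$.

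The argument is entirely textbook Bernstein-for-sub-exponentials specialized to Laplace noise; there is no real obstacle, only bookkeeping. The one point that must be handled carefully is tracking the constants across the two regimes of the minimum so that the prescribed $u$ lands exactly on the threshold $\beta$, which the tailored form $u=2b\sqrt{2\ln(2/\beta)}\max\{\sqrt{k},\sqrt{\ln(2/\beta)}\}$ is designed to achieve.
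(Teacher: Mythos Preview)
Your argument is correct. The paper does not actually prove this lemma from scratch: it simply invokes Corollary~2.9 of Chan, Shi, and Song~\cite{DBLP:journals/tissec/ChanSS11} with equal scale parameters $b_1=\dots=b_k=b$. Your self-contained Chernoff/Bernstein derivation is exactly the standard technique underlying that cited result, so the two are essentially the same approach; yours is self-contained, while the paper defers the details to the literature.
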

\begin{proof}
Apply Corollary 2.9 in \cite{DBLP:journals/tissec/ChanSS11} to $b_1=\dots=b_k=b$.
\end{proof}

\begin{proof}[Proof of Lemma~\ref{lem:binary_trees}]
The algorithm builds a copy of the binary tree mechanism \cite{Dwork2010} for each of the $k$ sequences. 
The dyadic decomposition of an interval $[1,T]$ is given by the set $\mathcal{I}=\{[j\cdot 2^i+1, (j+1)2^{i}],0\leq j\leq \lceil T/2^i\rceil -1, 0\leq i \leq \lfloor \log T \rfloor\}$. It has the property that every interval $[1, m]$ for $m\leq T$ is the union of no more than $\lfloor\log T\rfloor+1$ disjoint intervals from $\mathcal{I}$. Denote these intervals by $\mathcal{I}_{[1,m]}$.  The algorithm is now as follows: For every sequence $i \in [1,k]$ and interval $[b,e]\in\mathcal{I}$, we independently draw a random variable $Y_{[b,e]}^{(i)}\approx\Lap(\epsilon^{-1}L(\lfloor \log T \rfloor +1))$. Given the sequences $a^{(1)}, \dots, a^{(k)}$ and the data set $x$ we compute for every sequence $a^{(i)}(x)$, and every interval $[b,e]\in\mathcal{I}$, the partial sum $s^{(i)}_{[b,e]}(x)=a^{(i)}(x)[b]+\dots+a^{(i)}(x)[e]$. For each such partial sum, we compute an approximate sum $\tilde{s}^{(i)}_{[b,e]}=s^{(i)}_{[b,e]}(x)+Y_{[b,e]}^{(i)}$. We now compute the approximate $m$th prefix sum of the $i$th sequence by $\sum_{[b,e]\in \mathcal{I}_{[1,m]}} \tilde{s}^{(i)}_{[b,e]}$, for all $i=1,\dots, k$ and all $m=1,\dots, T$.

{\bf Privacy.} We show that the $L_1$-sensitivity of the sequence of $s^{(i)}_{[b,e]}(x)$  with $i=1,\dots k$ and $[b,e]\in \mathcal{I}$ is at most $L(\lfloor \log T \rfloor +1)$: Recall that $L\geq\max_{x\sim x'}\sum_{i=1}^k \sum_{j=1}^{T}|a^{(i)}(x)[j]-a^{(i)}(x')[j]|$. The $L_1$-sensitivity for all $s^{(i)}_{[b,e]}(x)$ is bounded by 
\begin{align*}
    \max_{x\sim x'}\sum_{i=1}^k\sum_{[b,e]\in \mathcal{I}}|s^{(i)}_{[b,e]}(x)-s^{(i)}_{[b,e]}(x')|&\leq  \max_{x\sim x'}\sum_{i=1}^k\sum_{[b,e]\in \mathcal{I}}\sum_{j\in [b,e]}|a^{(i)}(x)[j]-a^{(i)}(x')[j]|\\&=\max_{x\sim x'}\sum_{i=1}^k\sum_{j=1}^{T}\sum_{[b,e]\in \mathcal{I}:j\in [b,e]}|a^{(i)}(x)[j]-a^{(i)}(x')[j]|\\&\leq L(\lfloor \log T \rfloor +1),
\end{align*}
where the last inequality follows from the fact that any $j\in[1,T]$ appears in no more than $\lfloor \log T \rfloor +1$ many sets in $\mathcal{I}$. Thus, the algorithm above is $\epsilon$-differentially private by Lemma~\ref{lem:Laplacemech}.

{\bf Accuracy.} To prove the error bound, we use Lemma~\ref{lem:sum_of_eq_lap} for sums of Laplace variables: For any prefix sum, we sum at most $\lfloor \log T\rfloor +1$ Laplace variables of scale $\epsilon^{-1}L(\lfloor\log T\rfloor +1)$. Let $Y^{(i)}_{[1,m]}$ be the sum of Laplace variables used to compute $\sum_{[b,e]\in \mathcal{I}_{[1,m]}} \tilde{s}^{(i)}_{[b,e]}$, i.e., $Y^{(i)}_{[1,m]}=\sum_{[b,e]\in \mathcal{I}_{[1,m]}}Y^{(i)}_{[b,e]}$. By Lemma~\ref{lem:sum_of_eq_lap} we get
\begin{align*}
    \Pr\left[|Y^{(i)}_{[1,m]}|>2\epsilon^{-1}L(\lfloor\log T\rfloor +1)\sqrt{2\ln(2kT/\beta)}\max\left\{\sqrt{(\lfloor\log T\rfloor +1)}, \sqrt{\ln(2kT/\beta)}\right\}\right]\leq \beta/(kT).
\end{align*}
By a union bound over at most $kT$ prefix sum computations, we get that with probability at least $1-\beta$, the error for any prefix sum is bounded by $O(\epsilon^{-1}L\log T \log(kT/\beta))$.
\end{proof}

\subsubsection{Step 4}
For a heavy path $p$ let $\mathrm{sums}_p$ be all the prefix sums of its difference sequence, i.e., $\mathrm{sums}_p[i]= \sum_{j=1}^i \diff_p(\setstr)[j]$. 
 In step 4 we compute for 
  every heavy path $p=v_0,v_1,\dots, v_{|p|-1}$ with root $r=v_0$, 
  a differentially private estimate of all \emph{prefix sums}, $\mathrm{sums}^*_p$, of its difference sequence. That is, $\mathrm{sums}^*_p[i]$ is a noisy estimate of $\sum_{j=1}^i \diff_p(\setstr)[j]$.
  We compute $\mathrm{sums}^*_p$ using the algorithm from Lemma~\ref{lem:binary_trees} with $L=\ell(\lfloor\log |\candidatetrie|\rfloor +1)=O(\ell \log (n\ell))$, $k\leq n^2\ell^3$, and $T=\ell$. This gives us the following corollary.

\begin{corollary}\label{cor:heavypaths}
    Let $p_0,\dots, p_k$ be all heavy paths of $\candidatetrie$. For any $\epsilon>0$ and $0<\beta<1$, there exists an $\epsilon$-differentially private algorithm, which estimates $\sum_{j=1}^i \diff_{p_m}(\setstr)[j]$ for all $i=1,\dots, |p_m|$ and all $m=0,\dots, k$ up to an additive error at most $$O\left(\epsilon^{-1} \ell\log (n\ell)\log \ell \log(\ell k/\beta)\right)=O\left(\epsilon^{-1}\ell\log \ell \log^2 (n\ell/\beta)\right)$$ with probability at least $1-\beta$.
\end{corollary}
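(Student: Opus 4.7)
The plan is to apply Lemma~\ref{lem:binary_trees} directly to the collection of difference sequences, one per heavy path, after bounding the relevant parameters using the structural lemmas already established.

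First, I would identify the three parameters needed for Lemma~\ref{lem:binary_trees}: the number $k$ of sequences, the length $T$ of each sequence, and the total $L_1$-sensitivity $L$. For $k$, I would observe that the number of heavy paths equals the number of leaves of $\candidatetrie$ (plus possibly the root's path), which is at most $|\candidates| \leq n^2\ell^3$ by Lemma~\ref{lem:pruning}. For $T$, the length $|p_m|-1$ of any difference sequence is bounded by the depth of $\candidatetrie$, which is at most $\ell$ since all candidate strings have length at most $\ell$.

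The main technical step is bounding $L$, the sum of $L_1$-sensitivities of the $\diff_{p_m}$ across all heavy paths. Fix neighboring databases $\setstr$ and $\setstr' = \setstr \setminus \{S\} \cup \{S'\}$. By Lemma~\ref{lem:sensitivity_tree}.\ref{lem:sensitvity_path}, for each heavy path $p_m$ with root $r_m$ we have $\|\diff_{p_m}(\setstr) - \diff_{p_m}(\setstr')\|_1 \leq \counts(\str{r_m},S) + \counts(\str{r_m},S')$. Summing over all heavy path roots and applying Lemma~\ref{lem:all_roots} twice (once for $S$, once for $S'$, both of length at most $\ell$), we obtain
\begin{equation*}
\sum_{m=0}^{k} \|\diff_{p_m}(\setstr) - \diff_{p_m}(\setstr')\|_1 \;\leq\; 2\ell(\lceil \log |\candidatetrie| \rceil + 1) \;=\; O(\ell \log(n\ell)),
\end{equation*}
where the last equality uses $|\candidatetrie| \leq n^2\ell^4$.

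Finally, I would plug $L = O(\ell \log(n\ell))$, $T \leq \ell$, and $k \leq n^2\ell^3$ into Lemma~\ref{lem:binary_trees}. The resulting error bound is $O(\epsilon^{-1} \cdot \ell \log(n\ell) \cdot \log \ell \cdot \log(\ell \cdot n^2 \ell^3 / \beta)) = O(\epsilon^{-1} \ell \log \ell \log^2(n\ell/\beta))$, which matches the claim. Privacy is inherited directly from Lemma~\ref{lem:binary_trees}. I do not expect any real obstacle here; the sensitivity bookkeeping via Lemma~\ref{lem:sensitivity_tree}.\ref{lem:sensitvity_path} combined with Lemma~\ref{lem:all_roots} is the only substantive step, and the corollary is essentially an instantiation of Lemma~\ref{lem:binary_trees} with the right parameters.
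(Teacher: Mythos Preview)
Your proposal is correct and matches the paper's own argument essentially line for line: the paper also derives the corollary by invoking Lemma~\ref{lem:binary_trees} with $L=O(\ell\log(n\ell))$ obtained from Lemma~\ref{lem:sensitivity_tree}.\ref{lem:sensitvity_path} combined with Lemma~\ref{lem:all_roots}, together with $T=\ell$ and $k\leq n^2\ell^3$. The only cosmetic difference is that the paper states the bound $L$ just before the corollary rather than inside the proof.
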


A schema of the application of Corollary~\ref{cor:heavypaths} to one of the heavy paths of the trie of Figure~\ref{fig:heavy_paths} is in Figure~\ref{fig:binary_tree_mechanism}.

\begin{figure}
    \centering
    \includegraphics[width=.85\linewidth]{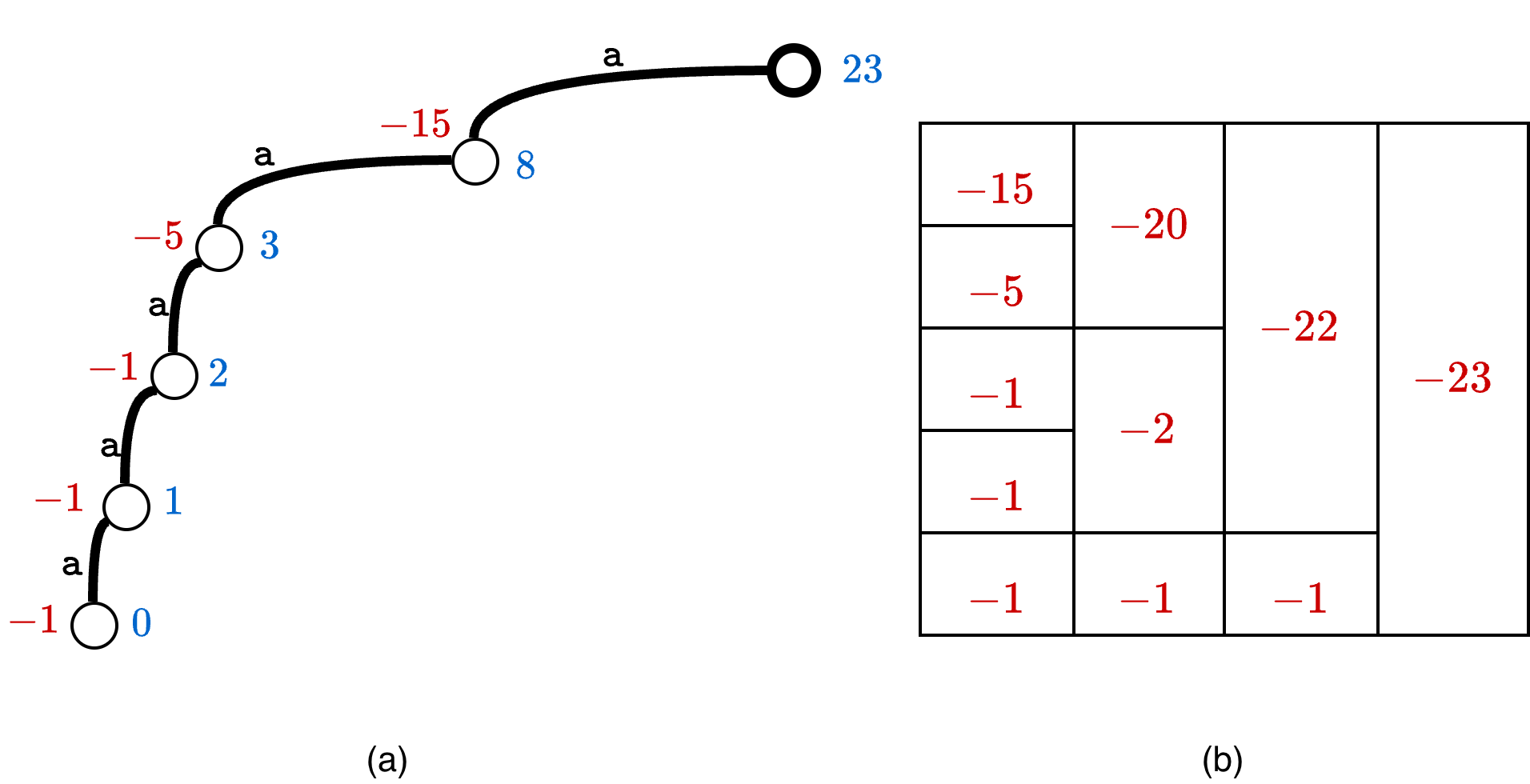}
    \caption{(a) The heavy path rooted at the root of the trie of Figure~\ref{fig:heavy_paths};  the real counts are in blue, the difference sequence is in red. The root represents the empty string, assumed to occur at every position of the database $\setstr$ from Example~\ref{ex:candidates}. (b) The partial sums of the difference sequence (without noise). The algorithm computes a noisy version of the values in the table shown.}
    \label{fig:binary_tree_mechanism}
\end{figure}

\subsubsection{Efficient implementation.}
   Lemma~\ref{lem:heavy_decomp_complexity} below provides a procedure to compute step 2-4 efficiently.

    \begin{lemma}\label{lem:heavy_decomp_complexity}
        Given the representation of $\candidates$ output by the algorithm from Section~\ref{sec:construct_P_epsilon_DP} for a database $\setstr$ of $n$ documents from $\Sigma^{[1,\ell]}$, a trie $T_\candidates$ and its heavy path decomposition, noisy counts of the heavy path roots and noisy prefix sums on the heavy paths, can be computed in $O(n^2\ell^4)$ time and space.
    \end{lemma}

\begin{proof}
    Let $\setstr=S_1,\dots, S_n$. Given the representation of $\candidates$ output by the algorithm from Section~\ref{sec:construct_P_epsilon_DP} with the suffix tree of $S_1\$_1\dots S_n\$_n$, we can build the trie augmented with the true counts for each node in $O(n^2\ell^4)$ time and space: we just insert each string letter by letter, and at the same time traverse the suffix tree to compute the counts for each prefix. Given the trie, we can compute the heavy path decomposition in linear time of the size of the trie, i.e. in $O(n^2\ell^4)$ time. Given the counts for each node, we can compute the noisy counts for all roots of the heavy paths in $O(n^2\ell^3)$ time. Further, for every heavy path of length $h$, we can compute the noisy prefix sums of the difference sequences in $O(h)$ time and space: note that we divide the length of the path into dyadic intervals, and need to compute a noisy count for each dyadic interval. Constructing these partial sums bottom-up can be done in $O(h)$ time. Since all heavy paths lengths summed up are bounded by the size of the tree, the total time is $O(n^2\ell^4)$.
    \end{proof}

\subsection{Steps 5 and 6: Putting It Together}

Finally, we compute the noisy counts for each node and prune the trie as follows. For a heavy path $p= v_0, v_1,\dots, v_{|p|-1}$ with root $r=v_0$, let $\approxcount(\str{r},\setstr)$ be the noisy count computed for the root, and let $\mathrm{sums}^*_p[i]$ be the noisy estimate of $\sum_{j=1}^i \diff_p(\setstr)[j]$. Then for any node $v_i$ on the path with $i>0$, we compute $\approxcount(\str{v_i},\setstr)=\approxcount(\str{r},\setstr)+\mathrm{sums}^*_p[i]$.

 {\bf Parameters.} To obtain an $\epsilon$-differential private algorithm, we set $\epsilon'=\epsilon/3$ and failure probability $\beta'=\beta/3$ in Steps 1, 3, and 4 of the algorithm.
 
Let $\alpha$ be the sum of the error bounds for Corollary~\ref{cor:heavypathroots} and Corollary~\ref{cor:heavypaths} which each hold with probability $1-\beta'$. We have $\alpha=O(\epsilon^{-1}\ell\log \ell \log^2(n\ell/\beta))$. As a final step, we prune $\candidatetrie$ by traversing the trie from the root, and for any node $v$ with $\approxcount(\str{v_i},\setstr)<2\alpha$, we delete $v$ and its subtree. We return the resulting pruned trie $T^*$ together with approximate counts $\approxcount(\str{v},\setstr)$ for every node $v 
\in T^*$.

By the composition theorem (Lemma~\ref{lem:composition_theorem}), this algorithm is $\epsilon$-differentially private. By a union bound, all its error guarantees hold together with probability $1-3\beta'=1-\beta$. Thus, we get with probability $1-\beta$ a trie $T^*$, with the following properties:
\begin{itemize}
    \item For each node $v\in T^*$, by Corollaries~\ref{cor:heavypathroots} and \ref{cor:heavypaths},
    $$|\approxcount(\str{v},\setstr)-\counts(\str{v},\setstr)|\leq \alpha=O\left(\epsilon^{-1}\ell\log \ell \log^2(n\ell/\beta)\right)$$
    \item Every string $P\in \Sigma^{[1,\ell]}$ which is not present $T^*$ was either: (i) not present in $\candidates$, in which case $\counts(P,\setstr)=O(\epsilon^{-1}\ell\log \ell (\log(n\ell /\beta)+\log |\Sigma|))$ by Lemma~\ref{lem:pruning}; or (ii) deleted in the pruning process of $\candidatetrie$, in which case $\counts(P,\setstr)<3\alpha=O(\epsilon^{-1}\ell\log \ell \log^2(n\ell/\beta))$.
    \item Any string $P\in\sum^{[1,\ell]}$ which is present in $T^*$ has a count at least $2\alpha-\alpha>1$. Therefore, $T^*$ has at most $O(n\ell^2)$ nodes.
\end{itemize}
To query the resulting data structure for a pattern $P$, we match $P$ in the trie, and if there is a node $v$ with $\str{v}=P$, we return the approximate count $\approxcount(\str{v},\setstr)$ saved at $v$. If $P$ is not present, we return $0$. This requires $O(|P|)$ time. 
In summary, we have obtained Theorem~\ref{thm:main_eps}.

\subsection{Faster Algorithm for Fixed-Length $q$-grams}

If we only care about counting $q$-grams for a fixed length $q$, the algorithm can be simplified significantly, improving the construction time significantly and the error slightly. 

\thmEpsQgrams*

\begin{proof}[Proofsketch]The algorithm constructs $\candidatespruned_{2^0},\dots, \candidatespruned_{2^j}$, where $j=\lfloor \log q\rfloor$, as in the proof of Lemma~\ref{lem:pruning}, with privacy parameter $\epsilon'=\epsilon/2$. We can then construct $\candidates_q$, again as in Lemma~\ref{lem:pruning}. We compute noisy counts of every string in the set $\candidates_q$ with the Laplace mechanism (Lemma~\ref{lem:Laplacemech}) and privacy parameter $\epsilon/2$. As a post-processing step, we throw out all elements from $\candidates_q$ whose noisy count is below $2\alpha$, where $\alpha$ is the error of the Laplace mechanism. We call the resulting set $\candidatespruned_q$ and arrange its elements into a trie together with the corresponding noisy counts. By composition, this is $\epsilon$-differentially private. By a similar argument as Lemma~\ref{lem:pruning}, with probability at least $1-\beta$, the noisy counts of any element in the trie has an error at most $O\left(\epsilon^{-1}\log \ell \left(\log(n\ell/\beta)+\log(|\Sigma|)\right)\right)$, and everything not in the trie has a count at most $O\left(\epsilon^{-1}\log \ell \left(\log(n\ell/\beta)+\log(|\Sigma|)\right)\right)$. Again by a similar argument as Lemma~\ref{lem:pruning}, with probability at least $1-\beta$, we have $|\candidatespruned_q|\leq n\ell$, and therefore the trie has size at most $n\ell^2$.

As shown in Step 1 of the proof of Lemma~\ref{lem:C_construction_algo}, $\candidatespruned_{2^0},\dots, \candidatespruned_{2^j}$ can be constructed using total time $O(n^2\ell^2\log q\log \log (n\ell))$ and $O(n\ell^2)$ space. The set $\candidates_q$ and the true count of every string in $\candidates_q$ can be computed in an additional $O(n^2\ell^2\log \log (n\ell)+n^2\ell^3+\mathrm{sort}(n\ell^2,|\Sigma|)) = O(n^2\ell^2\log \log (n\ell)+n^2\ell^3)$ time and $O(n^2\ell^3)$ space. The trie can be constructed in time $O(n^2\ell^3)$.
\end{proof}


\section{Counting with Approximate Differential Privacy}\label{sec:epsdel}
In this section, we prove Theorem~\ref{thm:main_epsdel}.

\mainEpsdelThm*
The algorithm proceeds in the same four main steps as the $\epsilon$-differentially private algorithm. To analyze this strategy for $(\epsilon,\delta)$-differential privacy, we need the following lemma:
\begin{lemma}\label{lem:l2norm}
    Let $v\in \mathbb{R}^k$ be a $k$-dimensional vector such that:
    \begin{itemize}
        \item $||v||_1=|v[1]|+\dots+|v[k]|\leq M$
        \item $|v[i]|\leq \Delta$, for all $i\in[1,k]$.
    \end{itemize}
    Then $||v||_2\leq \sqrt{M\Delta}$.
\end{lemma}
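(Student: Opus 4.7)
The plan is to prove the bound by a direct computation on $\|v\|_2^2$, using the $L_\infty$ bound to pull one factor of $|v[i]|$ out of each squared term and then recognizing the leftover sum as $\|v\|_1$. This is essentially the Hölder-type inequality alluded to in the technical overview (``if the $L_1$-norm of a vector is bounded by $L$, and the $L_\infty$-norm is bounded by $\Delta$, then the $L_2$-norm is bounded by $\sqrt{L\Delta}$'').

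Concretely, I would write
\[
\|v\|_2^2 \;=\; \sum_{i=1}^k v[i]^2 \;=\; \sum_{i=1}^k |v[i]|\cdot |v[i]| \;\leq\; \Delta \sum_{i=1}^k |v[i]| \;=\; \Delta\, \|v\|_1 \;\leq\; \Delta\, M,
\]
where the first inequality uses the hypothesis $|v[i]|\leq \Delta$ applied to one of the two factors in each term, and the last inequality is the hypothesis $\|v\|_1\leq M$. Taking square roots gives $\|v\|_2 \leq \sqrt{M\Delta}$, which is the claim.

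There is essentially no obstacle here: the lemma is an elementary norm inequality and the two hypotheses combine in a single line. The only thing to be mindful of is keeping the absolute values explicit so that the step ``$v[i]^2 = |v[i]|\cdot |v[i]|$'' is unambiguous when one of the bounds is applied; this avoids any sign issue and makes the derivation manifestly valid for vectors in $\mathbb{R}^k$ (not just $\mathbb{R}_{\geq 0}^k$).
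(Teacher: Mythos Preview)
Your proof is correct and is essentially the same argument as the paper's: the paper invokes H\"older's inequality with $p=\infty$, $q=1$, and $f=g=v$ to get $\|v\|_2^2 = \|v\cdot v\|_1 \leq \|v\|_\infty \|v\|_1 \leq M\Delta$, which is exactly your one-line computation written in named form.
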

\begin{proof}
The proof follows directly from Hölder's inequality.
The inequality gives that for any $f,g\in \mathbb{R}^k$, and any $p,q\in[1,\infty]$ with $1/p+1/q=1$ it holds that $||f\cdot g||_1\leq ||f||_p||g||_q.$
    If we now set $p=\infty$ and $q=1$ and $f=g=v$, we get 
        $||v||_2^2=||v\cdot v||_1\leq ||v||_{\infty}||v||_1=M\Delta$
    and thus $||v||_2\leq \sqrt{M\Delta}$.
\end{proof}


This together with Corollary~\ref{cor:l1sens_same_length}  implies the following bound.
\begin{corollary}\label{cor:l2sens_same_length}
    For any $m\leq \ell$, the $L_2$-sensitivity of $(\counts_{\Delta}(P,\setstr))_{P\in \Sigma^m}$ is bounded by $\sqrt{2\ell\Delta}$. 
\end{corollary}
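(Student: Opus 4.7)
The plan is to derive the bound by combining Lemma~\ref{lem:l2norm} with two separate sensitivity estimates for the truncated count vector: an $L_1$ bound of $2\ell$ and an $L_\infty$ bound of $\Delta$. Once both are in hand, the corollary is an immediate application of Lemma~\ref{lem:l2norm} with $M = 2\ell$ and the $L_\infty$ parameter equal to $\Delta$, giving $\sqrt{2\ell\Delta}$.

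First I would establish the $L_1$ sensitivity bound for $\counts_\Delta$ rather than $\counts$. The proof template from Corollary~\ref{cor:l1sens_same_length} should carry over because the key ingredient, Lemma~\ref{lem:pattern_counts}, continues to apply: for each document $S$ and each $m \le \ell$, $\sum_{P\in\Sigma^m}\counts_\Delta(P,S)\le \sum_{P\in\Sigma^m}\counts(P,S)\le \ell$, since $\counts_\Delta(P,S) = \min(\Delta,\counts(P,S)) \le \counts(P,S)$. For neighboring $\setstr,\setstr'$ that differ by replacing $S$ with $S'$, the triangle inequality then gives
\[
\sum_{P\in\Sigma^m}\bigl|\counts_\Delta(P,\setstr)-\counts_\Delta(P,\setstr')\bigr|\le \sum_{P\in\Sigma^m}\counts_\Delta(P,S)+\sum_{P\in\Sigma^m}\counts_\Delta(P,S')\le 2\ell.
\]

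Next I would bound the $L_\infty$ component of the difference vector. For any single pattern $P$ and any single document $S$, the definition $\counts_\Delta(P,S)=\min(\Delta,\counts(P,S))$ gives $\counts_\Delta(P,S)\in[0,\Delta]$. Hence when a single document is swapped, each coordinate of the difference vector is bounded in absolute value by $\Delta$.

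Finally, I would invoke Lemma~\ref{lem:l2norm} on the difference vector indexed by $P\in\Sigma^m$, with $M=2\ell$ from the $L_1$ bound and $\Delta$ as the entrywise bound, to conclude $\|\cdot\|_2\le\sqrt{2\ell\Delta}$. There is no real obstacle here; the only subtlety worth flagging is that Corollary~\ref{cor:l1sens_same_length} is stated for $\counts$ and not $\counts_\Delta$, so I would make the short reduction above explicit rather than cite the corollary directly.
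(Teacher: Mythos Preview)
Your proposal is correct and matches the paper's approach exactly: the paper derives Corollary~\ref{cor:l2sens_same_length} in one line from Lemma~\ref{lem:l2norm} together with Corollary~\ref{cor:l1sens_same_length}, using the $L_1$ bound $2\ell$ and the coordinate-wise bound $\Delta$. Your added care in explicitly adapting Corollary~\ref{cor:l1sens_same_length} from $\counts$ to $\counts_{\Delta}$ is appropriate, since the paper glosses over this step.
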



In the following, we give the details and analysis of our approach for $(\epsilon,\delta)$-differential privacy.

\subsection{Computing a Candidate Set}
We show how to compute the candidate set $\candidates$ while satisfying $(\epsilon,\delta)$-differential privacy:
\begin{lemma}\label{lem:pruning_epsdel}
    Let $\setstr=S_1,\dots, S_{n}$ be a collection of strings. For any $\epsilon>0$, $\delta>0$, and $0<\beta<1$, there exists an $(\epsilon,\delta)$-differentially private algorithm, which computes a candidate set $\candidates\subseteq\Sigma^{[1,\ell]}$ with the following two properties, with probability at least $1-\beta$:
    \begin{itemize}
        \item For any $P\in \Sigma^{[1,\ell]}$ not in $\candidates$, we have $$\counts_{\Delta}(P,\setstr)=O\left(\epsilon^{-1}\log \ell\sqrt{\ell\Delta\log(\log\ell/\delta)\log(\max\{\ell^2n^2,|\Sigma|\}/\beta)}\right).$$
        \item $|\candidates|\leq n^2\ell^3$.
    \end{itemize}
\end{lemma}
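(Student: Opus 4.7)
The plan is to mirror the inductive structure of the proof of Lemma~\ref{lem:pruning}, replacing the Laplace mechanism (Corollary~\ref{cor:Laplace_mech}) with the Gaussian mechanism (Corollary~\ref{cor:gaussian_mech}), and exploiting the tighter $L_2$-sensitivity bound from Corollary~\ref{cor:l2sens_same_length} instead of the $L_1$-sensitivity bound from Corollary~\ref{cor:l1sens_same_length}. This is what drives the improvement from a linear dependence on $\ell$ to $\sqrt{\ell\Delta}$.

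First I would set $\epsilon_1 = \epsilon/(\lfloor\log\ell\rfloor+1)$, $\delta_1 = \delta/(\lfloor\log\ell\rfloor+1)$, and $\beta_1 = \beta/(\lfloor\log\ell\rfloor+1)$, and inductively build $\candidatespruned_{2^0},\dots,\candidatespruned_{2^{\lfloor\log\ell\rfloor}}$. In phase $k=0$, I would estimate $\counts_\Delta(\gamma,\setstr)$ for each $\gamma\in\Sigma$ with the Gaussian mechanism and keep those with noisy count at least $\tau=2\alpha$, where
\[
\alpha = O\!\left(\epsilon_1^{-1}\sqrt{\ell\Delta\,\log(1/\delta_1)\,\log(\max\{\ell^2n^2,|\Sigma|\}/\beta_1)}\right)
\]
is the high-probability error bound of Corollary~\ref{cor:gaussian_mech} applied with $L_2$-sensitivity $\sqrt{2\ell\Delta}$ (Corollary~\ref{cor:l2sens_same_length}) on a vector of length at most $\max\{|\Sigma|,(n\ell)^2\}$. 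In phase $k>0$, I would form $\candidatespruned_{2^{k-1}}\circ\candidatespruned_{2^{k-1}}$, estimate $\counts_\Delta(Q,\setstr)$ for each of these $\leq(n\ell)^2$ strings via the Gaussian mechanism with the same $(\epsilon_1,\delta_1)$ budget, and again keep those above $\tau$. As in Lemma~\ref{lem:pruning}, if $|\candidatespruned_{2^k}|>n\ell$ at any point, abort; the post-processing construction of $\candidates_m$ for each $m$ (using length-$2^k$ prefixes and suffixes when $2^k<m<2^{k+1}$) and $\candidates=\bigcup_m \candidates_m$ is literally unchanged and touches $\setstr$ no further.

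Privacy follows from simple composition (Lemma~\ref{lem:composition_theorem}) over the $\lfloor\log\ell\rfloor+1$ Gaussian-mechanism invocations, which yields exactly $(\epsilon,\delta)$-differential privacy; the formation of $\candidates_m$ is post-processing. For accuracy I would condition on the event $E$ that all Gaussian error bounds hold simultaneously, which occurs with probability at least $1-\beta$ by a union bound. Conditioned on $E$, every string surviving to $\candidatespruned_{2^k}$ has true $\counts_\Delta \geq \tau-\alpha=\alpha\geq 1$, so it appears as a substring in $\setstr$; Lemma~\ref{lem:pattern_counts} (which also applies to $\counts_\Delta$ since $\counts_\Delta\leq\counts$) then gives $|\candidatespruned_{2^k}|\leq n\ell$, so the abort never triggers. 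Conversely, any $P$ of length $2^k$ not in $\candidatespruned_{2^k}$ has $\counts_\Delta(P,\setstr)<\tau+\alpha=3\alpha$. For $2^k<m<2^{k+1}$ and $P\notin\candidates_m$, the standard substring monotonicity $\counts(P,S)\leq\counts(P[0,2^k-1],S)$ carries over to $\counts_\Delta$ by taking minima with $\Delta$, so $\counts_\Delta(P,\setstr)\leq 3\alpha$ as well. Plugging in $\epsilon_1,\delta_1,\beta_1$ yields the claimed error bound, and $|\candidates|\leq \sum_m|\candidates_m|\leq \ell\cdot(n\ell)^2=n^2\ell^3$.

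There is no serious obstacle here: the proof is essentially a mechanical port of Lemma~\ref{lem:pruning} to the approximate-DP setting. The one place requiring care is confirming that replacing $\counts$ by $\counts_\Delta$ does not break the two key facts used in the original argument, namely substring monotonicity ($\counts_\Delta(P,S)\leq \counts_\Delta(P',S)$ when $P'$ is a substring of $P$, which holds because $\min$ is monotone) and the per-document length-$m$ total bound of $\ell$ (which holds a fortiori since $\counts_\Delta\leq\counts$). Both survive trivially, so the analysis goes through.
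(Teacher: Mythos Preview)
Your proposal is correct and follows essentially the same approach as the paper's proof: replace the Laplace mechanism by the Gaussian mechanism using the $L_2$-sensitivity bound $\sqrt{2\ell\Delta}$ from Corollary~\ref{cor:l2sens_same_length}, split the $(\epsilon,\delta)$ budget evenly over the $\lfloor\log\ell\rfloor+1$ phases, and carry the $\counts_\Delta$-based monotonicity and size arguments through unchanged. The paper's proof is exactly this mechanical port of Lemma~\ref{lem:pruning}, including the same threshold $\tau=2\alpha$, the same abort condition, and the same post-processing construction of $\candidates$.
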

\begin{proof}
First, we inductively construct sets $\candidatespruned_{2^0},\dots,\candidatespruned_{2^j}$ for $j=\lfloor \log \ell\rfloor$, where $\candidatespruned_{2^k}$ contains only strings of length $2^k$ which have a sufficiently high count in $\setstr$. For every $k$, we use an $(\epsilon_1,\delta_1)$-differentially private algorithm to construct $\candidatespruned_{2^k}$ from $\candidatespruned_{2^{k-1}}$, where $\epsilon_1=\epsilon/(\lfloor \log \ell\rfloor +1)$ and $\delta_1=\delta/(\lfloor \log \ell\rfloor +1)$, such that the full algorithm fulfills $(\epsilon,\delta)$-differential privacy. In a post-processing step, we construct sets $\candidates_m$ of strings of length $m$ for every $1\leq m\leq \ell$, such that every pattern of length $m$ which is not in $\candidates_m$ has a small count with high probability. We define $\candidates$ as the union of all $\candidates_m$.

{\bf Computing $\candidatespruned_{2^0}$.} First, we estimate $\counts_{\Delta}(\gamma,\setstr)$ of all letters $\gamma\in \Sigma$. By Corollary~\ref{cor:l2sens_same_length}, the sensitivity of $(\counts_{\Delta}(\gamma,\setstr))_{\gamma\in \Sigma}$ is bounded by $\sqrt{2\ell\Delta}$. Let $\beta_1=\beta/(\lfloor \log \ell\rfloor +1)$. Using the algorithm given by Corollary~\ref{cor:gaussian_mech}, we compute an estimate $\approxcount_{\Delta}(\gamma,\setstr)$ such that
$$\max_{\gamma\in\Sigma}|\approxcount_{\Delta}(\gamma,\setstr)-\counts_{\Delta}(\gamma,\setstr)|\leq2\epsilon_1^{-1}\sqrt{2\ell\Delta\ln(2/\delta_1)\ln(2|\Sigma|/\beta_1)}$$
with probability at least $1-\beta_1$, while satisfying $(\epsilon_1,\delta_1)$-differential privacy. In the following, let $\alpha=2\epsilon_1^{-1}\sqrt{2\ell\Delta\ln(2/\delta_1)\ln(2\max\{\ell^2n^2,|\Sigma|\}/\beta_1)}$.
We keep a pruned candidate set $\candidatespruned_{2^0}$ of strings of length 1 with an approximate count at least $\tau=2\alpha$, i.e., we keep all $\gamma$ with $\approxcount_{\Delta}(\gamma,\setstr)\geq \tau$. If $|\candidatespruned_{2^0}|>n\ell$, we stop the algorithm and return a fail message. 

{\bf Computing $\candidatespruned_{2^k}$ for $k=1,\dots,\lfloor \log \ell \rfloor$.} Given $\candidatespruned_{2^{k-1}}$ with $|\candidatespruned_{2^{k-1}}|\leq n\ell$ and $k\geq 1$, compute $\candidatespruned_{2^k}$ as follows: first, construct the set $\candidatespruned_{2^{k-1}}\circ\candidatespruned_{2^{k-1}}$, i.e., all strings that are a concatenation of two strings in $\candidatespruned_{2^{k-1}}$. There are at most $(n\ell)^2$ of these.  Again by Corollary~\ref{cor:l2sens_same_length}, the $L_2$-sensitivity of $\counts_{\Delta}$ for all strings in $\candidatespruned_{2^{k-1}}\circ \candidatespruned_{2^{k-1}}$ is bounded by $\sqrt{2\ell\Delta}$.  We use the algorithm from  Corollary~\ref{cor:gaussian_mech} to estimate $\counts_{\Delta}$ for all strings in $\candidatespruned_{2^{k-1}}\circ \candidatespruned_{2^{k-1}}$ up to an additive error at most  $2\epsilon_1^{-1}\sqrt{2\ell\Delta\ln(2/\delta_1)\ln(2\ell^2n^2/\beta_1)}\leq \alpha$ with probability at least $1-\beta_1$ and $(\epsilon_1,\delta_1)$-differential privacy. The set $\candidatespruned_{2^k}$ is the set of all strings in $\candidatespruned_{2^{k-1}}\circ \candidatespruned_{2^{k-1}}$ with an approximate count at least $\thresh$. Again, if $|\candidatespruned_{2^k}|> n\ell$, we stop the algorithm and return a fail message. 

{\bf Constructing $\candidates$.} From $\candidatespruned_{2^0},\dots,\candidatespruned_{2^j}$, we now construct, for each $m$ which is not a power of two, a set of candidate patterns $\candidates_m$, without taking $\setstr$ further into account. Specifically, for any fixed length $m$ with $2^k<m<2^{k+1}$, for $k=0,\dots, \lfloor \log \ell \rfloor$, we define $\candidates_m$ as the set of all strings $P$ of length $m$ such that $P[0,2^{k}-1]\in \candidatespruned_{2^k}$ and $P[m-2^{k},m-1]\in \candidatespruned_{2^k}$. For $m=2^k$ for some $k\in \{0,\dots, \lfloor \log \ell \rfloor\}$, we define $\candidates_m=\candidatespruned_{2^k}$. We define $\candidates=\bigcup_{m=1}^{\ell}\candidates_m$. 

{\bf Privacy.} Since there are $\lfloor\log \ell\rfloor +1$ choices of $j$, and for each we run an $(\epsilon_1,\delta_1)$-differentially private algorithm for $\epsilon=\epsilon/(\lfloor\log \ell\rfloor +1)$ and $\delta_1=\delta/(\lfloor\log \ell\rfloor +1)$, their composition is $(\epsilon,\delta)$-differentially private by Lemma~\ref{lem:composition_theorem}. Since constructing $\candidates$ from $\candidatespruned_{2^0},\dots,\candidatespruned_{2^j}$ is post-processing, the entire algorithm is $(\epsilon,\delta)$-differentially private.

{\bf Accuracy.} Since there are $\lfloor\log \ell\rfloor +1$ choices of $k$, by the choice of $\beta_1$, and by the union bound, all the error bounds hold together with probability at least $1-\beta$.  Let $E$ be the event that all the error bounds holding simultaneously. In the following, we conditioned on $E$.  
Conditioned on $E$, for any $P\in \candidatespruned_{2^k}$, $k=0,\dots,\lfloor \log \ell \rfloor$, we have that $\counts_{\Delta}(P,\setstr)\geq \thresh-\alpha\geq \alpha>1$, i.e., it appears at least once as a substring in $\setstr$. Note that any string in $\setstr$ has at most $\ell$ substrings of length $2^k$. Since there are $n$ strings in $\setstr$, we have $|\candidatespruned_{2^k}|\leq n\ell$, conditioned on $E$. Thus, conditioned on $E$, the algorithm does not abort. Additionally, any $P$ of length $2^k$ which is not in $\candidatespruned_{2^k}$ satisfies $\counts_{\Delta}(P,\setstr)<\thresh+\alpha=3\alpha$. 
  Now, any pattern $P$ of length $2^k<m<2^{k+1}$ for some $k\in\{0,\dots,\lfloor\log \ell \rfloor\}$ satisfies that $\counts_{\Delta}(P,\setstr)\leq \counts_{\Delta}(P[0,2^k-1],\setstr)$ and $\counts_{\Delta}(P,\setstr)\leq \counts_{\Delta}(P[m-2^k,m-1],\setstr)$. Since $P\notin\candidates_m$ if and only if either $P[0,2^k-1]\notin \candidatespruned_{2^k}$ or $P[m-2^k
  ,m-1]\notin \candidatespruned_{2^k}$, we have that $P\notin\candidates_m$ implies $\counts_{\Delta}(P,\setstr)<3\alpha$, conditioned on $E$. As $P[0,2^k-1]$ and $P[m-2^k,m-1]$ cover $P$ completely, we have $|\candidates_m|\leq |\candidatespruned_{2^k}|^2\leq (n\ell)^2$. Therefore, $|\candidates|=\sum_{m=1}^{\ell}|\candidates_m|\leq n^2\ell^3$.  
    \end{proof}

\subsection{Heavy Path Decomposition and Properties}
In the following, let $\candidatetrie$ be the trie of a set $\candidates$ fulfilling the properties of Lemma~\ref{lem:pruning_epsdel}. As a next step, we build the heavy path decomposition for $\candidatetrie$, as defined in Section~\ref{subsec:heavypath}. We redefine the difference sequence on a path for $\counts_{\Delta}$ (in Section~\ref{subsec:heavypath} we only defined it for $\counts$).

    \paragraph{Difference sequence} For any path $p=v_0,v_1,\dots, v_{|p|-1}$ in the trie $\candidatetrie$, the \emph{difference sequence of $\counts_{\Delta}$ on $p$} is given by the $|p|-1$ dimensional vector $\diff_p(\setstr)[i]=\counts_{\Delta}(\str{v_i},\setstr)-\counts_{\Delta}(\str{v_{i-1}},\setstr)$ for $i=1\dots |p|-1$.

 We need the following generalizations of Observation~\ref{obs:sensitvity_node} and Lemma~\ref{lem:sensitvity_path}: 
\begin{lemma}\label{lem:sensitivity_tree_Del}
 Let $\setstr$ and $\setstr'$ be neighboring data sets, such that $\setstr' = \setstr\setminus\{S\}\cup\{S'\}$.
\begin{enumerate}
    \item \label{lem:sensitvity_node_Del} For any node $v$ in the trie $\candidatetrie$, we have $$|\counts_{\Delta}(\str{v}, \setstr)-\counts_{\Delta}(\str{v}, \setstr')|= |\counts_{\Delta}(\str{v},S)- \counts_{\Delta}(\str{v},S')|.$$
    \item \label{lem:sensitvity_path_Del} For any path $p$ with root $r$ in the trie $\candidatetrie$, we have $$||\diff_p(\setstr)-\diff_p(\setstr')||_1\leq \counts_{\Delta}(\str{r},S)+ \counts_{\Delta}(\str{v},S').$$
\end{enumerate}
\end{lemma}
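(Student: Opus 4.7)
The plan is to mirror the proof of Lemma~\ref{lem:sensitivity_tree} closely, replacing $\counts$ with $\counts_{\Delta}$ throughout, and verifying that the two structural facts the earlier proof relied on — additivity of $\counts$ over the data set and monotonicity of $\counts$ along root-to-descendant paths in $\candidatetrie$ — still hold for $\counts_{\Delta}$.

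For item~\ref{lem:sensitvity_node_Del}, I would just unfold the definition. Since $\counts_{\Delta}(\str{v},\setstr) = \sum_{T\in\setstr}\counts_{\Delta}(\str{v},T)$ and $\setstr'$ differs from $\setstr$ only by replacing $S$ with $S'$, every other term cancels and what remains is exactly $\counts_{\Delta}(\str{v},S)-\counts_{\Delta}(\str{v},S')$ in absolute value. This is completely analogous to item~\ref{lem:sensitvity_node}.

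For item~\ref{lem:sensitvity_path_Del}, I would first apply the triangle inequality on each coordinate of $\diff_p(\setstr)-\diff_p(\setstr')$ to split the $L_1$-bound into a contribution from $S$ and one from $S'$, giving
\begin{align*}
\|\diff_p(\setstr)-\diff_p(\setstr')\|_1
&\le \sum_{i=1}^{|p|-1}\bigl|\counts_{\Delta}(\str{v_i},S)-\counts_{\Delta}(\str{v_{i-1}},S)\bigr| \\
&\quad + \sum_{i=1}^{|p|-1}\bigl|\counts_{\Delta}(\str{v_i},S')-\counts_{\Delta}(\str{v_{i-1}},S')\bigr|.
\end{align*}
Then, for a fixed document (say $S$), I would argue that along the heavy path $p = v_0,\dots,v_{|p|-1}$ with $v_0=r$, the sequence $\counts_{\Delta}(\str{v_i},S)$ is non-increasing in $i$. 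This is the key step — and the only place where we need anything beyond what was used for the $\counts$ version. The point is that $\str{v_{i-1}}$ is a prefix of $\str{v_i}$, so every occurrence of $\str{v_i}$ in $S$ induces an occurrence of $\str{v_{i-1}}$ starting at the same position, giving $\counts(\str{v_i},S)\le \counts(\str{v_{i-1}},S)$; since $x\mapsto \min(\Delta,x)$ is monotone non-decreasing, the same inequality lifts to $\counts_{\Delta}$.

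Given monotonicity, the sum telescopes: $\sum_{i=1}^{|p|-1}|\counts_{\Delta}(\str{v_i},S)-\counts_{\Delta}(\str{v_{i-1}},S)| = \counts_{\Delta}(\str{r},S)-\counts_{\Delta}(\str{v_{|p|-1}},S)\le \counts_{\Delta}(\str{r},S)$, and symmetrically for $S'$. Summing the two bounds yields the desired $\counts_{\Delta}(\str{r},S)+\counts_{\Delta}(\str{r},S')$. I do not expect any genuine obstacle here; the only thing to be careful about is that the monotonicity of $\min(\Delta,\cdot)$ is used to preserve the inequality, which is routine.
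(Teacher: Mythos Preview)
Your proposal is correct and follows essentially the same argument as the paper: item~\ref{lem:sensitvity_node_Del} by unfolding the additive definition of $\counts_{\Delta}$, and item~\ref{lem:sensitvity_path_Del} by the triangle inequality to separate the $S$ and $S'$ contributions, then telescoping using the monotonicity $\counts_{\Delta}(\str{v_i},S)\le \counts_{\Delta}(\str{v_{i-1}},S)$ along the path. Your explicit justification of monotonicity via the monotonicity of $x\mapsto\min(\Delta,x)$ is a nice added detail the paper leaves implicit.
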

\begin{proof} Item~\ref{lem:sensitvity_node_Del} follows from the definition of $\counts_{\Delta}$. For item~\ref{lem:sensitvity_path_Del}, note that if $v'$ is a descendant of $v$, then $\str{v}$ is a prefix of $\str{v'}$ and therefore $\counts_ {\Delta}(\str{v},S)\geq \counts_ {\Delta}(\str{v'},S)$. Now item~\ref{lem:sensitvity_path_Del} follows from the fact that $\counts_{\Delta}(\str{v_i},S)$ is monotonically non-increasing in $i$ for $p=v_0, v_1,\dots, v_{|p|-1}$ with $v_0=r$, and from $0\leq \counts_{\Delta}(\str{v_i},S)\leq \counts(\str{r},S)$, for any descendant $v_i$ of $r$. The same argument applies for $S'$ and thus item~\ref{lem:sensitvity_path_Del} follows.
\end{proof}
\begin{lemma}\label{lem:all_roots_Del}
    Let $r_0,r_1,\dots, r_k$ be the roots of the paths of the heavy path decomposition of $\candidatetrie$, where $r_0$ is the root of $\candidatetrie$. Then  for any string $S$ of length at most $\ell$, we have $\sum_{i=0}^k \counts_{\Delta}(\str{r_i},S))\leq \ell(\lfloor\log |\candidatetrie|\rfloor+1)=O( \ell \log (n\ell))$.
\end{lemma}
\begin{proof}
    This lemma is a direct Corollary of Lemma~\ref{lem:all_roots}.
\end{proof}

By Lemma~\ref{lem:sensitivity_tree_Del}.\ref{lem:sensitvity_node_Del} and Lemma~\ref{lem:all_roots_Del}, the $L_1$-sensitivity of $\left(\counts_{\Delta}(\str{r_i},\setstr)\right)_{i=0}^k$ is bounded by $2\ell(\lfloor\log |\candidatetrie|\rfloor +1)=O(\ell \log (n\ell))$. Since also $\counts_{\Delta}(\str{r_i},S)\leq \Delta$ for every $i=0,\dots,k$ and any string $S$, we have that the $L_2$-sensitivity of $\left(\counts_{\Delta}(\str{r_i},\setstr)\right)_{i=0}^k$ is bounded by $O(\sqrt{\Delta\ell \log (n\ell)})$ by Lemma~\ref{lem:l2norm}. Additionally, note that since $\candidatetrie$ has at most $n^2\ell^3$ leaves, there are at most $n^2\ell^3$ heavy paths. Together with Corollary~\ref{cor:gaussian_mech}, this gives:
\begin{corollary}\label{cor:heavypathroots_epsdel}
Let $r_0,\dots, r_k$ be the roots of the heavy paths of $\candidatetrie$. For any $\epsilon>0$, $\delta>0$, and $0<\beta<1$, 
    there exists an $(\epsilon,\delta)$-differentially private algorithm, which estimates $\counts_{\Delta}(\str{r_i}, \setstr)$ for the heavy path roots $r_0,\dots, r_k$ up to an additive error at most $O(\epsilon^{-1}\sqrt{\Delta \ell\log(n\ell)\ln(1/\delta)\ln(k/\beta)})=O(\epsilon^{-1}\sqrt{\Delta \ell\ln(1/\delta)}\ln(n\ell/\beta))$ with probability at least $1-\beta$.
\end{corollary}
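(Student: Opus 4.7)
The plan is to invoke the Gaussian mechanism (Corollary~\ref{cor:gaussian_mech}) on the vector $f(\setstr) = \left(\counts_{\Delta}(\str{r_i},\setstr)\right)_{i=0}^{k}$ whose dimension is $k+1$, and to control the $L_2$-sensitivity of $f$ by combining an $L_1$-sensitivity bound coming from the heavy-path decomposition with the per-coordinate bound $\Delta$ coming from the definition of $\counts_{\Delta}$.

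First I would bound the $L_1$-sensitivity of $f$. Fix neighbors $\setstr,\setstr'$ with $\setstr' = \setstr\setminus\{S\}\cup\{S'\}$. By Lemma~\ref{lem:sensitivity_tree_Del}.\ref{lem:sensitvity_node_Del}, each coordinate changes by at most $|\counts_{\Delta}(\str{r_i},S)-\counts_{\Delta}(\str{r_i},S')|$, and the triangle inequality lets me upper bound the sum over $i$ by $\sum_{i} \counts_{\Delta}(\str{r_i},S) + \sum_{i} \counts_{\Delta}(\str{r_i},S')$. Each of these two sums is $O(\ell\log(n\ell))$ by Lemma~\ref{lem:all_roots_Del} (using $|\candidatetrie| \leq n^2\ell^4$), so $\|f(\setstr)-f(\setstr')\|_1 = O(\ell\log(n\ell))$.

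Next I would promote this to an $L_2$ bound. Since $\counts_{\Delta}(\str{r_i},S)\leq \Delta$ and $\counts_{\Delta}(\str{r_i},S')\leq \Delta$ for every $i$, each coordinate of $f(\setstr)-f(\setstr')$ lies in $[-\Delta,\Delta]$, so its $L_\infty$-norm is at most $\Delta$. Applying Lemma~\ref{lem:l2norm} with $M = O(\ell\log(n\ell))$ and entry cap $\Delta$ yields $L_2$-sensitivity $O\bigl(\sqrt{\Delta\ell\log(n\ell)}\bigr)$.

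Finally I would feed this $L_2$-sensitivity into the Gaussian mechanism. The number of coordinates is $k+1$; since $\candidatetrie$ has at most $n^2\ell^3$ leaves (by Lemma~\ref{lem:pruning_epsdel}), and each heavy path has a distinct leaf as its bottom endpoint, we have $k+1 \leq n^2\ell^3$, hence $\log(k/\beta) = O(\log(n\ell/\beta))$. Corollary~\ref{cor:gaussian_mech} then returns private estimates $\approxcount_{\Delta}(\str{r_i},\setstr)$ whose maximum error is $O\bigl(\epsilon^{-1}\sqrt{\Delta\ell\log(n\ell)\ln(1/\delta)\ln(k/\beta)}\bigr) = O\bigl(\epsilon^{-1}\sqrt{\Delta\ell\ln(1/\delta)}\,\ln(n\ell/\beta)\bigr)$ with probability at least $1-\beta$, as claimed. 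No step is expected to be a real obstacle — the proof is essentially a bookkeeping assembly of the facts already stated; the only place one has to be slightly careful is verifying that Lemma~\ref{lem:l2norm} applies to the \emph{difference} vector $f(\setstr)-f(\setstr')$ (not to $f(\setstr)$ itself), which is fine because both $L_1$ and $L_\infty$ bounds I derived are bounds on the difference.
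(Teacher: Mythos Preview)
Your proposal is correct and follows essentially the same route as the paper: bound the $L_1$-sensitivity of the root-count vector via Lemmas~\ref{lem:sensitivity_tree_Del}.\ref{lem:sensitvity_node_Del} and~\ref{lem:all_roots_Del}, cap each coordinate by $\Delta$, apply Lemma~\ref{lem:l2norm} to get the $L_2$-sensitivity $O(\sqrt{\Delta\ell\log(n\ell)})$, and then invoke Corollary~\ref{cor:gaussian_mech} with $k+1\le n^2\ell^3$. Your extra remark about applying Lemma~\ref{lem:l2norm} to the difference vector is exactly the right sanity check.
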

Additionally, Lemma~\ref{lem:all_roots_Del} and Lemma~\ref{lem:sensitivity_tree_Del}.\ref{lem:sensitvity_path_Del} give that the \emph{sum of} $L_1$-sensitivities of the difference sequences \emph{over all heavy paths} is bounded by $2\ell(\lfloor\log |\candidatetrie|\rfloor +1)$, and Lemma~\ref{lem:sensitivity_tree_Del}.\ref{lem:sensitvity_path_Del} gives that the $L_1$-sensitivity of the difference sequence on \emph{one} heavy path is bounded by $\Delta$. As a next step, we show how we can compute all \emph{prefix sums} of the difference sequences of all heavy paths with an error roughly $O(\sqrt{\ell(\lfloor\log |\candidatetrie|\rfloor+1)\Delta})$ (up to poly-logarithmic terms). To do this, we extend Lemma~\ref{lem:binary_trees} to $(\epsilon,\delta)$-differential privacy.

\begin{lemma}\label{lem:binary_trees_epsdel}
Let $\chi^{*}$ be any universe of possible data sets. Let $a^{(1)}, \dots, a^{(k)}$ be $k$ functions, where the output of every function is a sequence of length $T$, i.e. $a^{(i)}:\chi^{*}\rightarrow \mathbb{N}^{T}$ for all $i\in[1,k]$. If $a^{(1)}, \dots, a^{(k)}$ fulfill the following properties for any two neighboring $x$ and $x'$ from $\chi^{*}$:
\begin{itemize}
    \item $\sum_{j=1}^T |a^{(i)}(x)[j]-a^{(i)}(x')[j]|\leq \Delta$ for all $i=1,\dots, k$,
    \item $\sum_{i=1}^k\sum_{j=1}^T |a^{(i)}(x)[j]-a^{(i)}(x')[j]|\leq L$,
\end{itemize} 
then for any $\epsilon>0$, any $\delta>0$, and $0<\beta<1$, there exists an $(\epsilon,\delta)$-differentially private algorithm computing for every $i\in[1,k]$ all prefix sums of $a^{(i)}(x)$ with additive error at most $O(\epsilon^{-1}\sqrt{L\Delta\ln(1/\delta)\log(Tk/\beta)}\log T)$ with probability at least $1-\beta$.
\end{lemma}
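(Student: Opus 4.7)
The plan is to mirror the binary-tree construction used in Lemma~\ref{lem:binary_trees}, replacing Laplace noise by Gaussian noise so we can exploit the extra hypothesis that each sequence $a^{(i)}$ by itself has small $L_1$-sensitivity $\Delta$. Concretely, for every $i\in[1,k]$ and every interval $[b,e]$ in the dyadic decomposition $\mathcal{I}$ of $[1,T]$, I would compute the true partial sum $s^{(i)}_{[b,e]}(x)=\sum_{j\in[b,e]} a^{(i)}(x)[j]$ and release $\tilde{s}^{(i)}_{[b,e]}=s^{(i)}_{[b,e]}(x)+Y^{(i)}_{[b,e]}$, where $Y^{(i)}_{[b,e]}$ are i.i.d.\ $N(0,\sigma^2)$ variables for a variance $\sigma^2$ to be chosen. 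Each prefix sum is then estimated as in Lemma~\ref{lem:binary_trees} by summing the at most $\lfloor\log T\rfloor+1$ noisy partial sums indexed by $\mathcal{I}_{[1,m]}$.

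\textbf{Privacy via $L_2$-sensitivity.} The key step is to bound the $L_2$-sensitivity of the joint vector $v(x)=\bigl(s^{(i)}_{[b,e]}(x)\bigr)_{i\in[1,k],[b,e]\in\mathcal{I}}$ and apply Corollary~\ref{cor:gaussian_mech}. For neighboring $x\sim x'$, each coordinate of $v(x)-v(x')$ is bounded in absolute value by $\Delta$ (since one partial sum on the $i$th sequence is controlled by $\sum_{j=1}^T |a^{(i)}(x)[j]-a^{(i)}(x')[j]|\leq \Delta$), while the $L_1$-norm of $v(x)-v(x')$ is at most $L(\lfloor\log T\rfloor+1)$ by the same charging argument as in Lemma~\ref{lem:binary_trees} (each coordinate $j$ of the inner sequences lies in at most $\lfloor\log T\rfloor+1$ intervals of $\mathcal{I}$). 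Applying Lemma~\ref{lem:l2norm} with $M=L(\lfloor\log T\rfloor+1)$ and the given $\Delta$ yields $\|v(x)-v(x')\|_2=O(\sqrt{L\Delta\log T})$. Plugging this $L_2$-sensitivity bound into the Gaussian mechanism gives $(\epsilon,\delta)$-differential privacy for $\sigma=O(\epsilon^{-1}\sqrt{L\Delta\log T \ln(1/\delta)})$. Post-processing preserves privacy, so reconstructing the prefix sums from the noisy partial sums is free.

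\textbf{Accuracy.} For each prefix sum, the noise accumulated is $Y^{(i)}_{[1,m]}=\sum_{[b,e]\in\mathcal{I}_{[1,m]}}Y^{(i)}_{[b,e]}$, a sum of at most $\lfloor\log T\rfloor+1$ independent Gaussians of variance $\sigma^2$, hence itself Gaussian with standard deviation $O(\sigma\sqrt{\log T})$. Using a standard Gaussian tail bound (Lemma~\ref{lem:gaussian_tail}) together with a union bound over the $kT$ prefix sums gives that with probability at least $1-\beta$,
\[
\max_{i,m}|Y^{(i)}_{[1,m]}|=O\!\left(\sigma\sqrt{\log T\cdot\log(kT/\beta)}\right)=O\!\left(\epsilon^{-1}\log T\sqrt{L\Delta\ln(1/\delta)\log(kT/\beta)}\right),
\]
which matches the target bound.

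\textbf{Main obstacle.} The only nontrivial step is the $L_2$-sensitivity analysis of the joint partial-sum vector: one has to combine the two different sensitivity hypotheses (per-sequence $L_1$-bound $\Delta$ and global $L_1$-bound $L$) correctly through Lemma~\ref{lem:l2norm}. All remaining ingredients (Gaussian mechanism, dyadic prefix decomposition, Gaussian tail bound and union bound) are essentially routine once that sensitivity is in place.
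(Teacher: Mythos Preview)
Your proposal is correct and follows essentially the same approach as the paper: both use the dyadic decomposition with Gaussian noise, bound the $L_2$-sensitivity of the joint partial-sum vector via Lemma~\ref{lem:l2norm} (using the per-sequence bound $\Delta$ for the $L_\infty$-norm and the global bound $L(\lfloor\log T\rfloor+1)$ for the $L_1$-norm), and then apply the Gaussian tail bound with a union bound over all $kT$ prefix sums.
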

For the proof, we need the following fact on the sum of Normal Distributions.

\begin{fact}[Sum of Normal Distributions]\label{fact:sum_of_gaussians} Let $X_1\approx N(0,\sigma_1^2)$ and $X_2\approx N(0,\sigma_2^2)$ be independently drawn random variables. Then $Y=X_1+X_2$ fulfills $Y\approx N(0, \sigma_1^2+\sigma_2^2)$.
\end{fact}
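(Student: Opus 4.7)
The plan is to prove this standard identity via characteristic functions (or equivalently moment generating functions), which avoids direct integration and leverages independence cleanly. Recall that the characteristic function of a random variable $Z$ is $\varphi_Z(t) = \mathbb{E}[e^{itZ}]$, and that two real-valued random variables with the same characteristic function have the same distribution (uniqueness of Fourier transforms on $\mathbb{R}$).

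First I would compute $\varphi_{X}(t)$ for $X \approx N(0,\sigma^2)$ by evaluating the Gaussian integral
\[
\varphi_X(t) = \int_{-\infty}^{\infty} e^{itx} \cdot \frac{1}{\sigma\sqrt{2\pi}} e^{-x^2/(2\sigma^2)}\, dx = e^{-\sigma^2 t^2/2},
\]
which follows by completing the square in the exponent and shifting the contour of integration (a standard calculation). Next, since $X_1$ and $X_2$ are independent, the characteristic function of $Y = X_1 + X_2$ factors as $\varphi_Y(t) = \varphi_{X_1}(t)\,\varphi_{X_2}(t) = e^{-\sigma_1^2 t^2/2} \cdot e^{-\sigma_2^2 t^2/2} = e^{-(\sigma_1^2+\sigma_2^2)t^2/2}$. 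This is exactly the characteristic function of $N(0,\sigma_1^2+\sigma_2^2)$, so uniqueness of the characteristic function forces $Y \approx N(0,\sigma_1^2+\sigma_2^2)$.

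An alternative route, if one prefers to keep the proof elementary, is to compute the density of $Y$ directly as the convolution $f_Y(y) = \int f_{X_1}(x) f_{X_2}(y-x)\, dx$ and show, by completing the square in the combined exponent and performing a Gaussian integral in $x$, that $f_Y(y) = \frac{1}{\sqrt{2\pi(\sigma_1^2+\sigma_2^2)}} \exp\!\bigl(-y^2/(2(\sigma_1^2+\sigma_2^2))\bigr)$. The only technical obstacle in either approach is carrying out the Gaussian integral correctly (the completion of the square in the convolution form is more tedious than the characteristic-function argument), so I would favor the characteristic-function route for brevity.
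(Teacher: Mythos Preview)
Your argument is correct: the characteristic-function computation is standard, and the factorization under independence together with the uniqueness theorem yields the result cleanly; the convolution alternative you sketch is also valid. Note, however, that the paper does not actually prove this statement---it is recorded as a \texttt{fact} in the appendix and simply invoked without justification---so there is no original proof to compare against; your write-up supplies strictly more detail than the paper does.
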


\begin{proof}[Proof of Lemma~\ref{lem:binary_trees_epsdel}]

The algorithm builds a copy of the binary tree mechanism \cite{Dwork2010} for each of the $k$ sequences. 
The dyadic decomposition of an interval $[1,T]$ is given by the set $\mathcal{I}=\{[j\cdot 2^i+1, (j+1)2^{i}],0\leq j\leq \lceil T/2^i\rceil -1, 0\leq i \leq \lfloor \log T \rfloor\}$. It has the property that every interval $[1, m]$ for $m\leq T$ is the union of no more than $\lfloor\log T\rfloor+1$ disjoint intervals from $\mathcal{I}$. Denote these intervals by $\mathcal{I}_{[1,m]}$.  The algorithm is now as follows: For every $i$ in $[1,k]$ and interval $[b,e]\in\mathcal{I}$, we independently draw a random variable $Y_{[b,e]}^{(i)}\approx N(0, \sigma^2)$, where $\sigma=\epsilon^{-1}\sqrt{2L\Delta (\lfloor \log T \rfloor +1)\ln(2/\delta)}$. Given the sequences $a^{(1)}, \dots, a^{(k)}$ and the data set $x$ we compute for every sequence $a^{(i)}(x)$, and every interval $[b,e]\in\mathcal{I}$, the partial sum $s^{(i)}_{[b,e]}(x)=a^{(i)}(x)[b]+\dots+a^{(i)}(x)[e]$. For each such partial sum, we compute an approximate sum $\tilde{s}^{(i)}_{[b,e]}=s^{(i)}_{[b,e]}(x)+Y_{[b,e]}^{(i)}$. We now compute the approximate $m$th prefix sum of the $i$th sequence by $\sum_{[b,e]\in \mathcal{I}_{[1,m]}} \tilde{s}^{(i)}_{[b,e]}$, for all $i=1,\dots, k$ and all $m=1,\dots, T$.

{\bf Privacy.} We show that the $L_2$-sensitivity of the sequence of $s^{(i)}_{[b,e]}(x)$  with $i=1,\dots k$ and $[b,e]\in \mathcal{I}$ is at most $\sqrt{L\Delta(\lfloor \log T \rfloor +1)}$: Recall that $L\geq\max_{x\sim x'}\sum_{i=1}^k \sum_{j=1}^{T}|a^{(i)}(x)[j]-a^{(i)}(x')[j]|$. Thus, $L_1$-sensitivity for all $s^{(i)}_{[b,e]}(x)$ is bounded by 
\begin{align*}
    \max_{x\sim x'}\sum_{i=1}^k\sum_{[b,e]\in \mathcal{I}}|s^{(i)}_{[b,e]}(x)-s^{(i)}_{[b,e]}(x')|&\leq  \max_{x\sim x'}\sum_{i=1}^k\sum_{[b,e]\in \mathcal{I}}\sum_{j\in [b,e]}|a^{(i)}(x)[j]-a^{(i)}(x')[j]|\\&=\max_{x\sim x'}\sum_{i=1}^k\sum_{j=1}^{T}\sum_{[b,e]\in \mathcal{I}:j\in [b,e]}|a^{(i)}(x)[j]-a^{(i)}(x')[j]|\\&\leq L(\lfloor \log T \rfloor +1),
\end{align*}
where the last inequality follows from the fact that any $j\in[1,T]$ appears in no more than $\lfloor \log T \rfloor +1$ many sets in $\mathcal{I}$. 
Further, since $\sum_{j=1}^T |a^{(i)}(x)[j]-a^{(i)}(x')[j]|\leq \Delta$ for all $i=1,\dots, k$, we have in particular $|s^{(i)}_{[b,e]}(x)-s^{(i)}_{[b,e]}(x')|\leq \Delta$ for all $[b,e]\in \mathcal{I}$ and all $i=1,\dots,k$. Lemma~\ref{lem:l2norm} gives that the $L_2$-sensitivity for all $s^{(i)}_{[b,e]}(x)$ is bounded by $\sqrt{\Delta L(\lfloor \log T \rfloor +1))}$.
Thus, the algorithm above is $(\epsilon,\delta)$-differentially private by Lemma~\ref{lem:gaussianmech}. 

{\bf Accuracy.} 
For any $i$ and $m$, the estimate of the $m$th prefix sum is given by $\sum_{I\in\mathcal{I}_{[1,m]}}\tilde{s}^{(i)}_I=\sum_{j=1}^m a^{(i)}(x)[j]+ \sum_{I\in\mathcal{I}_{[1,m]}}Y^{(i)}_I$. Note that $Y_{[1,m]}^{(i)}= \sum_{I\in\mathcal{I}_{[1,m]}}Y^{(i)}_I$ is the sum of at most $\lfloor \log T\rfloor +1$ independent random variables with distribution $N(0,\sigma^2)$ with $\sigma=\epsilon^{-1}\sqrt{2L\Delta(\lfloor \log T\rfloor +1)\ln(2/\delta)}$. Thus, $Y_{[1,m]}^{(i)}\approx N(0,\sigma_1^2)$, where $\sigma_1^2\leq (\lfloor \log T\rfloor +1)\sigma^2$, by Fact~\ref{fact:sum_of_gaussians}. By the Gaussian tail bound (see Lemma~\ref{lem:gaussian_tail}) we have 
        \begin{align*}
            \Pr[|Y_{[1,m]}^{(i)}|\geq \sigma_1\sqrt{\log(Tk/\beta)}]\leq \frac{\beta}{Tk}.
        \end{align*}
        Thus, with probability at least $1-\beta$, the error on all prefix sums is bounded by 
        \begin{align*}
            \sigma_1\sqrt{\log(Tk/\beta)}&=\sigma\sqrt{(\lfloor \log T\rfloor +1)\log(Tk/\beta)}\\&=\epsilon^{-1}\sqrt{2L\Delta (\lfloor \log T\rfloor +1)\ln(2/\delta)}\sqrt{(\lfloor \log T\rfloor +1)\log(Tk/\beta)}\\& = O(\epsilon^{-1}\sqrt{L\Delta\ln(1/\delta)\log(Tk/\beta)}\log T).
        \end{align*}
 
\end{proof}

Lemma~\ref{lem:binary_trees_epsdel} now gives the following corollary with $L=2\ell(\lfloor\log |\candidatetrie|\rfloor +1)=O(\ell \log(n\ell))$, $k\leq n^2\ell^3$, and $T=\ell$. 
\begin{corollary}\label{cor:heavypaths_epsdel}
    Let $p_0,\dots,p_k$ be all heavy paths of $\candidatetrie$. For any $\epsilon>0$, $\delta>0$, and $0<\beta<1$, there exists an $(\epsilon,\delta)$-differentially private algorithm, which estimates $\sum_{j=1}^i \diff_{p_m}(\setstr)[j]$ for all $i=1,\dots,|p_m|$ and all $m=0,\dots,k$ up to an additive error at most 
    \begin{align*}
        O\left(\epsilon^{-1}\sqrt{\Delta\ell \log(n\ell)\ln(1/\delta)\log(\ell k/\beta)}\log \ell\right)=O\left(\epsilon^{-1}\sqrt{\Delta\ell\ln(1/\delta)}\log(n\ell/\beta)\log \ell\right).
    \end{align*}
\end{corollary}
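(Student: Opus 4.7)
The plan is to obtain the corollary as an immediate instantiation of Lemma~\ref{lem:binary_trees_epsdel}. Concretely, for each heavy path $p_m = v_0^{(m)}, v_1^{(m)}, \ldots, v_{|p_m|-1}^{(m)}$ I would set the sequence $a^{(m)}(\setstr) = \diff_{p_m}(\setstr)$, viewed as a vector of length at most $\ell$ (padding with zeros if necessary so that all sequences share a common length $T = \ell$, which does not affect sensitivities or prefix sums). Since $\candidatetrie$ has at most $n^2\ell^3$ leaves, there are at most $k \leq n^2\ell^3$ heavy paths, so the number of sequences in the lemma is bounded accordingly.

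Next I would verify the two sensitivity hypotheses of Lemma~\ref{lem:binary_trees_epsdel}. For the per-sequence bound, fix neighboring $\setstr, \setstr'$ with $\setstr' = \setstr \setminus \{S\} \cup \{S'\}$. Lemma~\ref{lem:sensitivity_tree_Del}.\ref{lem:sensitvity_path_Del} gives $\|\diff_{p_m}(\setstr) - \diff_{p_m}(\setstr')\|_1 \leq \counts_\Delta(\str{r_m}, S) + \counts_\Delta(\str{r_m}, S') \leq 2\Delta$, where $r_m$ is the root of $p_m$; so the per-sequence $L_1$-sensitivity is $O(\Delta)$. For the global bound, summing over all heavy paths and invoking Lemma~\ref{lem:all_roots_Del} applied to $S$ and to $S'$ separately gives
\[
  \sum_{m=0}^{k} \|\diff_{p_m}(\setstr) - \diff_{p_m}(\setstr')\|_1 \leq \sum_{m=0}^k \bigl(\counts_\Delta(\str{r_m}, S) + \counts_\Delta(\str{r_m}, S')\bigr) = O(\ell \log(n\ell)),
\]
so the parameter $L$ in the lemma can be taken as $O(\ell \log(n\ell))$.

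Finally, I would plug $L = O(\ell \log(n\ell))$, $\Delta$, $T = \ell$, and $k \leq n^2 \ell^3$ into the error bound $O\bigl(\epsilon^{-1}\sqrt{L\Delta \ln(1/\delta) \log(Tk/\beta)} \log T\bigr)$ of Lemma~\ref{lem:binary_trees_epsdel}. This yields
\[
  O\!\left(\epsilon^{-1}\sqrt{\Delta \ell \log(n\ell) \ln(1/\delta)\log(\ell n^2 \ell^3/\beta)} \log \ell \right) = O\!\left(\epsilon^{-1}\sqrt{\Delta \ell \log(n\ell) \ln(1/\delta)} \log(n\ell/\beta) \log \ell \right),
\]
where the second form uses $\log(\ell n^2\ell^3/\beta) = O(\log(n\ell/\beta))$ together with $\sqrt{\log(n\ell)\log(n\ell/\beta)} = O(\log(n\ell/\beta))$, matching the claimed bound. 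Privacy is inherited directly from Lemma~\ref{lem:binary_trees_epsdel}, as the algorithm is nothing more than running the mechanism of that lemma on the difference sequences.

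No step is especially delicate here; the main verification is the two sensitivity calculations, both of which were set up precisely for this purpose in Lemmas~\ref{lem:sensitivity_tree_Del} and~\ref{lem:all_roots_Del}. The only small care point is to make sure we apply Lemma~\ref{lem:all_roots_Del} to \emph{both} $S$ and $S'$ when summing $L_1$-sensitivities across all heavy paths, so that the total $L$ does not accidentally lose a factor.
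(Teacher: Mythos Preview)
Your proposal is correct and follows exactly the same approach as the paper: instantiate Lemma~\ref{lem:binary_trees_epsdel} with the difference sequences on the heavy paths, using Lemma~\ref{lem:sensitivity_tree_Del}.\ref{lem:sensitvity_path_Del} to bound the per-path $L_1$-sensitivity by $O(\Delta)$ and Lemma~\ref{lem:all_roots_Del} to bound the total $L_1$-sensitivity by $L=O(\ell\log(n\ell))$, then plug in $T=\ell$ and $k\leq n^2\ell^3$. You are simply more explicit than the paper about the padding to common length~$\ell$ and the arithmetic simplification $\sqrt{\log(n\ell)\log(n\ell/\beta)}=O(\log(n\ell/\beta))$, both of which are fine.
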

\subsection{Putting It Together}
Our full algorithm runs the algorithms given by Lemma~\ref{lem:pruning_epsdel}, Corollary~\ref{cor:heavypathroots_epsdel} and Corollary~\ref{cor:heavypaths_epsdel} in sequence with privacy parameters $\epsilon'=\epsilon/3$, $\delta'=\delta/3$, and failure probability $\beta'=\beta/3$. For a heavy path $p= v_0, v_1,\dots, v_{|p|-1}$ with root $r=v_0$, let $\approxcount_{\Delta}(\str{r},\setstr)$ be the approximate counts computed by the algorithm from Corollary~\ref{cor:heavypathroots_epsdel}, and let $\mathrm{sums}^*_p[i]$ be the approximate estimate of $\sum_{j=1}^i \diff_p(\setstr)[j]$. Then for any node $v_i$ on the path with $i>0$, we compute $\approxcount_{\Delta}(\str{v_i},\setstr)=\approxcount_{\Delta}(\str{r},\setstr)+\mathrm{sums}^*_p[i]$. Let $\alpha$ be the sum of the error bounds for Corollary~\ref{cor:heavypathroots_epsdel} and Corollary~\ref{cor:heavypaths_epsdel} which each hold with probability $1-\beta'$. We have $\alpha=O\left(\epsilon^{-1}\sqrt{\Delta\ell\ln(1/\delta)}\log(n\ell/\beta)\log \ell\right)$. As a final step, we prune $\candidatetrie$ by traversing the trie from the root, and for any node $v$ with $\approxcount_{\Delta}(\str{v_i},\setstr)<2\alpha$, we delete $v$ and its subtree. We return the resulting pruned trie $T^*$ together with approximate counts $\approxcount_{\Delta}(\str{v},\setstr)$ for every node $v 
\in T^*$.

By the composition theorem (Lemma~\ref{lem:composition_theorem}), this algorithm is $(\epsilon,\delta)$-differentially private. By a union bound, all its error guarantees hold together with probability $1-3\beta'=1-\beta$. Thus, we get with probability $1-\beta$ a trie $T^*$, with the following properties:
\begin{itemize}
    \item  By Corollaries~\ref{cor:heavypathroots_epsdel} and \ref{cor:heavypaths_epsdel}, for each node $v\in T^*$
    $$|\approxcount_{\Delta}(\str{v},\setstr)-\counts_{\Delta}(\str{v},\setstr)|\leq \alpha=O\left(\epsilon^{-1}\sqrt{\Delta\ell\ln(1/\delta)}\log(n\ell/\beta)\log \ell\right)$$

    \item Every string $P\in \Sigma^{[1,\ell]}$ which is not present $T^*$ was either: (i) not present in $\candidates$, in which case $$\counts_{\Delta}(P,\setstr)=O\left(\epsilon^{-1}\log \ell\sqrt{\ell\Delta\log(\log\ell/\delta)\log(\max\{\ell^2n^2,|\Sigma|\}/\beta)}\right)$$ by Lemma~\ref{lem:pruning_epsdel}; or (ii) deleted in the pruning process of $\candidatetrie$, in which case $$\counts_{\Delta}(P,\setstr)<3\alpha=O\left(\epsilon^{-1}\sqrt{\Delta\ell\ln(1/\delta)}\log(n\ell/\beta)\log \ell\right).$$
    \item Any string $P\in\sum^{[1,\ell]}$ which is present in $T^*$ has a count at least $2\alpha-\alpha>1$. Therefore, $T^*$ has at most $O(n\ell^2)$ nodes.
\end{itemize}
To query the resulting data structure for a pattern $P$, we match $P$ in the trie, and if there is a node $v$ with $\str{v}=P$, we return the approximate count $\approxcount_{\Delta}(\str{v},\setstr)$ saved at $v$. If $P$ is not present, we return $0$. This requires $O(|P|)$ time. 
Together, this gives Theorem~\ref{thm:main_epsdel}.

\subsection{Faster Algorithm for Fixed-Length $q$-grams}\label{sec:EpsilonDeltaQgrams}

In this section, we give an algorithm for $(\epsilon,\delta)$-differentially private $q$-grams, which can be made to run in $O(n\ell(\log q + \log|\Sigma|))$ time and $O(n\ell)$ space. 
The idea is that for this less stringent version of approximate differential privacy, we can avoid computing the noisy counts of substrings that do not appear in the database, by showing that with high probability, these strings will not be contained in the result, anyway. Specifically, we will use the following lemma.
\begin{lemma}\label{lem:magic}
    Let $\alg_1:\chi^*\rightarrow \mathcal{R}$ and $\alg_2:\chi^*\rightarrow \mathcal{R}$ be two randomized algorithms. Let $\epsilon>0$, $\delta\in[0,1)$ and $\gamma\leq\frac{\delta}{3e^{\epsilon}}$. Let $\alg_1$ be $(\epsilon,\delta')$-differentially private, where $\delta'\leq \gamma$.
    Assume that for every data set $\setstr\in \chi^*$, there exists an event $E(\setstr)\subseteq \mathcal{R}$ such that for all $U\subseteq\mathcal{R}$
    \[\Pr[\alg_1(\setstr)\in U|\alg_1(\setstr)\in E(\setstr)]=\Pr[\alg_2(\setstr)\in U]\] and $\Pr[\alg_1(\setstr)\in E(\setstr)]\geq1-\gamma$. Then $\alg_2$ is $(\epsilon,\delta)$-differentially private.
\end{lemma}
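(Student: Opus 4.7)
The plan is to rewrite $\Pr[\alg_2(\setstr)\in U]$ using the conditional-distribution assumption on $\alg_2$, then apply the $(\epsilon,\delta')$-differential privacy of $\alg_1$, and finally translate the resulting bound back into a statement about $\alg_2$ on the neighbor $\setstr'$. Concretely, the conditional formula gives
\[
\Pr[\alg_2(\setstr)\in U] \;=\; \frac{\Pr[\alg_1(\setstr)\in U\cap E(\setstr)]}{\Pr[\alg_1(\setstr)\in E(\setstr)]},
\]
and the hypothesis $\Pr[\alg_1(\setstr)\in E(\setstr)]\geq 1-\gamma$ bounds the denominator from below, while $(\epsilon,\delta')$-DP applied to the measurable set $U\cap E(\setstr)$ bounds the numerator by $e^{\epsilon}\Pr[\alg_1(\setstr')\in U\cap E(\setstr)]+\delta'$.

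Next, I would upper bound $\Pr[\alg_1(\setstr')\in U\cap E(\setstr)]$ in terms of $\Pr[\alg_2(\setstr')\in U]$. Since no relationship between $E(\setstr)$ and $E(\setstr')$ is assumed, I would split on whether $E(\setstr')$ holds, writing
\[
\Pr[\alg_1(\setstr')\in U\cap E(\setstr)] \;\leq\; \Pr[\alg_1(\setstr')\in U\cap E(\setstr')]+\Pr[\alg_1(\setstr')\in\overline{E(\setstr')}],
\]
and using the conditional-distribution assumption and $\Pr[E(\setstr')]\leq 1$ to identify $\Pr[\alg_1(\setstr')\in U\cap E(\setstr')]=\Pr[\alg_2(\setstr')\in U]\cdot\Pr[\alg_1(\setstr')\in E(\setstr')]\leq \Pr[\alg_2(\setstr')\in U]$, while the second term is at most $\gamma$. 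Putting the pieces together yields
\[
\Pr[\alg_2(\setstr)\in U] \;\leq\; \frac{e^{\epsilon}(\Pr[\alg_2(\setstr')\in U]+\gamma)+\delta'}{1-\gamma}.
\]

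The final step is purely arithmetic. Writing $q=\Pr[\alg_2(\setstr)\in U]$ and $p=\Pr[\alg_2(\setstr')\in U]$, the inequality above is equivalent to $q(1-\gamma)\leq e^{\epsilon}p+e^{\epsilon}\gamma+\delta'$, so $q\leq e^{\epsilon}p+e^{\epsilon}\gamma+\delta'+\gamma q$. Since $q\leq 1$, this gives $q\leq e^{\epsilon}p+(e^{\epsilon}+1)\gamma+\delta'$. It remains to check that $(e^{\epsilon}+1)\gamma+\delta'\leq\delta$: using $\delta'\leq\gamma\leq\delta/(3e^{\epsilon})$ bounds the left-hand side by $(e^{\epsilon}+2)\gamma\leq (e^{\epsilon}+2)\delta/(3e^{\epsilon})=\delta/3+2\delta/(3e^{\epsilon})\leq\delta$ since $e^{\epsilon}\geq 1$.

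The main obstacle in this plan is the middle step, where one must bridge $\Pr[\alg_1(\setstr')\in U\cap E(\setstr)]$ to $\Pr[\alg_2(\setstr')\in U]$ without any assumed relationship between $E(\setstr)$ and $E(\setstr')$; the resolution is the simple union-bound split on $E(\setstr')$ above, which costs only an additive $\gamma$ and is precisely what the slack in $\gamma\leq\delta/(3e^{\epsilon})$ is tuned to absorb, together with the $1/(1-\gamma)$ factor produced by conditioning.
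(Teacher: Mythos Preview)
Your proof is correct and follows essentially the same approach as the paper: express $\Pr[\alg_2(\setstr)\in U]$ via conditioning on $E(\setstr)$, bound the denominator by $1-\gamma$, invoke the $(\epsilon,\delta')$-privacy of $\alg_1$, then pass back to $\alg_2(\setstr')$ by splitting on $E(\setstr')$ at cost $\gamma$, and finally verify the additive slack is at most $\delta$. The only cosmetic differences are that the paper applies DP to the set $U$ (after relaxing $U\cap E(\setstr)\subseteq U$) rather than to $U\cap E(\setstr)$, and handles the final arithmetic by bounding $\frac{e^{\epsilon}\gamma+\delta'}{1-\gamma}\leq\delta$ directly instead of your linearization via $q\leq 1$; neither changes the substance of the argument.
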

\begin{proof}
    Let $\setstr$ and $\setstr'$ be neighbouring data sets and let $(\alg_1(\setstr)\in E(\setstr))=E$ and $(\alg_1(\setstr')\in E(\setstr'))=E'$. We have, for any $U\in\mathcal{R}$:
    \begin{align*}
        \Pr[\alg_2(\setstr)\in U]&=\Pr[\alg_1(\setstr)\in U|E]
        \leq \frac{\Pr[\alg_1(\setstr)\in U]}{1-\gamma}
        \leq \frac{e^{\epsilon}\Pr[\alg_1(\setstr')\in U]+\delta'}{1-\gamma}\\
        &\leq \frac{e^{\epsilon}(\Pr[\alg_1(\setstr')\in U|E'](1-\gamma)+\gamma)+\delta'}{1-\gamma}\\
        &\leq e^{\epsilon} \Pr[\alg_1(\setstr')\in U|E']+\frac{e^{\epsilon}\gamma+\delta'}{1-\gamma}\\
        &\leq e^{\epsilon} \Pr[\alg_1(\setstr')\in U|E']+\delta\\
        &= e^{\epsilon} \Pr[\alg_2(\setstr')\in U]+\delta
    \end{align*}
    The third inequality holds since $\Pr[E']\geq 1-\gamma$ and the last inequality holds since
    \begin{align*}
    \frac{e^{\epsilon}\gamma+\delta'}{1-\gamma}\leq \frac{2e^{\epsilon}\gamma}{1-\gamma}.
    \end{align*}
    Setting $\frac{2e^{\epsilon}\gamma}{1-\gamma}\leq \delta$ gives
    \(  (2e^{\epsilon}+\delta)\gamma\leq \delta \)
    and therefore it is enough to choose
     \(  \gamma\leq \frac{\delta}{3e^{\epsilon}}\leq \frac{\delta}{2e^{\epsilon}+\delta}.\)
\end{proof}

\begin{lemma}\label{lem:qgram_eps_delta}
 Let $n,\ell$, and $q\leq\ell$ be integers and $\Sigma$ an alphabet of size $|\Sigma|$. Let $\Delta\leq \ell$. For any $\epsilon>0$, $\delta>0$, and $0<\beta<1$, there exists an $(\epsilon,\delta)$-differentially private algorithm, which can process any database $\setstr=S_1,\dots,S_{n}$ of documents from $\Sigma^{[1,\ell]}$ and with probability at least $1-\beta$ output a data structure for 
 $\counts_{\Delta}$ for all $q$-grams with additive error $~O\left(\epsilon^{-1}\sqrt{\ell\Delta\log (n\ell)}\log q\left(\epsilon+\log\log q+ \log \frac{|\Sigma|}{\delta\beta}\right)\right)$. 
\end{lemma}

\begin{proof}
The idea is to design two algorithms, $\alg_1$ and $\alg_2$, where $\alg_2$ is the algorithm that we will run, and $\alg_1$ is an algorithm that we show to be differentially private with appropriate parameters. We then argue that the two algorithms behave the same, conditioned on a high-probability event.
We set $\epsilon_1=\epsilon/(\lfloor \log q\rfloor +2)$, $\beta_1=\min(\beta/(\lfloor \log q\rfloor +2),\delta/(3e^{\epsilon}(\lfloor \log q\rfloor +2))$, $\delta_1\leq \beta_1$.  Let $j=\lfloor \log q \rfloor$ and $\alg_1=\alg_1^{(j+1)}\circ\dots \circ\alg_1^{(0)}$.
    \begin{itemize}
        \item $\alg_1^{(0)}$ computes $\candidatespruned_{2^0}$ as follows: it adds noise scaled with $N(0,\sigma^2)$ to the count of each letter, where $\sigma=2\epsilon_1^{-1}\sqrt{2\ell\Delta\ln(2/\delta_1)}$. Let $\alpha=2\epsilon_1^{-1}\sqrt{2\ell\Delta\ln(2/\delta_1)\ln(2\max\{\ell^2n^2,|\Sigma|\}/\beta_1)}$. The set $\candidatespruned_{2^0}$ is the set of all letters with a noisy count at least $2\alpha$. We stop if $|\candidatespruned_{2^0}|>n\ell$.
        \item $\alg_1^{(k)}$, for $0<k<j+1$, computes $\candidatespruned_{2^k}$ by adding noise scaled with $N(0,\sigma^2)$ to the count of every string in $\candidatespruned_{2^{k-1}}\circ \candidatespruned_{2^{k-1}}$. $\candidatespruned_{2^k}$ is the set of all strings of length $2^k$ with noisy count at least $2\alpha$. We stop if $|\candidatespruned_{2^k}|>n\ell$. 
        \item $\alg_1^{(j+1)}$ computes $\candidatespruned_q$ by adding noise scaled with $N(0,\sigma^2)$ to the count of every string $P$ of length $q$ such that $P[0\dots 2^j-1]\in \candidatespruned_{2^j}$ and $P[q-2^j\dots q-1]\in \candidatespruned_{2^j}$. $\candidatespruned_q$ is the set of all such strings with a noisy count at least $2\alpha$ and we output it together with the noisy counts.
    \end{itemize}
    
    Let $\mathcal{Z}_m(\setstr)$ be the set of all strings $P$ of length $m$ which satisfy $\counts_{\Delta}(P,\setstr)=0$. $\alg_2$ is $\alg_2=\alg_2^{(j+1)}\circ\dots \circ\alg_2^{(0)}$, where for every $0\leq k \leq j+1$, the algorithm $\alg_2^{(k)}$ is the same as $\alg_1^{(k)}$, except it does not take into account strings whose true count is zero. In detail:  
      \begin{itemize}
        \item $\alg_2^{(0)}$ constructs $\candidatespruned_{2^0}$ by adding noise scaled with $N(0,\sigma^2)$ to the count of every letter in $\Sigma\setminus \mathcal{Z}_1(\setstr)$, where $\sigma=2\epsilon_1^{-1}\sqrt{2\ell\Delta\ln(2/\delta_1)}$. Let $\alpha=2\epsilon_1^{-1}\sqrt{2\ell\Delta\ln(2/\delta_1)\ln(2\max\{\ell^2n^2,|\Sigma|\}/\beta_1)}$: $\candidatespruned_{2^0}$ is the set of all letters with a noisy count at least $2\alpha$.
        \item $\alg_2^{(k)}$, for $0<k<j+1$, constructs $\candidatespruned_{2^k}$ by adding noise scaled with $N(0,\sigma^2)$ to the count of every string in $\left(\candidatespruned_{2^{k-1}}\circ \candidatespruned_{2^{k-1}}\right)\setminus \mathcal{Z}_{2^k}(\setstr)$. The set $\candidatespruned_{2^k}$ contains all strings with noisy count at least $2\alpha$. 
        \item $\alg_2^{(j+1)}$ constructs $\candidatespruned_q$ by adding noise scaled with $N(0,\sigma^2)$ to the count of every string $P\in\Sigma^q\setminus\mathcal{Z}_{q}(\setstr) $ of length $q$ such that $P[0\dots 2^j-1]\in \candidatespruned_{2^j}$ and $P[q-2^j\dots q-1]\in \candidatespruned_{2^j}$. $\candidatespruned_q$ is the set of all such strings with a noisy count at least $2\alpha$ together with the noisy counts.
    \end{itemize}

\textbf{Privacy of $\alg_2$.} 
Every $\alg_1^{(k)}$ for $0\leq k \leq j+1$ is $(\epsilon_1,\delta_1)$-differentially private by Corollary~\ref{cor:l2sens_same_length} and Lemma~\ref{lem:gaussianmech}.
  We use Lemma~\ref{lem:magic} on $\alg_1^{(k)}$ and $\alg_2^{(k)}$, for every $0\leq k \leq j+1$. We define $E^{(k)}(\setstr)$ as the event that none of the noises added by $\alg_1^{(k)}$ to strings in $\mathcal{Z}_k$ exceeds an absolute value of $\alpha$. By Lemma~\ref{lem:gaussian_tail}, this is true with probability at least $1-\beta_1$. Let $\mathcal{R}^{(k)}$ be the range of $\alg_1^{(k)}$ and $\alg_2^{(k)}$. 
  Now, for any $U\subseteq \mathcal{R}^{(k)}$ and since all the added noises are independent, $\Pr[\alg_1^{(k)}\in U|E^{(k)}(\setstr)]$ only depends on the distribution of the noises added to the counts of strings in $(\candidatespruned_{2^{k-1}}\circ \candidatespruned_{2^{k-1}})\setminus \mathcal{Z}_{2^k}(\setstr)$ and is thus equal to $\Pr[\alg_2^{(k)}\in U]$. 
 Since $\delta_1\leq \beta_1$ and $\beta_1\leq \frac{\delta}{3e^{\epsilon}(\lfloor \log q \rfloor +2)}$, by Lemma~\ref{lem:magic}, we have that $\alg_2^{(k)}$ is $(\epsilon_1,\delta/(\lfloor \log q \rfloor +2))$ differentially private. By Lemma~\ref{lem:composition_theorem}, we have that $\alg_2$ is $(\epsilon,\delta)$-differentially private.
 
 \textbf{Accuracy.} The accuracy proof proceeds similarly to the proof of Lemma~\ref{lem:pruning} (or Lemma~\ref{lem:pruning_epsdel}, which is its equivalent for $(\epsilon,\delta)$-differential privacy). By the same arguments, with probability at least $1-\beta$, every pattern of length $q$ not in $\candidatespruned_q$ has a count at most $3\alpha$, and we compute the counts of patterns in $\candidatespruned_q$ with error at most $\alpha$, where $\alpha=2\epsilon_1^{-1}\sqrt{2\ell\Delta\ln(2/\delta_1)\ln(2\max\{\ell^2n^2,|\Sigma|\}/\beta_1)}=O(\epsilon^{-1}\sqrt{\ell\Delta\log (n\ell)}\log q(\epsilon+\log\log q+ \log \frac{|\Sigma|}{\delta\beta}))$.
\end{proof}

Lemma~\ref{lem:qgram_eps_del_construct} provides an efficient procedure to construct sets $\candidatespruned_{2^0},\ldots, \candidatespruned_{2^j}$and $\candidatespruned_{q}$ for $\alg_2$. The efficiency of this procedure relies on the fact that
$\alg_2$ ignores all substrings that do not occur in any document of $\setstr$, in contrast with the algorithms underlying Theorem~\ref{thm:main_eps}. This crucial difference allows us to compute (noisy) counts only for carefully selected substrings of $\setstr$, effectively avoiding the computational bottleneck of the algorithm of Theorem~\ref{thm:main_eps}. 

\begin{lemma}\label{lem:qgram_eps_del_construct}
    Given a database of $n$ documents $\setstr=S_1,\dots, S_{n}$ from $\Sigma^{[1,\ell]}$, and an integer $q\leq \ell$, a data structure satisfying the properties of Lemma~\ref{lem:qgram_eps_delta} that answers $q$-gram counting queries in $O(q)$ time can be computed in $O(n\ell( \log q +\log |\Sigma|))$ time and $O(n\ell)$ space.
\end{lemma}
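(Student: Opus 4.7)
The plan is to exploit the fact that $\alg_2$ only draws Gaussian samples for substrings whose true count in $\setstr$ is nonzero and to organize these substrings through the generalized suffix tree (GST) of $\setstr$. As preprocessing, we construct $\gst$ of $\setstr$ in $O(n\ell\log|\Sigma|)$ time and $O(n\ell)$ space~\cite{DBLP:conf/focs/Farach97,FCFM2000}, augment each node with its string depth $\sd(\cdot)$ and subtree leaf count, and build standard $O(1)$-time level-ancestor and LCA structures. For each document $S_i$ we also compute the suffix tree $T_i$ of $S_i$ as the topology induced by the document-$i$ leaves of $\gst$; via LCA queries on consecutive document-$i$ leaves in Euler order, all $T_i$'s are built in $O(n\ell)$ total time because $\sum_i|T_i|=O(n\ell)$. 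Each node of $T_i$ is labeled with a pointer to the corresponding node of $\gst$.

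In each phase $k\in\{0,\ldots,\lfloor\log q\rfloor\}$ (and a final phase of length $q$), we enumerate the distinct length-$2^k$ substrings of $\setstr$ as the \emph{minimal-for-length-$2^k$} nodes of $\gst$, namely nodes $v$ with $\sd(\mathrm{parent}(v))<2^k\le\sd(v)$; there are at most $n\ell$ of them. To compute $\counts_\Delta$ of the length-$2^k$ prefix of each such $v$, we iterate over documents: within each $T_i$ we traverse its $O(\ell)$ minimal-for-$2^k$ nodes, read off $\counts(P,S_i)$ as the subtree leaf count in $T_i$, and add $\min(\Delta,\counts(P,S_i))$ to a counter stored at the corresponding $\gst$ node (reachable via the $T_i\!\to\!\gst$ pointer). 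Over all documents this is $O(n\ell)$ per phase and $O(n\ell\log q)$ overall. The pruning condition is tested in $O(1)$ per minimal node: the length-$2^{k-1}$ prefix of $\str{v}$ is a level-ancestor of $v$ in $\gst$, while the length-$2^{k-1}$ suffix is obtained by taking the level-ancestor at string depth $2^{k-1}$ of the leaf $(i,p+2^{k-1})$ for any document-$i$ leaf $(i,p)$ descending from $v$; a boolean flag on $\gst$ nodes records membership in $\candidatespruned_{2^{k-1}}$. If both halves lie in $\candidatespruned_{2^{k-1}}$, we draw one Gaussian sample, form the noisy count, and mark $v$ in $\candidatespruned_{2^k}$ if it exceeds $2\alpha$. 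Counters are reset between phases in $O(n\ell)$ time by tracking the touched nodes.

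The output data structure is $\gst$ with the nodes of $\candidatespruned_q$ marked and storing their noisy counts. A query for a length-$q$ pattern $P$ traverses $\gst$ character by character in $O(q)$ time (child navigation by perfect hashing or $|\Sigma|$-sized tables, both constructable within the stated budget), returning the noisy count if $P$ matches a marked node and $0$ otherwise. In total, the construction costs $O(n\ell\log|\Sigma|)$ for the GST plus $O(n\ell\log q)$ for the $O(\log q)$ phases, matching the claimed $O(n\ell(\log q+\log|\Sigma|))$ time within $O(n\ell)$ space.

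The main obstacle is computing $\counts_\Delta(P,\setstr)=\sum_i\min(\Delta,\counts(P,S_i))$ for all candidate length-$2^k$ substrings within $O(n\ell)$ time per phase; a naive pass over all documents for each candidate would cost $\Omega(n^2\ell)$. This is resolved by the combination of $\sum_i|T_i|=O(n\ell)$ with the precomputed $T_i\!\to\!\gst$ pointers, which let each document's contribution be accumulated locally in its own suffix tree and aggregated globally at $\gst$ in linear time per phase.
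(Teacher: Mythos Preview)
Your approach is essentially the same as the paper's: build the (generalized) suffix tree, at each phase identify the $2^k$-minimal nodes, use ancestor queries to test whether the two half-length pieces survived the previous phase, and add noise only to counts of substrings that actually occur in $\setstr$. The paper uses weighted-ancestor queries on the suffix tree of the concatenated string; you use string-depth level-ancestor on the GST, which is equivalent here. One point where you go beyond the paper: by building the induced per-document trees $T_i$, you can compute $\counts_\Delta(P,\setstr)=\sum_i\min(\Delta,\counts(P,S_i))$ within the $O(n\ell)$-per-phase budget, whereas the paper's proof only reads off the subtree frequency $f(v)$, which equals $\counts(P,\setstr)$ and thus implicitly handles only $\Delta=\ell$.

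There is one imprecision in your aggregation step. When a node $v$ is minimal-for-$2^k$ in $T_i$, its image in the GST via the $T_i\to\gst$ pointer need not be minimal-for-$2^k$ in the GST: other documents may introduce a GST branching between $v$ and its $T_i$-parent, so the GST-parent of $v$ may already have string depth $\ge 2^k$. The counter for the length-$2^k$ string $P$ must therefore be stored at the minimal-for-$2^k$ GST ancestor of $v$, reached by a weighted/level-ancestor query at string depth $2^k$, not simply ``via the pointer.'' As written, different documents' contributions for the same $P$ could land on distinct GST nodes. You already have the level-ancestor structure, so this is an $O(1)$ fix per node and does not affect the stated bounds. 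A smaller point: the ``$|\Sigma|$-sized tables'' option for child navigation would violate the $O(n\ell)$ space bound unless $|\Sigma|=O(1)$; the perfect-hashing option is the one that fits the budget.
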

\begin{proof}
Our main tool is the suffix tree $\ST$ of the string $S=S_1\$_1S_2\$_2\cdots S_n\$_n$, with $\setstr=\{S_1,S_2,\ldots,S_n\}$ and $\$_1,\$_2,\ldots,\$_n\notin\Sigma$, preprocessed to answer weighted ancestor queries in $O(1)$ time. 
A weighted ancestor query consists of two integers $p,\ell>0$, where $p$ is the ID of a leaf of $\ST$; the answer is the farthest (closest to the root) ancestor $v$ of leaf $p$ with string depth at least $\ell$. Any such query can be answered in $O(1)$ time after a linear-time preprocessing to build a linear-space additional data structure~\cite{DBLP:conf/esa/GawrychowskiLN14,DBLP:conf/cpm/BelazzouguiKPR21}. We also store, within each branching node $v$ of $\ST$, the ID $\leaf(v)$ of the leftmost descending leaf. This extra information can be computed in linear time with a DFS of $\ST$, and requires linear extra space. The ID stored at a node $v$ gives, together with the string depth of $v$, a \emph{witness occurrence} in some document from $\setstr$ of the substring represented by $v$.

Our procedure computes an implicit representation of sets $\candidatespruned_{2^k}$ for increasing values of $k\in [0,j]$, where $j=\lfloor\log q\rfloor$; it returns a compacted trie of the elements of $\candidatespruned_{q}$ with a noisy counter for each element. For each phase $k\leq j$, we perform the following steps. (i) Traverse $\ST$ to find the set of branching nodes whose string depth is at least $2^k$ and such that the string depth of their parent node is strictly smaller than $2^k$. We call such nodes \emph{\twokMin}.  (ii) For $k=0$, add noise to the frequency stored in each $1$-minimal node as in Lemma~\ref{lem:qgram_eps_delta}, and mark as belonging to $\candidatespruned_{2^0}$ the nodes whose noisy count exceeds $2\alpha$. In Phase $k>0$, we ask two weighted ancestor queries for each \twokMin node $v$: one with $p=\leaf(v)$ and $\ell=2^{k-1}$, one with $p=\leaf(v)+2^{k-1}$ and $\ell=2^{k-1}$. If the answer to both queries is a node marked as belonging to $\candidatespruned_{2^{k-1}}$, we add noise to the frequency of $v$ and mark it as belonging to $\candidatespruned_{2^{k}}$ if the noisy count exceeds $2\alpha$; we proceed to the next node otherwise. Note that at the end of Phase $k$ we can remove from $\ST$ the markers for $\candidatespruned_{2^{k-2}}$.
Finally, to compute $\candidatespruned_q$, for each $q$-minimal node $v$ we ask weighted ancestor queries with $p=\leaf(v)$ and $\ell=2^{j}$ and with $p=\leaf(v)+q-2^{j}+1$ and $\ell=2^{j}$. If the answer to both queries is a node marked for $\candidatespruned_{2^{j}}$, we add noise to the frequency of $v$; if it exceeds $2\alpha$, we insert 
$\str{v}[1,q]$ in the output trie and store
this noisy counter within the corresponding leaf. 
To answer a query for a $q$-gram $P$, we spell $P$ from the root of the trie in $O(q)$ time. We return its noisy counter if we find it, $0$ otherwise.

\textbf{Time and space analysis.} For each value of $k$, the algorithm traverses $\ST$ in $O(n\ell)$ time; since the \twokMin nodes partition the leaves of $\ST$, it asks up to $2$ $O(1)$-time weighted ancestor queries per leaf, thus requiring $O(n\ell)$ total time. Since $\ST$ can be constructed in $O(n\ell\log|\Sigma|)$ time~\cite{DBLP:conf/focs/Farach97} and $k$ takes $\lfloor\log q\rfloor$ distinct values, the whole procedure requires $O(n\ell( \log q +\log |\Sigma|))$ time.
The space is $O(n\ell)$ because $\ST$ and the additional data structure for weighted ancestor queries occupy $O(n\ell)$ words of space~\cite{DBLP:conf/cpm/BelazzouguiKPR21}, and at any phase $k$, each node stores at most two markers.
Finally, the output compacted trie occupies space $O(n\ell)$ because (i), since $\setstr$ contains at most $O(n\ell)$ distinct $q$-grams it has at most $O(n\ell)$ leaves, and $(ii)$, since all the $q$-grams in the trie occur in $\setstr$, the edge labels can be compactly represented with intervals of positions over $\setstr$, similar to the edge labels of $\ST$.
\end{proof}

Lemmas~\ref{lem:qgram_eps_delta} and~\ref{lem:qgram_eps_del_construct} together give Theorem~\ref{thm:qgrams_eps_delta}.

\thmQgramsEpsdelta*

\section{Counting Functions on Trees}\label{sec:tree_count}

In this section, we first prove Theorem~\ref{thm:tree_count} 
and then show an improvement for $(\epsilon,\delta)$-differential privacy if the sensitivity of the count function for every node is bounded.

 The strategy is essentially the same algorithm as the one applied to $\candidatetrie$ in the previous sections.

 \thmTreeCount*
 
\begin{proof}
    The algorithm starts by computing a heavy path decomposition of $T$. Then, it uses the Laplace mechanism to compute the counts of all heavy path roots with $\epsilon/2$-differential privacy. Then, for every heavy path $p=v_0,\dots, v_{|p|-1}$, we compute the prefix sums of the difference sequence $\diff_p(\setstr)[i]=c(v_i,\setstr)-c(v_{i-1},\setstr)$ via Lemma~\ref{lem:binary_trees} with $\epsilon/2$-differential privacy. Let $r=v_0$ be the root of the heavy path. For any node $v_i$ on heavy path $p$, we can compute $\hat{c}(v_i)$ from $\hat{c}(r)$ plus the approximate value of $\sum_{j=1}^{i} \diff_p(\setstr)$.

    To analyze the accuracy, let $\setstr$ and $\setstr'$ be neighboring and let $l$ be a leaf such that $|c(l,\setstr)-c(l,\setstr')|=b\leq d$. Note that $c(l,\setstr)$ contributes to $c(v,\setstr)$ if and only if $l$ is below $v$. In particular, by Lemma~\ref{lem:heavy_path}, it can contribute at most $b$ to the function $c$ of at most $O(\log |V|)$ heavy path roots. Thus, for the roots $r_1,\dots, r_k$ of heavy paths, we have $\sum_{i=1}^k |c(r_i,\setstr)-c(r_i,\setstr')|=O(d\log |V|)$. Thus, by Lemma~\ref{lem:Laplacemech}, we can estimate $c(r_1,\setstr),\dots, c(r_k,\setstr)$ up to an error of $O(\epsilon^{-1}d \log |V|\ln(k/\beta))$ with probability at least $1-\beta/2$.
    
Similarly, $l$ can only contribute by at most $b$ to the sensitivity of the difference sequence of any heavy path that has a root above $l$. Thus, using Lemma~\ref{lem:binary_trees} with $L=O(d\log |V|)$, we can estimate all prefix sums of $\diff_p$ for all $k$ heavy paths $p$ with an error at most $O(\epsilon^{-1}d\log |V|\log h \log(hk/\beta))$ with probability at least $1-\beta/2$.
%
\end{proof}

We now give an improvement for $(\epsilon,\delta)$-differential privacy if the sensitivity of the count function for every node is bounded.

\thmEpsdelTrees*

\begin{proof}
    The algorithm starts by computing a heavy path decomposition of $T$. Then, it uses the Gaussian mechanism to compute the counts of all heavy path roots with $(\epsilon/2,\delta/2)$-differential privacy. Then, for every heavy path $p=v_0,\dots, v_{|p|-1}$, we compute the prefix sums of the difference sequence $\diff_p(\setstr)[i]=c(v_i,\setstr)-c(v_{i-1},\setstr)$ via Lemma~\ref{lem:binary_trees_epsdel} with $(\epsilon/2,\delta/2)$-differential privacy. For any node $v_i$ on the heavy path, we can compute $\hat{c}(v_i)$ from $\hat{c}(r)$, where $r$ is the root of the heavy path that $v$ lies on, plus the approximate value of $\sum_{j=1}^{i} \diff_p(\setstr)$.

    To analyze the accuracy, let $\setstr$ and $\setstr'$ be neighboring and let $l$ be a leaf such that $|c(l,\setstr)-c(l,\setstr')|=b\leq \min(\Delta,d)$. Note that $c(l,\setstr)$ contributes to $c(v,\setstr)$ if and only if $l$ is below $v$. In particular, it can only contribute to the function $c$ of at most $O(\log |V|)$ heavy path roots by at most $b$, by Lemma~\ref{lem:heavy_path}. Thus, for the roots $r_1,\dots, r_k$ of heavy paths, we have $\sum_{i=1}^k |c(r_i,\setstr)-c(r_i,\setstr')|=O(d\log |V|)$, and $|c(r_i,\setstr)-c(r_i,\setstr')|\leq \Delta$. Thus, by Lemma~\ref{lem:l2norm}, the $L_2$-sensitivity of $c(r_i)$, $i=1,\dots, k$ is bounded by $\sqrt{d\Delta\log |V|}$. By Lemma~\ref{lem:gaussianmech}, we can estimate $c(r_1,\setstr),\dots, c(r_k,\setstr)$ up to an error of $O(\epsilon^{-1}\sqrt{d\Delta \log |V|\ln(2/\delta)\ln(k/\beta)})$ with probability at least $1-\beta/2$.
    
Similarly, $l$ can only contribute by at most $b$ to the sensitivity of the difference sequence of any heavy path that has a root above $l$, and the $L_1$-sensitivity of any one heavy path is additionally bounded by $\Delta$. Thus, using Lemma~\ref{lem:binary_trees_epsdel} with $L=O(d\log |V|)$, we can estimate all prefix sums of $\diff_p$ for all $k$ heavy paths $p$ with an error at most $O(\epsilon^{-1}\sqrt{d\Delta\log |V|\ln(1/\delta)\log(hk/\beta)}\log h )$ with probability at least $1-\beta/2$.
\end{proof}

\section{Lower Bounds}\label{sec:lower_bound}
In this section, we collect our lower bound results. In Section \ref{sec:lower-approx-sub}  we show an $\Omega(\ell)$ lower bound for approximate differentially private \substringcount\ and \substringmining\ via a worst case example. In Section~\ref{sec:lower-approx-doc}, we show a lower bound of roughly $\Omega(\sqrt{\ell})$ for approximate differentially private \documentcount. While these also imply almost tight lower bounds for pure differential privacy, in Section \ref{sec:lower-pure} we show a direct lower bound for $\epsilon$-differential privacy which captures all problems (\documentcount, \substringcount, \qgrammining, \substringmining) with a simple packing argument.

\subsection{Pure Differential Privacy}\label{sec:lower-pure}
We first state a theorem from~\cite{DBLP:books/sp/17/Vadhan17} (Theorem 7.5.14), which immediately implies a lower bound on the additive error of any $\epsilon$-differentially private algorithms solving \documentcount\ and \substringcount. In Theorem~\ref{thm:lower_bound}, we extend this lower bound to also hold for the easier problem where we want to find patterns of a fixed length which have a count above a given threshold (i.e., for \qgrammining). In the following, for any query $q:\chi\rightarrow\{0,1\}$, the corresponding \emph{counting query} $q:\chi^n\rightarrow[0,1]$ is defined as $q(x)=\frac{1}{n}\sum_{i=1}^n q(x_i)$.
  \begin{lemma}[\cite{DBLP:books/sp/17/Vadhan17}, Theorem 7.5.14]\label{lem:lower}
        Let $Q={q:\chi\rightarrow\{0,1\}}$ be any class of counting queries that can distinguish all elements from $\chi$. That is, for all $w\neq w'\in \chi$, there exists a query $q\in Q$ such that $q(w)\neq q(w')$. Suppose $\alg:\chi^n\rightarrow \mathbb{R}^Q$ is an $(\epsilon,\delta)$-differentially private algorithm that with high probability answers every query in $Q$ with error at most $\alpha$. Then 
        \begin{align*}
            \alpha=\Omega\left(\min\left(\frac{\log|\chi|}{n\epsilon},\frac{\log(1/\delta)}{n\epsilon},1/2\right)\right).
        \end{align*}
    \end{lemma}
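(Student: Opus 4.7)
My plan is to prove Lemma~\ref{lem:lower} via the classical packing argument: exhibit a large family of databases that are close enough in Hamming distance to be indistinguishable under differential privacy, yet which the utility requirement forces $\alg$ to tell apart. Concretely, I would fix a baseline element $w_0 \in \chi$ and a parameter $k \leq n$ to be chosen, and for each $w \in \chi$ define the canonical database $D_w \in \chi^n$ to consist of $k$ copies of $w$ followed by $n-k$ copies of $w_0$. Any two such databases differ in at most $k$ coordinates, and since $Q$ distinguishes the elements of $\chi$, for every $w \neq w'$ some $q \in Q$ satisfies $|q(D_w) - q(D_{w'})| = (k/n)\,|q(w)-q(w')| = k/n$.

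Setting $k := \lceil 2n\alpha\rceil + 1$ makes the ``correct-answer boxes''
\[
U_w \;:=\; \bigl\{y \in \mathbb{R}^Q : \forall q\in Q,\ |y(q)-q(D_w)| \leq \alpha\bigr\}
\]
pairwise disjoint, and the utility assumption gives $\Pr[\alg(D_w)\in U_w] \geq 2/3$ for every $w$. Fixing $x_0 := D_{w_0}$ and applying $k$-fold group privacy for $(\epsilon,\delta)$-DP (which costs a factor $e^{k\epsilon}$ multiplicatively and at most $k e^{(k-1)\epsilon}\delta$ additively) yields
\[
\Pr[\alg(x_0)\in U_w] \;\geq\; e^{-k\epsilon}\Pr[\alg(D_w)\in U_w] - k\delta \;\geq\; \tfrac{2}{3}\,e^{-k\epsilon} - k\delta,
\]
and summing over the disjoint events $U_w$ and using $\sum_w\Pr[\alg(x_0)\in U_w]\leq 1$ gives the master inequality
\[
|\chi|\bigl(\tfrac{2}{3}\,e^{-k\epsilon} - k\delta\bigr) \;\leq\; 1.
\]
From this I would read off the three branches of the $\min$: if $\delta = 0$ we get $e^{k\epsilon}\geq \tfrac{2}{3}|\chi|$ and hence $k\epsilon = \Omega(\log|\chi|)$, which combined with $k = O(n\alpha)$ gives $\alpha = \Omega(\log|\chi|/(n\epsilon))$; if $\delta > 0$, either the $e^{-k\epsilon}$ term dominates (recovering the previous case) or the $k\delta$ term dominates, which with a little care forces $k\epsilon = \Omega(\log(1/\delta))$ and hence $\alpha = \Omega(\log(1/\delta)/(n\epsilon))$; the trivial $\Omega(1/2)$ branch covers the degenerate regime $\lceil 2n\alpha\rceil + 1 > n$, where already $\alpha \geq 1/2 - 1/(2n)$.

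The main obstacle I anticipate is the $\log(1/\delta)/(n\epsilon)$ branch: the approximate-DP group-privacy bound carries a factor of $e^{k\epsilon}$ in front of $\delta$, so naively balancing terms produces $\log(1/(k\delta))/(n\epsilon)$ rather than $\log(1/\delta)/(n\epsilon)$, and absorbing the spurious $\log k$ cleanly requires either a slightly more careful group-privacy accounting or a small shrinking of the packing. The packing construction and the summation are otherwise entirely standard; the only non-trivial choice is $k = \Theta(n\alpha)$, which makes the boxes $U_w$ barely disjoint while keeping pairwise Hamming distance as small as possible.
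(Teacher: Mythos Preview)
The paper does not prove this lemma; it is quoted verbatim from Vadhan's survey~\cite{DBLP:books/sp/17/Vadhan17} and used as a black box, so there is no ``paper's own proof'' to compare against. Your packing argument is the standard proof of this statement (and is essentially the one in the cited reference), so the approach is correct.

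One minor inconsistency worth cleaning up: in the text you write the group-privacy additive slack as $k e^{(k-1)\epsilon}\delta$, but in the displayed inequality you use $-k\delta$. The former is what $(\epsilon,\delta)$ group privacy actually gives, and dealing with that $e^{(k-1)\epsilon}$ blow-up is precisely the obstacle you (correctly) flag for the $\log(1/\delta)/(n\epsilon)$ branch; just make the displayed bound consistent with that.
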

    For $\epsilon$-differential privacy, $\delta=0$, and therefore the second bound in the minimum is always~$\infty$.
For the universe $X=\Sigma^{\ell}$, let $Q$ be the family of point functions over $X$, i.e., $Q$ contains for every $S\in X$ a query $q_S$ such that $q_S(S')=1$ if and only if $S'=S$, and 0 otherwise. Clearly, $Q$ distinguishes all elements from $\Sigma^{\ell}$. We note that an algorithm solving \documentcount\ and \substringcount\  can answer every query in $Q$ on a data set $X^n$, by reporting the counts of every string of length $\ell$ and dividing the count by $n$. Thus, any $\epsilon$-differentially private algorithm solving  \documentcount\ or \substringcount\  must have an error of $\Omega(\epsilon^{-1}\log X)=\Omega(\min(\epsilon^{-1}\ell \log |\Sigma|,n))$. In the following, we extend this lower bound in two ways:
\begin{enumerate}
    \item We show that it holds even if we only want to output which patterns have a count approximately above a given threshold $\thresh$;
    \item We show that it holds even if we restrict the output to patterns of a fixed length $m$, for any $m\geq 2\log \ell$. 
\end{enumerate}

We prove Theorem~\ref{thm:lower_bound} via a packing argument. For this, we need the following definition and lemma.
\begin{definition}[$k$-neighboring]\label{def:k-neighboring}
    Let $\chi$ be a data universe. Two data sets $\setstr$ and $\setstr'$ are called $k$-neighboring, if there exists a sequence $\setstr=\setstr_1\sim\setstr_2\sim\dots\sim\setstr_{k-1}\sim\setstr_k=\setstr'$.
\end{definition}
\begin{fact}[Group Privacy]\label{fact:k-neighboring}
     Let $\chi$ be a data universe, and $\epsilon>0$. If an algorithm $A:\chi^{*}\rightarrow \range(A)$ is $\epsilon$-differentially private, then for all $k$-neighboring $\setstr\in \chi^*$ and $\setstr'\in \chi^*$ and any set $U\in \range(A)$, we have $
        \Pr[A(\setstr)\in U]\leq e^{k\epsilon}\Pr[A(\setstr')\in U]$.
\end{fact}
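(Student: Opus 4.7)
The plan is a straightforward induction on $k$, chaining the single-step $\epsilon$-differential privacy guarantee along the path of neighboring data sets provided by Definition~\ref{def:k-neighboring}.

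First I would dispense with the base case $k=1$: by Definition~\ref{def:k-neighboring}, $1$-neighboring data sets are exactly neighboring data sets (i.e., $\setstr \sim \setstr'$), so the claimed inequality $\Pr[A(\setstr)\in U]\leq e^{\epsilon}\Pr[A(\setstr')\in U]$ is the definition of $\epsilon$-differential privacy (Definition~\ref{def:privacy}) specialized to $\delta=0$.

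For the inductive step, I would assume the statement holds for $k-1$ and consider any $k$-neighboring pair $\setstr,\setstr'$, witnessed by a sequence $\setstr=\setstr_1\sim\setstr_2\sim\dots\sim\setstr_{k-1}\sim\setstr_k=\setstr'$. Then $\setstr_1$ and $\setstr_{k-1}$ are $(k-1)$-neighboring via the prefix of this sequence, and $\setstr_{k-1}\sim\setstr_k$. By the inductive hypothesis and the definition of $\epsilon$-differential privacy, for any $U\in\range(A)$,
\begin{align*}
\Pr[A(\setstr)\in U]
&\leq e^{(k-1)\epsilon}\Pr[A(\setstr_{k-1})\in U] \\
&\leq e^{(k-1)\epsilon}\cdot e^{\epsilon}\Pr[A(\setstr_k)\in U]
= e^{k\epsilon}\Pr[A(\setstr')\in U],
\end{align*}
which closes the induction.

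There is no real obstacle, since the fact is a clean chaining of the multiplicative $\epsilon$-DP guarantee; the key observation is simply that $\delta=0$, so no additive term accumulates. The only minor thing to be careful about is verifying from Definition~\ref{def:k-neighboring} that any prefix of a witnessing sequence is itself a witnessing sequence (of the appropriate shorter length), which justifies applying the inductive hypothesis to $\setstr_1$ and $\setstr_{k-1}$.
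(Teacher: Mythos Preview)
Your induction argument is correct and is the standard proof of group privacy. Note that the paper does not actually supply a proof of this statement: it is recorded as a \emph{Fact} in the appendix without any accompanying argument, so there is nothing to compare against beyond observing that your chaining argument is exactly the canonical one.
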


\thmLB*

\begin{proof}
    Let $\alg$ be the assumed $\epsilon$-differentially private algorithm with error at most $\alpha$.
    Assume $\alpha<n/2$, else we have $\alpha=\Omega(n)$. Let $B=2\alpha\leq n$.
    Fix two symbols $0,1\in \Sigma$ and let $\widehat{\Sigma}=\Sigma\setminus\{0,1\}$.
     For any set of $k$ patterns $P_1,\dots,P_k$ from $\widehat{\Sigma}^{m/2}$, where $k=\frac{\ell}{m}$, we construct a string $S_{P_1,\dots,P_k}$ as follows: For any $i\leq \ell$, let $c_i$ be the $m/2$-length binary code of $i$. Note that $m\geq 2\lceil\log \ell\rceil$. We define $S_{P_1,\dots,P_k}=P_1c_1P_2c_2\dots P_kc_k$.

    For a string of this form, we define $\setstr(P_1,\dots, P_k)$ as the data set which contains $B$ copies of $S_{P_1,\dots,P_k}$ and $n-B$ copies of $0^{\ell}$. For any two sets of patterns $P_1,\dots,P_k$ from $\widehat{\Sigma}^{m/2}$ and $P'_1,\dots,P'_k$ from $\widehat{\Sigma}^{m/2}$, the data sets $\setstr(P_1,\dots, P_k)$ and  $\setstr(P'_1,\dots, P'_k)$ are $B$-neighboring (see Definition~\ref{def:k-neighboring}).

    Set $\thresh=B/2=\alpha$. 
    We call $\mathcal{E}(P_1,\dots, P_k)$ the event that the output $\mathcal{P}$ of the algorithm contains the strings $P_1c_1$, $P_2c_2$, \dots, $P_kc_k$, and it does not contain any other strings suffixed by $c_1,c_2,\dots,c_k$. For $\setstr(P_1,\dots,P_k)$, we have that $\counts_{\Delta}(P_ic_i,\setstr(P_1,\dots,P_k))=B\geq \tau+\alpha$, for all $i=1,\dots,k$, independent of $\Delta$. Further, any other string with suffix $c_i$ for any $i=1,\dots,k$ has a count of $0\leq \tau-\alpha$. Thus
    \begin{align*}
        \Pr[\alg(\setstr(P_1,\dots, P_k))\in \mathcal{E}(P_1,\dots, P_k)]\geq 2/3.
    \end{align*}
    Since we assume that $\alg$ is $\epsilon$-differentially private, this gives by Fact~\ref{fact:k-neighboring}, for any other set $P'_1,\dots, P'_k$ of patterns from $\widehat{\Sigma}^{m/2}$, 
        \begin{align*}
        \Pr[\alg(\setstr(P'_1,\dots, P'_k))\in \mathcal{E}(P_1,\dots, P_k)]\geq e^{-B\epsilon}2/3.
    \end{align*}
    Note that by construction, $\mathcal{E}(P_1,\dots, P_k)\cap \mathcal{E}(P'_1,\dots, P'_k)=\emptyset$ for $(P_1,\dots, P_k)\neq (P'_1,\dots, P'_k)$. Thus, we have for a fixed set $(P'_1,\dots, P'_k)$ of patterns,
    \begin{align*}
    1&\geq  \sum_{(P_1,\dots,P_k)\in (\widehat{\Sigma}^{m/2})^k}\Pr[\alg(\setstr(P'_1,\dots, P'_k))\in \mathcal{E}(P_1,\dots, P_k)]\\&\geq\sum_{(P_1,\dots,P_k)\in (\widehat{\Sigma}^{m/2})^k}e^{-B\epsilon}2/3=(|\Sigma|-2)^{mk/2}e^{-B\epsilon}2/3.
    \end{align*}
    Solving for $B$, this gives 
    \begin{align*}
        B\geq \frac{1}{\epsilon}\left(\frac{mk}{2}\ln (|\Sigma|-2)+\log (2/3)\right)=\Omega(\epsilon^{-1}\ell\log |\Sigma|).
    \end{align*}
    The theorem follows since $\alpha=B/2$.
    \end{proof}

\subsection{Lower Bound for Approximately Differentially Private Substring Count}\label{sec:lower-approx-sub}

\LBepsdelSubstr*

\begin{proof}
    Let $\setstr=(S_1,\dots,S_n)$ and $\setstr'=(S'_1,\dots,S_n')$ be neighboring data sets where $S_j=a^{\ell}$ and $S'_j= b^{\ell}$. Further, let $S_i=S_i'=b^{\ell}$ for all $i\neq j$. Let $P=a$. We have $\counts(P,\setstr)=\ell$ and $\counts(P,\setstr')=0$. Assume we output an estimate $\counts^*$ with an additive error $\alpha< \ell/2$ with probability at least $1-\beta$. Then we have 
    \begin{align*}
        1-\beta\;\leq \;\Pr(\counts^*(P,\setstr)\geq \ell/2)\;\leq \;e^{\epsilon}\cdot \Pr(\counts^*(P,\setstr')\geq \ell/2)+\delta \;\leq \;e^{\epsilon}\beta + \delta.
    \end{align*}
    This gives
    $$
    \epsilon\geq\ln\left(\frac{1-\beta-\delta}{\beta}\right).
    $$
\end{proof}

Note that $(\epsilon,\delta)$-differential privacy only provides useful privacy guarantees for $\delta=o(1/n)$~\cite{journals/fttcs/DworkR14}. Setting $\delta=o(1/n)$, Theorem~\ref{thm:LB_approx_substring_count} shows that we cannot have an error $O(n)$ for small constant $\epsilon$ and small failure probability. In particular, we get the following corollary by substitution in Equation~\ref{eq:approx_LB}.

\begin{corollary}
 Let $n,\ell\in \mathbb{N}$  and $\Sigma$ an alphabet of size $|\Sigma|\geq 2$.
No $(\epsilon,\delta)$-differentially private algorithm for \substringcount\ which takes as an input any database of $n$ documents from $\Sigma^{\ell}$ and is 
   $\alpha$-accurate with probability at least $1-\beta$ can satisfy any of the following parameter combinations:
    \begin{enumerate}[(i)]
        \item $\alpha=o(\ell)$, $\epsilon\leq 1$, and $\beta$ is an arbitrarily small constant. In particular, in this case, it must hold $\beta\geq \frac{1-\delta}{e+1}$.
        \item $\alpha=o(\ell)$, $\epsilon=O(1)$, and $\beta=O(1/n)$.
    \end{enumerate}
\end{corollary}


The proof of Theorem~\ref{thm:LB_approx_substring_count} can be adapted to prove the same lower bound for $\approxqgram$, even when $q=1$, as stated in Corollary~\ref{cor:LB_approx_qgram_mining}. The corollary is proved by setting a threshold $\tau=\ell$ and proceeding similar to the proof of Theorem~\ref{thm:LB_approx_substring_count}, observing that if $\alpha<\ell/2$, then $P$ should be considered frequent in $\setstr$ but not on $\setstr'$.

\begin{corollary}\label{cor:LB_approx_qgram_mining}
Let $n,\ell\in \mathbb{N}$  and $\Sigma$ an alphabet of size $|\Sigma|\geq 2$.
Any $(\epsilon,\delta)$-differentially private algorithm for $\approxqgram(.,\ell,\tau)$ which takes as an input any database of $n$ documents from $\Sigma^{\ell}$ and is 
$\alpha$-accurate with probability at least $1-\beta$, $\alpha > 0$, $0<\beta<1$, must satisfy either       $\alpha=\Omega(\ell)$
        or 
     $   \epsilon\geq \ln\left(\frac{1-\beta-\delta}{\beta}\right)$.       \end{corollary}

\subsection{Lower Bound for Approximately Differentially Private Document Count}\label{sec:lower-approx-doc}
To prove a lower bound on the error of any $(\epsilon,\delta)$-differentially private algorithm for $\documentcount$, we show a reduction from the 1-way marginals problem, which essentially asks to compute the arithmetic average over each of the columns of a $n\times d$ binary matrix. 

\newcommand{\marginals}[2]{\mathrm{Marginals}(#1,#2)}
\begin{definition}[1-way marginals]\label{def:marginals}
 Let $d,n\in \mathbb{N}$ and $\chi=\{0,1\}^d$. Given a 
database
 $Y=(Y_0,\dots,Y_{n-1})\in \chi^n$, the $\marginals{n}{d}$ problem asks to compute the vector $(q_0(Y),\dots,q_{d-1}(Y))$, where $q_j(Y)=\frac{1}{n}\sum_{i=0}^{n-1} Y_{i}[j]$, for all $j\in[0,d-1]$. 
\end{definition}

\begin{definition}[Accuracy]
 A randomized algorithm $A$ is $(\alpha,\beta)$-accurate for $\marginals{n}{d}$ if, for all 
datasets
 $Y\in\{\{0,1\}^d\}^n$, it outputs $a_0,\dots, a_{d-1}$ such that
$$\Pr[\max_{j\in[0,d-1]}|a_j-q_j(Y)|>\alpha]\leq \beta,$$ where 
the probability is taken over the random coin flips of the algorithm.
\end{definition}

The following lemma follows directly from~\cite[Lemma 3.3]{JainRSS23}, 
which in turn is based on \cite{BunUV18,HardtT10}.
\begin{lemma}[Lower bound for Marginals \cite{BunUV18,HardtT10,JainRSS23}]\label{lem:marginals}
    For all $\epsilon\in(0,1]$, $\delta\in[0,1)$, $\alpha\in(0,1)$, and $d,n\in\mathbb{N}$, every $(\epsilon,\delta)$-DP algorithm that is  $\left(\alpha, \frac{1}{3}\right)$-accurate for $\marginals{n}{d}$ satisfies:
    \begin{itemize}
        \item if $\delta>0$ and $\delta=o\left(\frac 1 n \right)$, then $\alpha=\Omega\left(\frac{\sqrt{d}}{n\epsilon\log d}\right)$.
        \item if $\delta=0$, then $\alpha=\Omega\left(\frac{d}{n\epsilon}\right)$.
    \end{itemize}
\end{lemma}
\LBepsdelDocument*

\begin{proof}
        Let $d,n\in \mathbb{N}$ and $d\geq 2$. We describe a family of reductions parameterized by $b=\min(s,d+1)\in [3,d+1]$. 
   We reduce any instance $Y=(Y_1,\dots, Y_n)$ of $\marginals{n}{d}$ to an instance of \documentcount\ over the alphabet 
   $\Sigma_b=[0, b-2]\cup \{\$\}$ consisting of $n$ documents of length $\ell=d \left(\lceil\log_{b-1} d\rceil+2\right)$.

     \emph{Position gadgets.} For every position $j\in [0,d-1]$, we define a unique code $\mathrm{code}_{b-1}(j)$ of length $\lceil\log_{b-1} d\rceil$ using only letters from $[0, b-2]$. For instance, for $\Sigma_3$, $\mathrm{code}_{2}(j)$ is the binary encoding of $j$; for $\Sigma_{d+1}$,  $\mathrm{code}_{d}(j)=j$. We then define, for every $Y_i$ and every $j$, the position gadget $\mathrm{code}_{b-1}(j)Y_i[j]\$ $.

     \emph{Input to \documentcount.} We construct a string $S_i$ for every $Y_i$. The string $S_i$ is obtained by concatenating all of the position gadgets of $Y_i$ interleaved with the letter $\$$ (note that $\$$ does not appear in any of the position gadgets). We thus have \[S_i=\mathrm{code}_{b-1}(0)Y_i[0]\$\mathrm{code}_{b-1}(1)Y_i[1]\$\cdots \mathrm{code}_{b-1}(d-1)Y_i[d-1]\$\]
     for all $i\in [0,n-1]$.

     \emph{Queries.} To compute $q_j(Y)$, we simply query the pattern $P=\mathrm{code}_{b-1}(j)1$ and divide the resulting count by $n$. This is correct because $q_j(Y)$ is given precisely by the number of indexes $i$ such that $Y_i[j]=1$, divided by $n$. 

Neighboring inputs for $\marginals{n}{d}$ give neighboring inputs to \documentcount\ by construction, as any string $S_i$ depends only on $Y_i$. Since any solution which is $(\alpha,\beta)$-accurate for \documentcount\ is $\left(\frac{\alpha}{n},\beta\right)$-accurate for $\marginals{n}{d}$, by Lemma~\ref{lem:marginals} we obtain the statement.
\end{proof}





\bibliographystyle{plain}
\bibliography{References}
\end{document}